\numberwithin{equation}{section}
\theoremstyle{plain}
\newtheorem{theorem}{Theorem}
\newtheorem{corollary}{Corollary}
\newtheorem{lemma}{Lemma}
\newtheorem{remark}{Remark}
\def\R{\mathbb{R}} 
\def\b{\beta}
\def\Ep{{\rm E}}
\def\P{{\rm P}}
\def\e{\varepsilon}
\def\supp{{\rm supp}}
\begin{document}
\begin{frontmatter}
\title{Pivotal Estimation via Self-Normalization for  High-Dimensional Linear Models with Errors in Variables}
\runtitle{Pivotal Estimation for Error in Variables}
\thankstext{T1}{First Version: 4/10/2016; Current Version: \today.}

\begin{aug}
\author{\fnms{Alexandre} \snm{Belloni}\ead[label=e1]{abn5@duke.edu}},
\author{\fnms{Abhishek} \snm{Kaul}\ead[label=e3]{akaul@math.wsu.edu}},
\and
\author{\fnms{Mathieu} \snm{Rosenbaum}\ead[label=e4]{mathieu.rosenbaum@polytechnique.edu}}


\address{The Fuqua School of Business\\
Duke University\\
100 Fuqua Drive\\
Durham, NC 27708\\
\printead{e1}\\
}


\address{Department of Mathematics and Statistics \\ Washington State University
\\
Pullman, WA 99164-3113\\
\printead{e3}\\
}

\address{\'Ecole Polytechnique\\
CMAP\\
91128\\
Palaiseau, cedex 05\\
France\\
\printead{e4}
}

\end{aug}

\begin{abstract}
We propose a new estimator for the high-dimensional linear regression model with measurement error in the design where the number of coefficients is potentially larger than the sample size. The main novelty of our procedure is that the choice of penalty parameters is pivotal. The estimator is based on applying a self-normalization to the constraints that characterize the estimator. Importantly, we show how to cast the computation of the estimator as the solution of a convex program with second order cone constraints. This allows the use of algorithms with theoretical guarantees and enables reliable implementation. Under sparsity assumptions, we derive $\ell_q$-rates of convergence and show that consistency can be achieved even if the number of regressors exceeds the sample size. We further provide a simple thresholded estimator that yields a provably sparse estimator with similar $\ell_2$ and $\ell_1$-rates of convergence. The thresholds are data-driven and component dependents. Finally, we also study the rates of convergence of estimators that refit the data based on a selected support with possible model selection mistakes. In addition to our finite sample theoretical results that allow for non-i.i.d. data, we also present simulations to compare the performance of the proposed estimators.
\end{abstract}


\begin{keyword}
\kwd{pivotal estimation}
\kwd{self-normalized sums}
\kwd{error in measurement}
\kwd{high-dimensional models}
\end{keyword}

\end{frontmatter}

\section{Introduction}

In this paper, we consider the high-dimensional linear model with observation error in the design
 \begin{equation}\label{model} y_i=x_i^T\beta_0+\xi_i, \ \ \ z_i = x_i + w_i, \ \ \ i=1,\ldots, n,
 \end{equation}
where we observe the response variable $y_i$ and the $p$-dimensional vector $z_i$, and do not observe the covariates $x_i$. The scalar errors $\xi_i$ are zero-mean independent random variables and $(y_i,z_i)$ are independent across $i$. (In particular, these conditions allow for non-i.i.d. designs which increases the applicability of the estimators including the case of missing at random.) The vector $\beta_0 \in \mathbb{R}^p$ is a vector of unknown parameters to be estimated where the dimension $p$ can be much larger than the sample size $n$. We assume that $\beta_0$ is $s$-sparse, that is it has at most $s$ non-zero components. The errors in measurements $w_i$ are assumed to be zero-mean and  independent of $\xi_i$. We also assume that the error in measurement covariance matrix $\Gamma = \frac{1}{n}\sum_{i=1}^n \Ep(w_i w_i^T) $ is diagonal and admits a data-driven estimator $\hat \Gamma$ which is available in several applications as discussed below.

{} Model (\ref{model}) is motivated by many applications where the covariates may have missing values or are observed with noise. For example, in the field of genomics, the gene expression measurements from microarray data are subject to measurement error. Another example is that of microbiome data where each observation vector has a significant proportion of missing components. Many other examples arise in empirical economics and finance, see \cite{GHaus} and \cite{RT1}, and consumer surveys in marketing where random subsets of questions are selected for each consumer to reduce the length of the survey. In these settings a data-driven estimator $\hat\Gamma$ can be constructed based on auxiliary data without measurement errors \cite{reilly1995mean,chen2005measurement} or even based on $(y_i,z_i)_{i=1}^n$ alone as in the case of missing at random\footnote{We refer the reader to \cite{RT2} for a simple transformation of the data that makes the missing at random to have the additive form of (\ref{model}).}  (as one can estimate the frequencies of missing components, see, e.g. \cite{RT2,BCKRT2016b}). It has been well-documented that ignoring this measurement error leads to biased parameter estimates even in the fixed $p$ setting, see for example, \cite{GHaus}, \cite{Fuller1987}, and \cite{CRSC2006}. In the high-dimensional framework considered here it is also crucial to account for such measurement errors. In addition to potentially biased estimation, measurement errors can also impact variables selection performance and influence the choice of various penalty parameters, see \cite{SFT1}.

{} High-dimensional linear models with $p\gg n$ and measurement errors have been studied recently by \cite{BRT2014}, \cite{BRT2014b}, \cite{CChen}, \cite{CChen1}, \cite{datta2015cocolasso},\cite{KK2015}, \cite{KKCL2016}, \cite{LW}, \cite{RT1}, \cite{RT2},    \cite{RZ2015} and \cite{SFT2}. The common thread\footnote{We note that all the cited papers assume independent observations except for \cite{RZ2015} that allows for the measurement error for each covariate to be a dependent vector across observations.} of these papers is to provide estimators along with the corresponding rates of convergence in different norms. Examples of proposed estimators\footnote{These estimators were proposed under various conditions on the design matrix, relations between $s$, $p$ and $n$, and knowledge of some parameters of the problem.} include the orthogonal matching pursuit as defined in \cite{CChen}, the non-convex $\ell_1$-penalized regression studied in \cite{LW} and the conic programming estimator considered in \cite{BRT2014}.
In particular, under suitable conditions and appropriate choice of penalty parameters, some of these estimators $\tilde\beta$ can attain $\ell_q$-rates of convergence of the form
\begin{equation}\label{eq:minimaxbound}\|\tilde\beta - \beta_0\|_q \leq C (1+\|\beta_0\|_2)s^{1/q}\sqrt{\frac{\log p}{n}}, \quad 1\le q\le \infty,
 \end{equation}
 where $\|\cdot\|_q$ denotes the $\ell_q$-norm, and $C>0$ is a constant independent of $s,p$ and $n$.
It is shown in  \cite{BRT2014} that these rates are minimax optimal. The rate in (\ref{eq:minimaxbound}) highlights the impact of the errors in measurements via the $\ell_2$-norm term $\|\beta_0\|_2$, which is not present in the case where covariates are observed without error, and the fact that consistency can be achieved in high-dimensional settings even if $p\gg n$. However, estimators suggested in the literature rely on suitable choice of penalty parameters based on some specific knowledge of the model (\ref{model}). To construct these estimators, the variance of the unobserved noise $\xi_i$ and the variance parameters of the measurement noise $w_i$ should typically be known. This is for example the case for the conic programming estimator of \cite{BRT2014}. For some methods, in addition, one needs to have access to the number $s$ of non-zero components or to the $\ell_2$-norm of $\beta_0$.
For example, assuming $\|\beta_0\|_1$ is known, \cite{LW} proposed an estimator defined as the solution of a non-convex
program which can be well approximated by an iterative relaxation procedure. Assuming that the sparsity of $\beta_0$ is known and the non-zero components of $\beta_0$ are
separated from zero in a suitable way, an orthogonal matching pursuit algorithm to estimate$\beta_0$ is introduced in \cite{CChen,CChen1}.

{} In this work, we propose a new estimator of the parameter $\beta_0$ in model (\ref{model}) that achieves the optimal rates of convergence in $\ell_q$-norm under suitable conditions. Moreover, a simple thresholded version of the estimator achieves optimal sparsity, while retaining optimal convergence rates. The main novelty of our procedure is the pivotality of the penalty parameters, which makes the estimator particularly appealing for the practical applications. That is, the penalty parameters do not depend on the number of non-zero components, on the $\ell_2$-norm of $\beta_0$, the variance parameter of the errors $\xi_i$, nor on the variance parameters of the errors in the measurements $w_i$. Furthermore, our estimator is a solution of a convex optimization problem. Finally, we also study the rates of convergence of the estimator that refits the data based on a selected support with possible model selection mistakes.

\subsection*{Notation} Let $J\subseteq \{1,\ldots,p\}$ be a set of integers. We denote by $|J|$ the cardinality of $J$.  For a vector $\theta=(\theta_1,\dots,\theta_p)^T$ in $\mathbb{R}^p$, we denote by $\theta_J$ the vector in $\mathbb{R}^p$ whose $j$-th component satisfies $(\theta_J)_j = \theta_j$ if $j\in J$, and $(\theta_J)_j = 0$ otherwise. We will call $\theta_J$ the restriction of $\theta$ to $J$. We denote the $\ell_q$-norm of a vector $v\in\mathbb{R}^p$ by $\|v\|_q$. The number of non-zero components of a vector $v\in\mathbb{R}^p$ is denoted by $\|v\|_0$. For a matrix $A$, we define $\|A\|_\infty = \max_{i,j}|a_{ij}|$ as the maximum element in absolute value.
A centered random variable $\xi$ will be called  zero-mean subgaussian  with variance parameter $\sigma^2$ if
$\Ep[\exp(t\xi)]\le \exp(t^2\sigma^2/2)$ for all $t\in \mathbb{R}$. A random vector $w\in \mathbb{R}^n$ will be called  zero-mean subgaussian  with variance parameter $\sigma^2$ if all the random variables of the form $v^Tw$ where $\|v\|_2=1$ are zero-mean subgaussian  with variance parameter $\sigma^2$. For a matrix $A$, we denote by $A_{i\cdot}$ and $A_{\cdot j}$ its $i$-th row and $j$-th column, respectively.
We denote by $C,c,C',c'$ positive constants that can be different on different occurencies.


\section{Estimator via self-normalization}\label{Sec:Pivotal}

Here we propose a pivotal estimator that does not require knowledge of typically unknown parameters. 
Our starting point is the moment condition that characterizes the vector of parameters $\beta_0$ in (\ref{model})
\begin{equation}\label{eq:identification} \frac{1}{n}\sum_{i=1}^n\Ep[ z_i(y_i-z_i^T\beta_0)+\Gamma\beta_0]=0,\end{equation}
where the term $\Gamma\beta_0$ corrects the bias that arises from using the noisy covariates $z$ instead of the unobserved $x$. The moment condition (\ref{eq:identification}) combined with sparsity assumptions on $\beta_0$ motivates the use of penalized methods to cope with high-dimensionality.

{} We now describe the proposed estimation procedure. Let  $(\hat\beta,\hat t,\hat u)$ be a solution of the constrained minimization  problem
\begin{eqnarray}\label{est:pivotal}
& \displaystyle \min_{\beta\in \mathbb{R}^p,t\in \mathbb{R}^p,u\in \mathbb{R}^p}   \|\beta\|_1 + \lambda_t \|t\|_\infty + \lambda_u \|u\|_\infty :  \\
\label{est:pivotal_1}
& \begin{array}{rl}
& \left| \frac{1}{n}\sum_{i=1}^n z_{ij} (y_i - z_i^T\beta) + \hat\Gamma_{jj}\beta_j \right| \leq \tau t_j + (1+\tau)b_\epsilon u_j,
 \\
& \left\{ \frac{1}{n}\sum_{i=1}^n \{ z_{ij} (y_i - z_i^T\beta)+ \hat\Gamma_{jj}\beta_j\}^2 \right\}^{1/2} \leq t_j, \ \  |\beta_j| \leq u_j,
\end{array},  \quad \forall j \leq p,
\end{eqnarray}
where  $z_i=(z_{ij})_{j=1}^p, \beta = (\beta_j)_{j=1}^p,  u = (u_j)_{j=1}^p,  t =(t_j)_{j=1}^p$,
and  $\lambda_t$, $\lambda_u$, $\tau$ are positive tuning parameters set according to Theorem 1 below. As it is standard in the literature, the statistics $(\hat\Gamma_{jj})_{j=1}^p$ are given estimators of the diagonal elements of matrix $\Gamma$ with $b_\epsilon$ being a bound on its precision satisfying, for any $n$,
\begin{equation}
\label{def:be}\P(\|\hat \Gamma - \Gamma\|_\infty > b_\epsilon)\leq \epsilon
\end{equation}
where $\epsilon\in (0,1)$ is a given number. Our methodology requires the knowledge of $b_\epsilon$. In this sense, it may appear not completely pivotal. However, in many situations, one can compute order of magnitude of $b_\epsilon$, typically $c\sqrt{\text{log}(2p/\epsilon)/n}$, and use this quantity in the definition of the estimator, see for example \cite{RT2} and the simulation experiments in Section \ref{sec:simu}.

{} Importantly, our analysis allow for non-i.i.d. settings since in some applications of interest $\Gamma = \frac{1}{n}\sum_{i=1}^n {\rm cov}(w_i)$ is estimable but each ${\rm cov}(w_i)$ might not be. We use $\hat \beta$ as an estimator of $\beta_0$ and we call it the {\it self-normalized conic estimator}.

{} The proposed method has a self-normalization feature, which is related to the square-root Lasso \cite{BCW-SqLASSO,BCW-SqLASSO2,sun2012scaled}, the STIV estimator  \cite{gautier2011high} and the self-tuned Dantzig estimator \cite{gautier2013pivotal}. The use of self normalization in high dimensional linear models have been first used in \cite{BellChenChernHans:nonGauss} via carefully constructed weights. A key point is the direct use of self-normalization in the moment condition (\ref{eq:identification}), cf. the second line of constraints in \eqref{est:pivotal_1}, instead of working with one scalar noise level as self-normalization quantity.
Similar idea was used in the context of instrumental variable regression,  cf. (9.22) in \cite{gautier2011high}, as well as in \cite{gautier2012} that deals with linear model with no measurement errors and studies a program close to (\ref{est:pivotal}) - \eqref{est:pivotal_1} with $\hat\Gamma_{jj}\equiv 0$, $u_j\equiv0$.

Importantly, (\ref{est:pivotal}) - \eqref{est:pivotal_1} is a
tractable convex optimization problem with linear and second order cone constraints, for which computationally efficient solvers exist.  In particular, we used the R package \textsf{Rmosek} for the computation of the estimator.

{} In some settings, it is of interest to work with estimators that are also sparse. However, the use of many second order constraints makes unlikely that the estimator $\hat\beta$ defined by solving (\ref{est:pivotal})-\eqref{est:pivotal_1} is sparse. In such cases, we propose a thresholded version of the self-normalized conic estimator $\hat\beta$. Consider the set of components $\hat T$ defined as
\begin{equation}\label{def:hatT} \widehat T:= \left\{ j \in \{1,\ldots,p\} : |\hat \beta_j|> \tau \frac{\Big\{\frac{1}{n}\sum_{i=1}^n\{z_{ij}(y_i-z_i^T\hat\beta)+\hat\Gamma_{jj}\hat\beta_j\}^2\Big\}^{1/2}}{\frac{1}{n}\sum_{i=1}^nz_{ij}^2} \right\} \end{equation}
where $\hat\beta_j$ are the components of $\hat\beta$.
We define the thresholded self-normalized conic estimator as $\hat\beta_{\hat T}$ (the restriction of $\hat\beta$ to $\hat T$).

\section{Main results}\label{sec:Main}

In this section we state our assumptions and main theoretical results.

\subsection{Regularity conditions}

In what follows, we consider a setting where $s$ and $p$ depend on $n$, and we state the results in the asymptotics as $n$ tends to infinity. Condition A below summarizes the assumptions on the data generating process. \\

{\bf Condition A.} {\it
(i)  The $n\times p$ matrix $X=[x_1;\ldots;x_n]^T$ is deterministic and the vector $\beta_0$ satisfies $\|\beta_0\|_0\leq s$.
 (ii) The elements of the random noise vector $\xi=(\xi_i)_{i=1}^n$ are independent zero-mean subgaussian random variables with variance parameter $\sigma^2_{\xi}\leq C$.
 (iii) The measurement error vectors $(w_i)_{i=1}^n$ are  independent zero-mean subgaussian random vectors with variance parameter $\sigma_w^2\leq C$, having zero covariances: $\Ep[w_{ij}w_{ik}]=0$ for all $1\leq j<k\leq p, \, i=1,\dots,n$. Moreover,  $(w_i)_{i=1}^n$ are independent of  $\xi=(\xi_i)_{i=1}^n$.}\\

{} Condition A allows for non-i.i.d. settings. Condition A(i) assumes a deterministic design\footnote{The results extends directly to a random design matrix $X$ under standard conditions on the random vectors $(x_i)_{i=1}^n$ in the high-dimensional literature, see \cite{BickelRitovTsybakov2009,van2009conditions,buhlmann2011statistics}.} and the performance of our estimator depends on the Gram matrix $\Psi = \frac{1}{n}X^TX$.  Like in problems without measurement errors, some characteristics of $\Psi$ play a key role in the analysis, see \cite{BickelRitovTsybakov2009}. In this paper, we consider $\ell_q$-sensitivity characteristics defined for $q\geq 1$ as
\begin{equation*}
 \kappa_q(s,u) = \min_{J:|J|\leq s} \big( \min_{\Delta \in C_J(u):\|\Delta\|_q=1} \|\Psi \Delta \|_\infty \big), \end{equation*}
where $C_J(u)=\{\Delta \in \mathbb{R}^p : \|\Delta_{J^c}\|_1 \leq u \|\Delta_J\|_1\}$, $u\geq 0$ and $J\subseteq \{1,\ldots,p\}$. These sensitivity characteristics generalize other well known characteristics such as the restricted eigenvalues of \cite{BickelRitovTsybakov2009}. One can find details on their properties in \cite{gautier2011high} where the notion of  sensitivity characteristic was introduced. Sensitivity characteristics have been used  previously in several papers including \cite{gautier2011high,gautier2013pivotal,RT1,RT2} and \cite{BRT2014}. For well-behaved designs that are prevalent in the literature,  we have  $\kappa_q(s,u)\geq cs^{-1/q}$ for $u\ge 1$ and $q\in [1,2]$, where $c>0$ is a constant, see \cite{gautier2011high}.  Conditions A(ii) and A(iii) are standard in the literature on high-dimensional linear regression with errors in measurements, see \cite{BRT2014,LW} among others. Moreover, our analysis relies on the quantity
$ m_2 := \max_{j=1,\ldots,p}\frac{1}{n}\sum_{i=1}^nx_{ij}^2$
which is typically uniformly bounded for many designs of interest.

For $i=1,\ldots,n$ and $j=1,\ldots,p$, we define  $$U_{ij}=z_{ij}(\xi_i-w_i^T\beta_0)+\Ep[w_{ij}^2]\beta_{0j}$$ and set $\mathcal{U}_{jk}= \{\mbox{$\frac{1}{n}\sum_{i=1}^n$}\Ep[|U_{ij}|^k]\}^{1/k} \ \ \mbox{and} \ \ \Delta_j= |\beta_{0j}|\max_{i=1,\ldots,n}|\Ep[w_{ij}^2]-\Gamma_{jj}|.$

We now state the assumptions on $\mathcal{U}_{jk}$, $\Delta_j$, $p$ and $n$. Let $\Phi(\cdot)$ denote the standard normal c.d.f.\\

{\bf Condition B.} {\it  (i) The estimator $\hat \Gamma$ is a diagonal matrix and $b_\epsilon$ satisfies (\ref{def:be}).
 For some positive sequence $\ell_n\geq 2$ tending to infinity, the following conditions hold:\\
 (ii) $\max_{1\leq j\leq p} \{\mathcal{U}_{j3} /\mathcal{U}_{j2}\}\Phi^{-1}(1-\alpha/(2pn))\leq n^{1/6}/\ell_n$, and \\(iii) ${  \max_{j:\Delta_j>0}}\{n^{-1/6}(\Delta_j/\mathcal{U}_{j2})\}\Phi^{-1}(1-\alpha/(2pn))\leq 1/\ell_n,$ and $n\geq 4\log(2pn/\alpha)$}\\

{} Condition B(i) allows for the use of data-driven estimator of $\Gamma$. Condition B(ii) is a mild moment condition and allows the application of self-normalized moderate deviation theory, see \cite{delapena,jing:etal}. In the case of i.i.d. sampling where the covariates $x_i$ are also drawn from a subgaussian distribution with bounded variance parameter, we have   $\Ep[|U_j|^3]^{1/3}\leq C(1+\|\beta_0\|_2)$ and $\Ep[|U_j|^2]^{1/2} \geq c$ if $\xi$ is independent of $z$. In fact for many designs we have $\max_{j\leq p} \{\Ep[|U_j|^3]^{1/3}/\Ep[|U_j|^2]^{1/2}\} \leq C$ so that Condition B(ii) is satisfied provided $\log^3 p = o(n)$. Condition B(iii) provides a mild sufficient condition to handle the non-i.i.d. case where the terms $\Ep[w_{ij}^2]$ change across $i$ as well. We also view Condition B(iii) a mild moment condition as it is  implied by Condition A, B(ii), $\|\beta_0\|_\infty\leq C$ and $\max_{j=1,\ldots,p}\mathcal{U}_{j3}/\mathcal{U}_{j2}\leq C$.

We also introduce a (computable) data-driven quantity
$$
H_n= \max_{j=1,\ldots,p}\Big\| \frac{1}{n}\sum_{i=1}^n (z_{ij}z_i^T-\hat\Gamma_{j\cdot})^T(z_{ij}z_i^T-\hat\Gamma_{j\cdot}) \Big\|_\infty^{1/2}.
$$
Let  $h_\epsilon$ denote its $(1-\epsilon)$-quantile, so that $ \P( H_n > h_\epsilon ) \leq \epsilon$. Here and in what follows, we assume that $\epsilon\in(0,1)$ is a fixed small number. We will not require the knowledge of $h_\epsilon$ to implement the method. We will only need that $h_\epsilon \tau s = o(1)$ as $n\to\infty$. This is implied by mild moment conditions on $z_i$'s and growth conditions on $p$ and $s$. For many designs, $h_\epsilon$ is uniformly bounded as $\epsilon\to 0$ and as the sample size grows (see Lemma \ref{lem:Hn} in the Appendix).  Finally, in order to state our theoretical results below, it will be convenient to define for any $\beta\in \R^p$ the vector $t(\beta) = (t_j(\beta))_{j=1}^p$ where
$$t_j(\beta) := \left\{\frac{1}{n}\sum_{i=1}^n \{z_{ij} (y_i - z_i^T\beta) + \hat\Gamma_{j\cdot}\beta\}^2\right\}^{1/2},~~j=1,\ldots,p.$$

\subsection{Properties of self-normalized conic estimator}

The following theorem establishes the rates of convergence of the estimator $\hat\beta$. 

\begin{theorem}\label{thm:pivotal} Let $0<\alpha<1$, $0<\varepsilon<1$, and $1\le q\le \infty$.
Set $\tau = n^{-1/2}\Phi^{-1}(1-\alpha/(2p))$, $\lambda_u = 1/4$ and $\lambda_t = 1/\{4H_n\}$. Assume that \begin{equation}\label{cond:kappaq}\kappa_q(s,3)s^{1/q} \geq 8 s \{ (1+\tau)b_\epsilon + \tau h_\epsilon + C'(1+m_2^{1/2})\sqrt{\log(2p^2/\varepsilon)/n}\}\end{equation} where $C'>0$ is a constant that depends only on $\sigma_w$ and $\sigma_\xi$. Then, under Conditions A and B, for $n$ sufficiently large, with probability at least $1-\alpha\{1+o(1)\}-2\epsilon-9\varepsilon$ we have
$$ \|\hat\beta-\beta_0\|_q \leq \frac{\tau\|t(\beta_0)\|_\infty}{c'\kappa_q(s,3)}  + \frac{(1+\|\beta_0\|_2)(1+m_2^{1/2})}{c'\kappa_q(s,3)}\sqrt{\frac{\log(2p/\varepsilon)}{n}}+ \frac{b_\epsilon  \|\beta_0\|_\infty}{c'\kappa_q(s,3)},$$
where  the constant $c'>0$ depends only on  $\sigma_w$ and $\sigma_\xi$.
\end{theorem}

{} Theorem \ref{thm:pivotal} provides a bound on the $\ell_q$-rate of convergence that depends on the critical quantities of the data generating process. Indeed, it is characterized via $\kappa_q$, $m_2$, $\|t(\beta_0)\|_\infty$,$\|\beta_0\|_2$, $\|\beta_0\|_\infty$ and $b_\epsilon$ which summarizes how good the estimate $\hat\Gamma$ is. The impact of using an estimate $\hat\Gamma$ of $\Gamma$ has a factor of $ \|\beta_0\|_\infty $ instead of $\|\beta_0\|_2$.

{} The next corollary specifies the result of Theorem \ref{thm:pivotal} for the configuration of parameters of the problem usually considered in the literature. (For example, $(x_i,w_i,\xi_i)_{i=1}^n$ are independent sub-Gaussian random vectors with bounded sub-Gaussian parameters and $\Ep[x_ix_i']$ has sparse eigenvalues bounded away from zero and from above, and $s\log^3(p/\alpha)=o(n)$.) It is described by the following conditions:  $b_\epsilon \leq C\sqrt{\log (2p/\epsilon) / n}$, $m_2 \leq C$, $\|t(\beta_0)\|_\infty \leq C(1+\|\beta_0\|_2)$ and $\kappa_q(s,3) \geq cs^{-1/q}$ for $q\in [1,2]$ with high probability. Define $\Omega_X := \{ X : \kappa_q(s,3) \geq cs^{-1/q}, \max_{1\leq j\leq p}\frac{1}{n}\sum_{i=1}^n x_{ij}^4 \leq C\}$. We have the following result.

\begin{corollary}\label{cor:pivotal}
Assume that the probability that the design matrix $X$ belongs to $\Omega_X$ tends to 1 as $n\to\infty$, and that $b_\epsilon \leq C\sqrt{\log(2p/\epsilon)/n}$. Set $\varepsilon=\epsilon$. Then, under the assumptions of Theorem \ref{thm:pivotal}, for $n$ sufficiently large, with probability at least $1-\alpha-11\varepsilon-o(1)$ we have
$$ \|\hat\beta-\beta_0\|_q  \leq C(1+\|\beta_0\|_2) s^{1/q} \sqrt{\frac{\log(c'p/(\alpha\varepsilon))}{n}}$$
where $C>0, c'\ge 1$ are constants.
\end{corollary}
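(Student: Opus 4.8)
The plan is to specialize the general bound of Theorem \ref{thm:pivotal} under the standard configuration described just before the corollary statement, and then absorb the various error terms into a single expression of the stated form. First I would set $\varepsilon=\epsilon$ as prescribed and work on the event $\{X\in\Omega_X\}\cap\{H_n\le h_\epsilon\}\cap\{\|\hat\Gamma-\Gamma\|_\infty\le b_\epsilon\}$ intersected with the high-probability event from Theorem \ref{thm:pivotal}; by the union bound and the definitions of $b_\epsilon$, $h_\epsilon$, and $\Omega_X$, the total failure probability is at most $\alpha\{1+o(1)\}+2\epsilon+9\varepsilon+o(1)=\alpha+11\varepsilon+o(1)$, matching the claimed probability. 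On $\Omega_X$ we have $\max_j\frac1n\sum_i x_{ij}^4\le C$, which implies $m_2=\max_j\frac1n\sum_i x_{ij}^2\le C$ (by Cauchy–Schwarz or Jensen), so $1+m_2^{1/2}\le C$, and we also have $\kappa_q(s,3)\ge c s^{-1/q}$ for $q\in[1,2]$.

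\textbf{Verifying the hypothesis of Theorem \ref{thm:pivotal}.} Before invoking the theorem's conclusion I must check its standing assumption
$\kappa_q(s,3)s^{1/q}\ge 8s\{(1+\tau)b_\epsilon+\tau h_\epsilon+C'(1+m_2^{1/2})\sqrt{\log(2p^2/\varepsilon)/n}\}$.
On $\Omega_X$ the left-hand side is $\ge c>0$ (a constant), since $\kappa_q(s,3)s^{1/q}\ge c s^{-1/q}s^{1/q}=c$. For the right-hand side: $\tau=n^{-1/2}\Phi^{-1}(1-\alpha/(2p))=O(\sqrt{\log p/n})\to 0$; $b_\epsilon\le C\sqrt{\log(2p/\epsilon)/n}\to 0$; $\tau h_\epsilon s=o(1)$ is exactly the growth condition $h_\epsilon\tau s=o(1)$ assumed in the text, so in particular $\tau h_\epsilon\to 0$; and $C'(1+m_2^{1/2})\sqrt{\log(2p^2/\varepsilon)/n}\to 0$. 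Hence the right-hand side is $8s\cdot o(1)$; one needs $s\cdot\{b_\epsilon+\tau h_\epsilon+\sqrt{\log p/n}\}=o(1)$, which follows from the stated growth conditions (in particular $h_\epsilon\tau s=o(1)$ together with $s\sqrt{\log p /n}=o(1)$, implicit in the standard configuration where consistency holds). So for $n$ large the hypothesis holds and Theorem \ref{thm:pivotal} applies on our event.

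\textbf{Bounding the three terms.} Plugging $\kappa_q(s,3)\ge c s^{-1/q}$ into the conclusion of Theorem \ref{thm:pivotal} gives
$$\|\hat\beta-\beta_0\|_q\le \frac{s^{1/q}}{c'c}\Big\{\tau\|t(\beta_0)\|_\infty+(1+\|\beta_0\|_2)(1+m_2^{1/2})\sqrt{\tfrac{\log(2p/\varepsilon)}{n}}+b_\epsilon\|\beta_0\|_\infty\Big\}.$$
For the first term, $\|t(\beta_0)\|_\infty\le C(1+\|\beta_0\|_2)$ by hypothesis and $\tau\le C\sqrt{\log(2p/\alpha)/n}$ (using $\Phi^{-1}(1-\alpha/(2p))\le C\sqrt{\log(2p/\alpha)}$, the standard Gaussian tail inverse bound), so this term is $\le C(1+\|\beta_0\|_2)\sqrt{\log(2p/\alpha)/n}$. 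For the second term, $1+m_2^{1/2}\le C$, so it is $\le C(1+\|\beta_0\|_2)\sqrt{\log(2p/\varepsilon)/n}$. For the third term, $\|\beta_0\|_\infty\le\|\beta_0\|_2\le 1+\|\beta_0\|_2$ and $b_\epsilon\le C\sqrt{\log(2p/\epsilon)/n}=C\sqrt{\log(2p/\varepsilon)/n}$, so it is $\le C(1+\|\beta_0\|_2)\sqrt{\log(2p/\varepsilon)/n}$. Adding the three and using $\log(2p/\alpha)+\log(2p/\varepsilon)\le C\log(c'p/(\alpha\varepsilon))$ for a suitable $c'\ge1$ (since all arguments are polynomial in $p$, $1/\alpha$, $1/\varepsilon$), we obtain $\|\hat\beta-\beta_0\|_q\le C(1+\|\beta_0\|_2)s^{1/q}\sqrt{\log(c'p/(\alpha\varepsilon))/n}$, as claimed.

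\textbf{Main obstacle.} The analytically substantive parts are all deferred to Theorem \ref{thm:pivotal}; within the corollary the only real care needed is bookkeeping — confirming that every "high-probability" qualifier in the definitions of $\Omega_X$, $b_\epsilon$, $h_\epsilon$ combines with the $1-\alpha\{1+o(1)\}-2\epsilon-9\varepsilon$ of the theorem to give the stated $1-\alpha-11\varepsilon-o(1)$ without double-counting, and checking that the standing assumption of the theorem is actually implied by the growth conditions of the standard configuration (especially that $h_\epsilon\tau s=o(1)$ and $s\sqrt{\log p/n}=o(1)$ suffice to dominate the $8s\{\cdots\}$ term). The consolidation of the logarithmic factors into $\log(c'p/(\alpha\varepsilon))$ is routine.
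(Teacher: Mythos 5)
Your bookkeeping of the probability budget and your consolidation of the logarithmic factors match the paper's argument, and your extra verification of the standing hypothesis of Theorem \ref{thm:pivotal} is harmless (the corollary inherits that hypothesis anyway, being stated ``under the assumptions of Theorem \ref{thm:pivotal}''). The genuine gap is your treatment of the first term: you dispose of it with ``$\|t(\beta_0)\|_\infty\le C(1+\|\beta_0\|_2)$ by hypothesis.'' That inequality is \emph{not} an assumption of the corollary. The formal hypotheses are only that $\mathbb{P}(X\in\Omega_X)\to 1$ and $b_\epsilon\le C\sqrt{\log(2p/\epsilon)/n}$; the sentence preceding the corollary merely describes informally what the ``standard configuration'' looks like, and the event $\Omega_X$ controls $\kappa_q(s,3)$ and $\max_j\frac1n\sum_i x_{ij}^4$ but says nothing about $t(\beta_0)$, which is a random quantity involving $\xi$, $w$ and $\hat\Gamma$.

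The bulk of the paper's proof is devoted precisely to establishing $\|t(\beta_0)\|_\infty\le C(1+\|\beta_0\|_2)+(C+b_\varepsilon)\|\beta_0\|_\infty$ with probability $1-\varepsilon-o(1)$. This requires: (i) expanding $t_j(\beta_0)$ by the triangle inequality and replacing $\hat\Gamma_{jj}$ by $\Gamma_{jj}$ at a cost of $b_\varepsilon\|\beta_0\|_\infty$; (ii) applying the Cauchy--Schwarz inequality to split $\frac1n\sum_i\{z_{ij}(\xi_i-w_i^T\beta_0)\}^2$ into fourth moments; (iii) bounding $\max_j\frac1n\sum_i x_{ij}^4\le C$ on $\Omega_X$ and $\max_j\frac1n\sum_i w_{ij}^4\le C$ with probability $1-o(1)$ via the subgaussian maximal inequalities of Lemmas \ref{lem:subgaussian} and \ref{lem:m2bound} together with the growth condition $\log^3(2p)=o(n)$ implied by Condition B(ii); and (iv) controlling $\frac1n\sum_i(\xi_i-w_i^T\beta_0)^4\le C(1+\|\beta_0\|_2)^4$ by a variance computation and Markov's inequality, using that each $\xi_i-w_i^T\beta_0$ is subgaussian with parameter $O(1+\|\beta_0\|_2)$. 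Without this step your argument proves the claimed rate only under an additional unstated assumption, so you need to supply this derivation (or an equivalent one) to close the proof; the remaining steps of your proposal are then correct and coincide with the paper's.
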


{} Note that Corollary \ref{cor:pivotal} exhibits the minimax rate of convergence discussed in (\ref{eq:minimaxbound}). A major point is that the estimator $\hat\beta$ achieves the minimax rate without needing to know $\|\beta_0\|_2$, $\sigma_\xi$, $\sigma_w$ or $s$ (or invoking cross-validation) as required for the procedures previously suggested  in the literature. Note that cross-validation in the problem that we consider here remains unjustified theoretically.

We note that condition (\ref{cond:kappaq}) essentially requires $s^2\log p = o(n)$. Although this is a stronger condition for consistency than $s\log p = o(n)$ in traditional linear models, such regime is of substantial interest as the condition $s^2\log p = o(n)$ appears in some works where the procedure is not pivotal and is precisely the requirement used in recent papers to construct confidence intervals post model selection via de-biasing or orthogonal moment conditions, see for example discussions and simulations in \cite{RZ2015}. Thus the proposed estimator is well poised for such applications. In cases which condition (\ref{cond:kappaq}) is not satisfied, the remark below clarifies that we can achieve a rate of convergence that does not match the minimax rate of convergence discussed in (\ref{eq:minimaxbound}).

\begin{remark} If the condition (\ref{cond:kappaq}) on $\kappa_q(s,3)$ required in Theorem \ref{thm:pivotal} (and Corollary \ref{cor:pivotal}) is not satisfied, following the same argument as in \cite{BRT2014}, we can derive a rate of convergence that depends on $\|\beta_0\|_1$ instead of $\|\beta_0\|_2$. Specifically the result in Corollary \ref{cor:pivotal} would be
$$ \|\hat\beta-\beta_0\|_q  \leq \frac{C(1+\|\beta_0\|_1)}{\kappa_q(s,3)}\Big(\sqrt{\frac{\log(c'p)}{n}}+b_\epsilon\Big).$$
\end{remark}

\begin{remark}[Approximately Sparse Models]
The results obtained in the preceding Theorem also extend to approximate sparse models where $\beta_{0T^c}$ is non-zero but small. We refer to the appendix for the modification of the proof to cover approximately sparse models.
\end{remark}

%

{} Next, we consider the data-driven thresholded estimator $\hat\beta_{\widehat T}$ based on $\widehat T$ defined in (\ref{def:hatT}),
and we show that it achieves the sparsity $O(s)$, while preserving the optimal $\ell_1$ and $\ell_2$ rates of convergence.

\begin{theorem}\label{thm:trim}
Let $q\in \{1,2\}$. Suppose that  $s^{1/q}\sqrt{\log(2pn/\alpha)/n} = o(1)$.
 Further\-more, assume that there exist constants $0<c<C<\infty$ such that, for any $1\leq j\leq p$,
$$c(1+\|\beta_0\|_2)^2 \leq \frac{1}{n}\sum_{i=1}^n\Ep[\{z_{ij}(\xi_i-w_i^T\beta_0)+\Gamma_{jj}\beta_{0j}\}^2] \leq C(1+\|\beta_0\|_2)^2.$$ Then, under the assumptions of Corollary \ref{cor:pivotal}, for $n$ sufficiently large, with probability at least $1-\alpha-11\varepsilon-o(1)$ we have
$$\|\hat\beta_{\widehat T}\|_0 \leq Cs \ \ \mbox{and} \ \  \|\hat\beta_{\widehat T}-\beta_0\|_q  \leq C (1+\|\beta_0\|_2)s^{1/q}\sqrt{\frac{\log(c'p/\alpha)}{n}}$$
where $C>0, c'\ge 1$ are constants.
\end{theorem}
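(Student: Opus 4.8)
The plan is to deduce both assertions from Corollary~\ref{cor:pivotal} by controlling, uniformly in $j\le p$, the random thresholding level
$$\rho_j \;:=\; \tau\,\frac{t_j(\hat\beta)}{\frac{1}{n}\sum_{i=1}^n z_{ij}^2},$$
which is precisely the level used in the definition~\eqref{def:hatT} of $\widehat T$: indeed $j\in\widehat T$ if and only if $|\hat\beta_j|>\rho_j$, since the numerator appearing there equals $t_j(\hat\beta)$. Recall $\tau=n^{-1/2}\Phi^{-1}(1-\alpha/(2p))\asymp\sqrt{\log(2p/\alpha)/n}$, and that on the event of Corollary~\ref{cor:pivotal} one has, for $q\in\{1,2\}$, $\|\hat\beta-\beta_0\|_q\le C(1+\|\beta_0\|_2)s^{1/q}\tau$; in particular $\|\hat\beta-\beta_0\|_1\le C(1+\|\beta_0\|_2)s\tau$. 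All of the following is carried out on that event intersected with $\{H_n\le h_\epsilon\}$, $\{\|\hat\Gamma-\Gamma\|_\infty\le b_\epsilon\}$ and two further events of probability $1-o(1)$ (uniform-in-$j$ concentration of $\frac{1}{n}\sum_{i=1}^n U_{ij}^2$ around $\mathcal{U}_{j2}^2$, and of $\frac{1}{n}\sum_{i=1}^n z_{ij}^2$ around its mean), the union bounds over $j\le p$ being cheap because $\log p=o(n)$; the probability budget remains $1-\alpha-11\varepsilon-o(1)$.

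\emph{Step 1: two-sided control of $\rho_j$.} The claim is that $c(1+\|\beta_0\|_2)\tau\le\rho_j\le C(1+\|\beta_0\|_2)\tau$ for every $j\le p$ (call this $(\star)$). For the numerator, the algebraic identity $z_{ij}(y_i-z_i^T\hat\beta)+\hat\Gamma_{jj}\hat\beta_j=\{z_{ij}(y_i-z_i^T\beta_0)+\hat\Gamma_{jj}\beta_{0j}\}-(z_{ij}z_i^T-\hat\Gamma_{j\cdot})(\hat\beta-\beta_0)$, together with the triangle inequality for the empirical $L_2$-norm and the definition of $H_n$, gives
$$\big|\,t_j(\hat\beta)-t_j(\beta_0)\,\big|\;\le\;H_n\|\hat\beta-\beta_0\|_1\;\le\;h_\epsilon\cdot C(1+\|\beta_0\|_2)s\tau\;=\;C(1+\|\beta_0\|_2)(h_\epsilon\tau s)\;=\;o(1+\|\beta_0\|_2),$$
using $h_\epsilon\tau s=o(1)$. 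It therefore suffices to show $t_j(\beta_0)\asymp(1+\|\beta_0\|_2)$ uniformly in $j$. Write $z_{ij}(y_i-z_i^T\beta_0)+\hat\Gamma_{jj}\beta_{0j}=U_{ij}+(\hat\Gamma_{jj}-\Ep[w_{ij}^2])\beta_{0j}$; on $\{\|\hat\Gamma-\Gamma\|_\infty\le b_\epsilon\}$ the additive correction is at most $b_\epsilon\|\beta_0\|_\infty+\Delta_j$ in absolute value, which is $o(1+\|\beta_0\|_2)$ under Condition~B. What remains is the empirical second moment $\frac{1}{n}\sum_{i=1}^n U_{ij}^2$: the $U_{ij}$ are independent, mean zero ($\Ep[U_{ij}]=0$ by the zero-covariance structure of $w_i$) and sub-exponential with parameters controlled by $\sigma_\xi,\sigma_w$ and $m_2$, so a standard deviation inequality plus a union bound over $j\le p$ yields $\frac{1}{n}\sum_{i=1}^n U_{ij}^2\asymp\mathcal{U}_{j2}^2$ uniformly in $j$; and $\mathcal{U}_{j2}^2=\frac{1}{n}\sum_{i=1}^n\mathrm{Var}\big(z_{ij}(\xi_i-w_i^T\beta_0)\big)\asymp(1+\|\beta_0\|_2)^2$ is the content of the two-sided moment hypothesis of the theorem, up to the same negligible $O(b_\epsilon+\max_j\Delta_j)$ discrepancy between $\Gamma_{jj}\beta_{0j}$ and $\Ep[w_{ij}^2]\beta_{0j}$. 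For the denominator, $\Ep\big[\frac{1}{n}\sum_{i=1}^n z_{ij}^2\big]=\frac{1}{n}\sum_{i=1}^n x_{ij}^2+\Gamma_{jj}$ lies in $[c,C]$ uniformly in $j$ — the upper bound from $m_2\le C$, $\sigma_w^2\le C$ and $\frac{1}{n}\sum_{i=1}^n x_{ij}^4\le C$ on $\Omega_X$, the lower bound from the lower moment hypothesis (a degenerate column $z_{\cdot j}$ would force $\frac{1}{n}\sum_{i=1}^n\Ep[\{z_{ij}(\xi_i-w_i^T\beta_0)+\Gamma_{jj}\beta_{0j}\}^2]$ to vanish) — and it concentrates uniformly in $j$ by a subexponential tail bound. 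Combining these gives $(\star)$.

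\emph{Step 2: sparsity and the rate.} Write $S=\mathrm{supp}(\beta_0)$, $|S|\le s$. For the sparsity bound, split $\widehat T=(\widehat T\cap S)\cup(\widehat T\cap S^c)$; on $\widehat T\cap S^c$ one has $\beta_{0j}=0$, hence $|\hat\beta_j-\beta_{0j}|=|\hat\beta_j|>\rho_j\ge c(1+\|\beta_0\|_2)\tau$ by $(\star)$, so
$$|\widehat T\cap S^c|\cdot c^2(1+\|\beta_0\|_2)^2\tau^2\;\le\;\sum_{j\in\widehat T\cap S^c}|\hat\beta_j-\beta_{0j}|^2\;\le\;\|\hat\beta-\beta_0\|_2^2\;\le\;C(1+\|\beta_0\|_2)^2 s\,\tau^2,$$
which gives $|\widehat T\cap S^c|\le Cs$ and hence $\|\hat\beta_{\widehat T}\|_0=|\widehat T|\le Cs$. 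For the rate, the triangle inequality yields $\|\hat\beta_{\widehat T}-\beta_0\|_q\le\|\hat\beta-\hat\beta_{\widehat T}\|_q+\|\hat\beta-\beta_0\|_q=\|\hat\beta_{\widehat T^c}\|_q+\|\hat\beta-\beta_0\|_q$; splitting $\widehat T^c$ the same way, the coordinates in $\widehat T^c\cap S^c$ contribute $\sum_{j\in\widehat T^c\cap S^c}|\hat\beta_j|^q=\sum_{j\in\widehat T^c\cap S^c}|\hat\beta_j-\beta_{0j}|^q\le\|\hat\beta-\beta_0\|_q^q$, while the at most $s$ coordinates in $\widehat T^c\cap S$ contribute, via $|\hat\beta_j|\le\rho_j\le C(1+\|\beta_0\|_2)\tau$ from $(\star)$, at most $s\,C^q(1+\|\beta_0\|_2)^q\tau^q$. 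Hence $\|\hat\beta_{\widehat T^c}\|_q\le\|\hat\beta-\beta_0\|_q+C\,s^{1/q}(1+\|\beta_0\|_2)\tau$, and plugging in the Corollary~\ref{cor:pivotal} bound on $\|\hat\beta-\beta_0\|_q$ together with $\tau\asymp\sqrt{\log(2p/\alpha)/n}$ gives $\|\hat\beta_{\widehat T}-\beta_0\|_q\le C(1+\|\beta_0\|_2)s^{1/q}\sqrt{\log(c'p/\alpha)/n}$, the logarithmic factors matching because $\varepsilon$ and $\epsilon$ are fixed.

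\emph{Where the work is.} The two counting/triangle-inequality arguments of Step~2 are mechanical; the crux is Step~1, i.e.\ the \emph{two-sided, uniform-in-$j$} control of $t_j(\hat\beta)$ and of $\frac{1}{n}\sum_{i=1}^n z_{ij}^2$. The upper bounds and the concentration statements are routine, but the lower bounds (that neither quantity degenerates for any of the $p$ coordinates) must be extracted from the two-sided moment hypothesis of the theorem and from membership in $\Omega_X$, and one must simultaneously show — using Condition~B — that replacing $\hat\Gamma_{jj}$ by $\Gamma_{jj}$ by $\Ep[w_{ij}^2]$, as well as the perturbation $H_n\|\hat\beta-\beta_0\|_1$, are all $o(1+\|\beta_0\|_2)$ relative to $t_j(\beta_0)\asymp 1+\|\beta_0\|_2$; keeping the $o(1)$- and $\varepsilon$-terms from all the auxiliary events within the stated probability budget is the remaining bookkeeping.
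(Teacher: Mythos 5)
Your argument is essentially the paper's: a two-sided, uniform-in-$j$ control of the threshold level $\bar v_j=\tau t_j(\hat\beta)/\{\frac1n\sum_i z_{ij}^2\}\asymp \tau(1+\|\beta_0\|_2)$ obtained from the perturbation bound $|t_j(\hat\beta)-t_j(\beta_0)|\le H_n\|\hat\beta-\beta_0\|_1=o(1+\|\beta_0\|_2)$, the $b_\epsilon$-replacement of $\hat\Gamma_{jj}$, and concentration of the empirical second moment around the bracketed population quantity, followed by a thresholding/counting step fed by Corollary~\ref{cor:pivotal}. The only structural difference is that the paper packages the counting step into a generic thresholding lemma (Lemma~\ref{Lemma:Bound2nNormSecond}, with the sparsity count done through $\|\hat\beta-\beta_0\|_1/\nu_{\min}$), whereas you do it inline with an $\ell_2$-based count; both are fine.

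Two soft spots in your Step~1 are worth flagging. First, you route the second-moment concentration through $U_{ij}$, which carries the per-observation variances $\Ep[w_{ij}^2]=\Gamma_{i,jj}$, and you then must absorb the discrepancy $\Delta_j=|\beta_{0j}|\max_i|\Gamma_{i,jj}-\Gamma_{jj}|$; Condition~B(iii) only gives $\Delta_j\,\Phi^{-1}(1-\alpha/(2pn))\le n^{1/6}\mathcal{U}_{j2}/\ell_n$, which does not by itself make $\Delta_j=o(1+\|\beta_0\|_2)$. The paper avoids this entirely by working with $\bar U_{ij}=z_{ij}(\xi_i-w_i^T\beta_0)+\Gamma_{jj}\beta_{0j}$, whose averaged population second moment is exactly the quantity bracketed in the theorem's hypothesis; you should do the same (the $\hat\Gamma_{jj}\to\Gamma_{jj}$ replacement then costs only $b_\epsilon\|\beta_0\|_2$). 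Second, $U_{ij}^2$ involves products of two subgaussian factors and is not subexponential, so "a standard deviation inequality" is too glib; the paper's uniform concentration of $\frac1n\sum_i\bar U_{ij}^2$ goes through the truncation-based maximal inequality of Lemma~\ref{lem:m2bound} with the moment bound \eqref{bound:M2}, and something of that kind is needed here too. Neither issue changes the architecture of the proof, and both are repaired by following the paper's choice of centering and concentration tool.
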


{} Theorem \ref{thm:trim} shows that the estimator $\hat\beta_{\widehat T}$ inherits the $\ell_1$ and $\ell_2$ rates of convergence of $\hat\beta$ and is also sparse. Estimators with this additional sparsity property have been useful in many settings, see for example \cite{BCKRT2016b}.

\subsection{Refitted estimators}

Next we propose and analyze refitted estimators. The goal is still to derive estimators with good $\ell_q$-rates of convergence but to reduce the bias. Indeed the $\ell_1$-regularized estimator introduced earlier yields model selection and shrinkage. We will discuss two estimators based on a (possibly data-driven) selected support. We note that in order to still achieve a convex formulation of the problem we will solve a Dantzig selector version of the problem. Letting $\hat T$ denote a (potentially data driven) support.

The first estimator we propose parallels in spirit the so-called relaxed Lasso for linear models \cite{meinshausen2007relaxed}. However we further leverage the Dantzig selection and pivotal properties we derived so far. Let  $(\tilde\beta,\tilde t,\tilde u)$ be a solution of the constrained minimization  problem
\begin{eqnarray}\label{est:pivotal-post-relaxed}
& \displaystyle \min_{\beta\in \mathbb{R}^p,t\in \mathbb{R}^p,u\in \mathbb{R}^p}   \|\beta_{\widehat T^c}\|_1 + \lambda_t \|t\|_\infty + \lambda_u \|u\|_\infty :  \\
& \begin{array}{rl}
& \left| \frac{1}{n}\sum_{i=1}^n z_{ij} (y_i - z_i^T\beta) + \hat\Gamma_{jj}\beta_j \right| \leq \tau t_j + (1+\tau)b_\epsilon u_j,
 \\
& \left\{ \frac{1}{n}\sum_{i=1}^n \{ z_{ij} (y_i - z_i^T\beta)+ \hat\Gamma_{jj}\beta_j\}^2 \right\}^{1/2} \leq t_j, \ \  |\beta_j| \leq u_j,
\end{array},  \quad \forall j \leq p,
\end{eqnarray}

The difference relative to (\ref{est:pivotal}) pertains to the $\ell_1$-regularization term that does not include components in $\widehat T$ excluding the regularization bias from these components. In principle might still be desirable to use some components in $\widehat T^c$ but with regularization bias. The following result summarizes the performance.

\begin{theorem}\label{thm:pivotal-post-relaxed} Let $\hat k = |\widehat T|$, $0<\alpha<1$, $0<\varepsilon<1$, and $1\le q\le \infty$.
Set $\tau = n^{-1/2}\Phi^{-1}(1-\alpha/(2p))$, $\lambda_u = 1/4$ and $\lambda_t = 1/\{4H_n\}$. Assume that \begin{equation}\label{cond:kappaq}\kappa_q(\hat k + s,3)s^{1/q} \geq 8 s \{ (1+\tau)b_\epsilon + \tau h_\epsilon + C'(1+m_2^{1/2})\sqrt{\log(2p^2/\varepsilon)/n}\}\end{equation} where $C'>0$ is a constant that depends only on $\sigma_w$ and $\sigma_\xi$. Then, under Conditions A and B, for $n$ sufficiently large, with probability at least $1-\alpha\{1+o(1)\}-2\epsilon-9\varepsilon$ we have
$$ \|\tilde\beta-\beta_0\|_q \leq \frac{\tau\|t(\beta_0)\|_\infty}{c'\kappa_q(\hat k+s,3)}  + \frac{(1+\|\beta_0\|_2)(1+m_2^{1/2})}{c'\kappa_q(\hat k+s,3)}\sqrt{\frac{\log(2p/\varepsilon)}{n}}+ \frac{b_\epsilon  \|\beta_0\|_\infty}{c'\kappa_q(\hat k+s,3)},$$
where  the constant $c'>0$ depends only on  $\sigma_w$ and $\sigma_\xi$.
\end{theorem}

Theorem \ref{thm:pivotal-post-relaxed} shows that the estimator (\ref{est:pivotal-post-relaxed}) achieves good rates of convergence under similar conditions as the self-normalized conic estimator (\ref{est:pivotal}) provided the size of the support $\widehat T$ is not too large so that the sensitivity quantities $\kappa_q(\hat k + s, 3)$ are still well behaved.

Next we consider the estimator that refits restricted to $\widehat T$ defined as  
\begin{equation}\label{def:est-post}
\begin{array}{rl}
(\tilde \beta,\tilde t) \in {\displaystyle \arg\min_{\beta, t \in \mathbb{R}^{p}}}  & \|t\|_\infty\\
& \left| \frac{1}{n}\sum_{i=1}^n z_{ij} (y_i - z_i^T\beta) + \hat\Gamma_{jj}\beta_j \right| \leq t_j \ \ j\in \widehat T\\
& \beta_j = t_j = 0 \ \ j \not\in \widehat T\\
\end{array}
\end{equation}
 A key difference in this case is the likely misspecification in (\ref{def:est-post}) since $\widehat T$ might miss some of the components in the support of $\beta_0$. In turn the key argument (that $\beta_0$ is feasible in the optimization problem for (\ref{est:pivotal}) and (\ref{est:pivotal-post-relaxed}) fails.
The estimator proposed in (\ref{def:est-post}) aims to reduce the regularization bias within the selected components. When the matrix $\frac{1}{n}\sum_{i=1}^nz_{i\widehat T}z_{i\widehat T}^T - \widehat \Gamma_{\widehat T,\widehat T}$ is full rank, the optimal solution has $\tilde t=0$. However, aligned with the literature, our analysis derives results based on the properties of the maximum and minimum sparse eigenvalues of $\frac{1}{n}\sum_{i=1}^nx_{i\widehat T}x_{i\widehat T}^T$. In what follows we use
$$\phi_{\min}(k) = \min_{J:|J|\leq k} \min_{\Delta \in C_J(0),\|\Delta\|_2=1} \Delta^T\Psi\Delta, \ \ \ \phi_{\max}(k) = \min_{J:|J|\leq k} \max_{\Delta \in C_J(0),\|\Delta\|_2=1} \Delta^T\Psi\Delta.$$

\begin{theorem}\label{thm:post-est}
Let $\hat k = |\widehat T|$. Under Conditions A and B, suppose that
$$ b_\epsilon+ 2\{\sigma_w m_2^{1/2}+C_{\xi w}  \}\sqrt{\frac{2\log(2p^2/\varepsilon)}{n}}  \leq \frac{\phi_{\min}(\hat k+s)}{2\{\hat k+s\}}.$$ Then we have with probability at least $1-2\epsilon-9\varepsilon$
$$\begin{array}{rl}\|\tilde \beta - \beta_{0}\|_2 & \leq \frac{\{\hat k+s\}^{1/2}}{c'\phi_{\min} (\hat k+s)}\left\{ (1+\|\beta_0\|_2)(1+ m_2^{1/2})\sqrt{\frac{\log(2p/\varepsilon)}{n}}  + b_\epsilon\|\beta_0\|_\infty \right\} \\
& + \|\beta_{0\widehat T^c}\|\frac{\phi^{1/2}_{\max}(s)}{\phi^{1/2}_{\min}(\hat k+s)}\end{array}$$
where the constant $c'>0$ depends only on $\sigma_w$ and $\sigma_\xi$.
\end{theorem}

\section{Numerical experiments}\label{sec:simu}

In this section, we investigate the numerical performance of the self-normalized conic estimator and its thresholded and refitted versions. Three separate designs are illustrated in Simulation A,  Simulation B and Simulation C to follow. The first design of Simulation A considers the case where the bias correction matrix is known. This setup considers the additive EIV design of model (\ref{model}). Simulation B considers the case of an unknown bias correction matrix, which is estimated from the data. This is achieved via the model where covariates are missing at random, for this purpose we follow the framework of \cite{RT2}. We implement all estimators proposed in this paper, namely the self-normalized estimator (SN-conic), its thresholded version (SN-conic thresholded) and its refitted versions described in (\ref{est:pivotal-post-relaxed}) and (\ref{def:est-post}), in the following these refitted estimators are referred to as SN-conic refitted V1 and SN-conic refitted V2 respectively. The performance of the proposed estimators are compared with the conic estimator (Conic) of \cite{BRT2014} and the bias corrected least squares (bias cor. L.S.) estimator of \cite{LW}. We provide additional benchmarks for performance, namely the biased and the no measurement error Lasso. The biased version is obtained with observed design variables $z_i$'s and the no measurement error versions are computed with the unobserved design variables $x_i$'s. The proposed SN-conic has its penalty parameters set to $\tau=n^{-1/2}\Phi^{-1}(1-\alpha/2p),$ $\alpha=0.05$ and $\lambda_t=1,$ $\lambda_u=0.25.$  

All simulation designs consider each combination of the sample size $n\in\{300,400\}$ and the model dimension $p\in\{10,100,400,750\}.$ The unobserved variables $x_i$'s and the model errors $x_i$ are generated as independent and Gaussian r.v.'s. More precisely, we set $\xi_i\sim N(0,\sigma_{\xi}^2),$ with $\sigma_{\xi}=1$ and $x_i\sim N(0,\Sigma)$ where $I_{p\times p}$ is an identity matrix and $\Sigma$ is a $p\times p$ matrix with elements $\Sigma_{ij}=\rho^{|i-j|},$ and $\rho=0.5.$ We consider two types of coefficients $\beta_0$: (i) $\b_0=(1,1,1,1,1,1,0,\dots,0)^T_{p\times 1}$ where the first six coefficients are set to one, (ii) $\b_0=(1,\frac{1}{2},\frac{1}{3},\frac{1}{4},\frac{1}{5},\frac{1}{10},0,\dots,0)^T_{p \times 1}.$ The first vector of coefficients illustrates the case where parameters are well separated from zero, and the second case is without such a separation. The latter typically leads to model selection mistakes with high probability. The remaining specifications of each simulation design are provided below.

{\bf Simulation A:} here we consider the data generating process (\ref{model}). We set $w_i\sim N(0,\sigma^2_{w}I_{p\times p}),$ with $\sigma_w=1.$ In this design we assume $\sigma_w$ to be known in all calculations so we set $\hat\Gamma=\sigma_w^2I_{p\times p}$ so that $b_{\epsilon}=0.$ For this simulation, the conic estimator is tuned assuming that $\sigma_{\xi}=1$ is known and setting $\mu=\tau=\sqrt{\log (p/\alpha)/n}.$ Similarly to the conic estimator, the bias cor. L.S. and the Lasso estimators are also tuned assuming that $\sigma_{\xi}=1$ and we set their regularization parameters to $ 2c n^{-1/2} \Phi^{-1}(1 -\alpha/(2p))$ where $c=1.1, \alpha=0.05$ (as suggested in \cite{BCW-SqLASSO}). The secondary tuning parameter $R$ of the bias cor. L.S. estimator is set to $R=\|\beta_0\|_1+0.5$ in accordance with the results of \cite{LW}.  

{\bf Simulation B:} here we consider the case where the error in measurements represents missing data. For this purpose we follow the framework of \cite{RT2}, i.e., we observe $(y_i,\tilde z_i,\eta_i,\, i=1,\dots,n)$ where
\begin{eqnarray}
\tilde z_{ij} =x_{ij}\eta_{ij}, \quad \eta_{ij}\,\, \textrm{i.i.d Bernoulli with parameter}\,\, 1-\pi,\nonumber
\end{eqnarray}
the r.v.'s $\xi_i,$ and $x_i$ are generated as before, and $\eta_{ij}=0$ indicates that we are missing the observation $x_{ij}.$ For numerical comparisons, we set the parameter $\pi=0.25$ for each simulated repetition. This simulation design assumes the bias correction matrix to be unknown and to be estimated as described in \cite{RT2}. Accordingly for the SN-conic estimator, we set $b_{\epsilon}=c\sqrt{\log(2p/\epsilon)/{n}},$ $c=0.25,$ $\epsilon=0.05.$ The tuning parameter of the conic, bias cor. L.S. and Lasso estimators are chosen as done for Simulation A, assuming $\sigma_{\xi}=1$ is known. 

{\bf Simulation C:} this simulation is designed to provide a comparison of the proposed pivotal SN estimators with cross validated version of the comparitive methods, i.e., estimates yielded by the conic, bias cor. L.S. and lasso when the corresponding tuning parameters are chosen via cross validation. Here the data generating process is assumed to be same as that of Simulation A. The conic estimator is tuned assuming that its two tuning parameters $\mu$ and $\tau$ are set to $\mu=\tau,$ over a equally separated grid of $20$ points over $\{0.01,...,1\}.$ The bias cor. L.S. estimator and lasso are tuned similarly, with $\lambda$ over the same grid of twenty points. The secondary tuning parameter of bias cor. L.S. is set to $R=\|\beta_0\|_1+0.5$ as before. We perform a five fold cross validation for each case.    

For performance comparison, we report the following metrics which are computed based on 100 simulated repetitions: bias ($\|\Ep[\hat\b-\b_0]\|_2$), root mean squared error (RMSE, $\Ep[\|\hat\b-\b_0\|^2]^{1/2}$), prediction bias (PRb, $\|\Ep(X(\hat\b-\b_0))\|_2/\sqrt{n}$), $\ell_2$-rate (L2, $\Ep[\|\hat\b-\b_0\|_2]$), $\ell_1$-rate (L1, $\Ep[\|\hat\b-\b\|_1]$, prediction risk (PR, $\Ep[\|X(\hat\b-\b_0)\|_2]/\sqrt{n}$), false positives (FP, expected number of misidentified zero elements of $\b_0$), true positives (TP, expected number of correctly identified non-zero elements of $\b_0$), false negatives (FN,  expected number of misidentified non-zero elements of $\b_0$), and time (average computation time in seconds {\small{(CPU: Intel Xeon@ 1.9Ghz, 128GB RAM)}}). All Computations are performed in the software R. The self normalized and conic estimators are implemented using the optimization software Mosek, an interior point methods solver, wrapped through the R package Rmosek. The lasso estimator is implemented with the R package `glmnet'.

Partial results of Simulation A, B and C are reported in Table \ref{tab:adsep300}, Table \ref{tab:missep300} and Table \ref{tab:CVsepn300} respectively. The results of all remaining cases of these simulations are provided in Table \ref{tab:adsep400} - Table \ref{tab:CVunsepn400} of the supplementary materials of this article. The proposed methods, SN-conic, SN-conic thresholded, SN-conic refitted V1 and SN-conic refitted V2 provide good results at all three levels of $p$ considered in the simulations, with the performance deteriorating slightly with increase in dimension. These numerical findings support our theoretical results regarding consistency of the proposed methods. The poor performance of biased Lasso highlights the impact of disregarding errors in variables. In terms of estimation performance, the proposed self-normalized estimators uniformly outperform the Conic and the bias cor. L.S. estimators in each of the three simulation designs considered. The only metric where Conic and the bias cor. L.S. estimators seem to be marginally superior in comparison to the SN-conic estimator is the false positive rate, this issue is however addressed by the thresholded and refitted versions. It may also be of interest to note that, while the estimation performance of Conic and the bias cor. L.S. estimators improves upon cross validating tuning parameters, it is still worse than that obtained by the proposed self normalized estimators. In context of computation times, the self normalized estimators are significantly more efficient than the cross validated conic and bias cor. l.s. estimators. Note that in practice, the computation times for the latter two will be further extended due the need to tune over the corresponding secondary tuning parameters, which has not been illustrated in our simulation setup.         

Amongst the proposed self-normalized estimators, the main difference between SN-conic thresholded is in the significantly reduced number of false positives in the thresholded version. This comes at a price of a slightly higher false negative rate especially in the case where the coefficients are not well separated from zero, see Tables \ref{tab:adunsep300}, \ref{tab:adunsep400}, \ref{tab:misunsep300} and \ref{tab:misunsep400}. Another expected outcome is the reduction in bias obtained via the refitted versions, among which SN-conic refitted V2 appears to provide a significantly lower bias than other methods.

\begin{table}[htbp]
		\begin{tabular}{cccccccccccc}
			\toprule
			\multirow{2}{*}{$n=300$}                                                                      & \multicolumn{11}{c}{$\beta=(1,1,1,1,1,1,0,....,0)^T$}                         \\ \cmidrule{2-12}
			& {\textbf {p}}   & {\textbf {Bias}} & {\textbf {RMSE}} & {\textbf {PRb}}  & {\textbf {L2}}   & {\textbf {L1}}   & {\textbf {PR}}   & {\textbf {FP}}     & {\textbf {TP}}   & {\textbf {FN}}   & {\textbf {Time}}   \\ \midrule
			\multirow{4}{*}{\textbf{SN-conic}}                                                            & 10  & 0.38 & 0.71 & 0.07 & 0.68 & 1.42 & 0.64 & 1.42   & 6.00 & 0.00 & 2.45   \\
			& 100 & 0.49 & 0.81 & 0.07 & 0.79 & 1.77 & 0.78 & 9.46   & 5.99 & 0.01 & 6.11   \\
			& 400 & 0.58 & 0.84 & 0.09 & 0.81 & 1.92 & 0.83 & 21.52  & 5.99 & 0.01 & 18.05  \\
			& 750 & 0.61 & 0.86 & 0.09 & 0.84 & 2.03 & 0.89 & 47.95  & 5.98 & 0.02 & 123.45 \\ \midrule
			\multirow{4}{*}{\textbf{\begin{tabular}[c]{@{}c@{}}SN-conic \\ (thresholded)\end{tabular}}}   & 10  & 0.38 & 0.71 & 0.07 & 0.68 & 1.40 & 0.64 & 0.24   & 5.96 & 0.04 & 2.45   \\
			& 100 & 0.49 & 0.80 & 0.07 & 0.78 & 1.61 & 0.78 & 0.44   & 5.95 & 0.05 & 6.11   \\
			& 400 & 0.59 & 0.83 & 0.09 & 0.81 & 1.68 & 0.84 & 0.52   & 5.91 & 0.09 & 18.05  \\
			& 750 & 0.61 & 0.85 & 0.09 & 0.82 & 1.73 & 0.89 & 0.41   & 5.93 & 0.07 & 123.45 \\ \midrule
			\multirow{4}{*}{\textbf{\begin{tabular}[c]{@{}c@{}}SN-conic \\ (refitted V1)\end{tabular}}}   & 10  & 0.33 & 0.59 & 0.06 & 0.56 & 1.14 & 0.55 & 1.08   & 5.98 & 0.02 & 1.58   \\
			& 100 & 0.38 & 0.66 & 0.07 & 0.64 & 1.45 & 0.68 & 6.53   & 5.96 & 0.04 & 4.91   \\
			& 400 & 0.48 & 0.76 & 0.08 & 0.73 & 1.72 & 0.75 & 21.60  & 5.98 & 0.02 & 16.00  \\
			& 750 & 0.46 & 0.73 & 0.08 & 0.70 & 1.74 & 0.75 & 41.54  & 5.96 & 0.04 & 125.84 \\ \midrule
			\multirow{4}{*}{\textbf{\begin{tabular}[c]{@{}c@{}}SN-conic \\ (refitted V2)\end{tabular}}}   & 10  & 0.20 & 2.18 & 0.14 & 1.35 & 2.91 & 0.94 & 0.24   & 5.96 & 0.04 & 6.05   \\
			& 100 & 0.19 & 1.96 & 0.13 & 1.42 & 3.10 & 1.04 & 0.44   & 5.95 & 0.05 & 8.93   \\
			& 400 & 0.21 & 1.57 & 0.11 & 1.31 & 2.90 & 0.94 & 0.52   & 5.91 & 0.09 & 22.66  \\
			& 750 & 0.47 & 3.24 & 0.22 & 1.73 & 3.83 & 1.22 & 0.41   & 5.93 & 0.07 & 55.08  \\ \midrule
			\multirow{4}{*}{\textbf{Conic}}                                                               & 10  & 0.57 & 0.95 & 0.09 & 0.91 & 1.92 & 0.90 & 0.32   & 5.98 & 0.02 & 0.44   \\
			& 100 & 0.69 & 1.04 & 0.10 & 1.01 & 2.09 & 1.03 & 0.59   & 5.93 & 0.07 & 0.47   \\
			& 400 & 0.75 & 1.10 & 0.11 & 1.06 & 2.24 & 1.09 & 4.15   & 5.87 & 0.13 & 1.45   \\
			& 750 & 0.76 & 1.05 & 0.11 & 1.02 & 2.17 & 1.09 & 6.18   & 5.87 & 0.13 & 3.97   \\ \midrule
			\multirow{4}{*}{\textbf{Bias Cor. L.S.}}                                                      & 10  & 0.40 & 1.06 & 0.09 & 0.99 & 2.09 & 0.83 & 0.41   & 5.84 & 0.16 & 0.10   \\
			& 100 & 0.50 & 1.21 & 0.10 & 1.14 & 2.61 & 0.99 & 2.91   & 5.66 & 0.34 & 0.13   \\
			& 400 & 0.56 & 1.29 & 0.10 & 1.20 & 3.01 & 1.03 & 5.68   & 5.58 & 0.42 & 0.41   \\
			& 750 & 0.53 & 1.26 & 0.11 & 1.17 & 2.91 & 1.03 & 6.45   & 5.61 & 0.39 & 0.93   \\ \midrule
			\multirow{4}{*}{\textbf{\begin{tabular}[c]{@{}c@{}}Lasso \\ (biased)\end{tabular}}}        & 10  & 0.92 & 0.96 & 0.14 & 0.95 & 2.37 & 1.33 & 3.69   & 6.00 & 0.00 & 0.17   \\
			& 100 & 0.95 & 1.02 & 0.12 & 1.02 & 3.23 & 1.37 & 68.14  & 6.00 & 0.00 & 0.53   \\
			& 400 & 0.99 & 1.09 & 0.14 & 1.09 & 4.48 & 1.39 & 150.97 & 6.00 & 0.00 & 0.81   \\
			& 750 & 1.01 & 1.13 & 0.14 & 1.13 & 5.13 & 1.39 & 187.89 & 6.00 & 0.00 & 1.04   \\ \midrule
			\multirow{4}{*}{\textbf{\begin{tabular}[c]{@{}c@{}}Lasso \\ (no meas error)\end{tabular}}} & 10  & 0.21 & 0.28 & 0.03 & 0.28 & 0.57 & 0.32 & 3.50   & 6.00 & 0.00 & 0.14   \\
			& 100 & 0.26 & 0.32 & 0.04 & 0.32 & 0.65 & 0.38 & 6.21   & 6.00 & 0.00 & 0.39   \\
			& 400 & 0.30 & 0.35 & 0.05 & 0.35 & 0.73 & 0.43 & 8.01   & 6.00 & 0.00 & 0.61   \\
			& 750 & 0.30 & 0.37 & 0.05 & 0.36 & 0.75 & 0.44 & 11.26  & 6.00 & 0.00 & 0.93   \\ \bottomrule
		\end{tabular}
	\vspace{3mm}
	\caption{\label{tab:adsep300} Simulation A: numerical results at $n=300$ under the additive EIV setup, with a known bias correction matrix and separated regression coefficients.}
\end{table}%

\begin{table}[htbp]
		\begin{tabular}{cccccccccccc}
			\toprule
			\multirow{2}{*}{$n=300$}                                                                  & \multicolumn{11}{c}{$\beta=(1,1,1,1,1,1,0,....,0)^T$}                                                                                                         \\\cmidrule{2-12} 
			& \textbf{p} & \textbf{Bias} & \textbf{RMSE} & \textbf{PRb} & \textbf{L2} & \textbf{L1} & \textbf{PR} & \textbf{FP} & \textbf{TP} & \textbf{FN} & \textbf{Time} \\ \midrule
			\multirow{4}{*}{\textbf{SN-conic}}                                                          & 10         & 0.23          & 0.47          & 0.04         & 0.44        & 0.94        & 0.40        & 0.71        & 6.00        & 0.00        & 0.89          \\
			& 100        & 0.30          & 0.52          & 0.05         & 0.50        & 1.14        & 0.50        & 5.35        & 6.00        & 0.00        & 3.89          \\
			& 400        & 0.34          & 0.56          & 0.06         & 0.54        & 1.21        & 0.55        & 11.40       & 6.00        & 0.00        & 16.57         \\
			& 750        & 0.37          & 0.56          & 0.05         & 0.54        & 1.22        & 0.57        & 5.81        & 6.00        & 0.00        & 110.99        \\ \midrule
			\multirow{4}{*}{\textbf{\begin{tabular}[c]{@{}c@{}}SN-conic \\ (thresholded)\end{tabular}}} & 10         & 0.23          & 0.47          & 0.04         & 0.44        & 0.92        & 0.40        & 0.12        & 6.00        & 0.00        & 0.89          \\
			& 100        & 0.30          & 0.51          & 0.04         & 0.49        & 1.03        & 0.49        & 0.13        & 6.00        & 0.00        & 3.89          \\
			& 400        & 0.34          & 0.55          & 0.06         & 0.53        & 1.08        & 0.54        & 0.02        & 6.00        & 0.00        & 16.57         \\
			& 750        & 0.37          & 0.55          & 0.05         & 0.53        & 1.09        & 0.57        & 0.04        & 6.00        & 0.00        & 110.99        \\ \midrule
			\multirow{4}{*}{\textbf{\begin{tabular}[c]{@{}c@{}}SN-conic \\ (refitted V1)\end{tabular}}} & 10         & 0.16          & 0.38          & 0.04         & 0.35        & 0.72        & 0.33        & 0.41        & 6.00        & 0.00        & 0.75          \\
			& 100        & 0.20          & 0.39          & 0.04         & 0.37        & 0.79        & 0.38        & 3.01        & 6.00        & 0.00        & 3.63          \\
			& 400        & 0.22          & 0.38          & 0.04         & 0.37        & 0.82        & 0.40        & 6.65        & 6.00        & 0.00        & 16.56         \\
			& 750        & 0.24          & 0.42          & 0.04         & 0.40        & 0.89        & 0.43        & 32.62       & 6.00        & 0.00        & 113.54        \\ \midrule
			\multirow{4}{*}{\textbf{\begin{tabular}[c]{@{}c@{}}SN-conic \\ (refitted V2)\end{tabular}}} & 10         & 0.06          & 0.44          & 0.03         & 0.40        & 0.85        & 0.30        & 0.12        & 6.00        & 0.00        & 0.50          \\
			& 100        & 0.04          & 0.45          & 0.03         & 0.42        & 0.87        & 0.33        & 0.13        & 6.00        & 0.00        & 3.45          \\
			& 400        & 0.02          & 0.45          & 0.03         & 0.42        & 0.88        & 0.31        & 0.02        & 6.00        & 0.00        & 18.27         \\
			& 750        & 0.04          & 0.42          & 0.03         & 0.39        & 0.80        & 0.30        & 0.04        & 6.00        & 0.00        & 50.62         \\ \midrule
			\multirow{4}{*}{\textbf{Conic}}                                                             & 10         & 0.51          & 0.69          & 0.08         & 0.67        & 1.39        & 0.76        & 0.08        & 6.00        & 0.00        & 0.21          \\
			& 100        & 0.60          & 0.76          & 0.08         & 0.74        & 1.57        & 0.88        & 0.39        & 6.00        & 0.00        & 0.26          \\
			& 400        & 0.64          & 0.80          & 0.10         & 0.79        & 1.66        & 0.94        & 2.40        & 6.00        & 0.00        & 1.20          \\
			& 750        & 0.67          & 0.80          & 0.09         & 0.79        & 1.67        & 0.96        & 1.91        & 6.00        & 0.00        & 3.53          \\ \midrule
			\multirow{4}{*}{\textbf{Bias Cor. L.S.}}                                                    & 10         & 0.43          & 0.63          & 0.07         & 0.61        & 1.27        & 0.66        & 0.08        & 6.00        & 0.00        & 0.88          \\
			& 100        & 0.54          & 0.71          & 0.07         & 0.70        & 1.47        & 0.81        & 0.05        & 6.00        & 0.00        & 3.41          \\
			& 400        & 0.59          & 0.77          & 0.09         & 0.75        & 1.57        & 0.88        & 0.01        & 6.00        & 0.00        & 12.27         \\
			& 750        & 0.62          & 0.77          & 0.09         & 0.76        & 1.59        & 0.90        & 0.04        & 6.00        & 0.00        & 22.73         \\ \midrule
			\multirow{4}{*}{\textbf{\begin{tabular}[c]{@{}c@{}}Lasso \\ (biased)\end{tabular}}}         & 10         & 0.52          & 0.58          & 0.08         & 0.57        & 1.31        & 0.74        & 3.60        & 6.00        & 0.00        & 0.04          \\
			& 100        & 0.56          & 0.62          & 0.08         & 0.62        & 1.52        & 0.81        & 34.06       & 6.00        & 0.00        & 0.24          \\
			& 400        & 0.58          & 0.65          & 0.09         & 0.65        & 1.72        & 0.84        & 61.87       & 6.00        & 0.00        & 0.57          \\
			& 750        & 0.60          & 0.66          & 0.08         & 0.66        & 1.85        & 0.85        & 84.13       & 6.00        & 0.00        & 0.97          \\ \midrule
			\multirow{4}{*}{\textbf{\begin{tabular}[c]{@{}c@{}}Lasso \\ (no meas error)\end{tabular}}}  & 10         & 0.21          & 0.28          & 0.04         & 0.27        & 0.56        & 0.32        & 3.39        & 6.00        & 0.00        & 0.05          \\
			& 100        & 0.27          & 0.33          & 0.04         & 0.33        & 0.68        & 0.40        & 5.72        & 6.00        & 0.00        & 0.21          \\
			& 400        & 0.29          & 0.35          & 0.05         & 0.34        & 0.72        & 0.43        & 8.26        & 6.00        & 0.00        & 0.45          \\
			& 750        & 0.31          & 0.37          & 0.04         & 0.36        & 0.76        & 0.45        & 11.00       & 6.00        & 0.00        & 0.83          \\ \bottomrule
		\end{tabular}
	\vspace{3mm}
	\caption{\label{tab:missep300} Simulation B: numerical results at $n=300$ under the covariates missing at random setup, with an estimated bias correction and separated regression coefficients.}
\end{table}%

\begin{table}[htbp]
		\begin{tabular}{clllllllllll}
		\toprule
			\multirow{2}{*}{$n=300$}                                                                & \multicolumn{11}{c}{$\beta=(1,1,1,1,1,1,0,...,0)^T$}                                                                                                                                                                                                                                                                                                                                     \\ \cmidrule{2-12} 
			& \multicolumn{1}{c}{\textbf{p}} & \multicolumn{1}{c}{\textbf{Bias}} & \multicolumn{1}{c}{\textbf{RMSE}} & \multicolumn{1}{c}{\textbf{PRb}} & \multicolumn{1}{c}{\textbf{L2}} & \multicolumn{1}{c}{\textbf{L1}} & \multicolumn{1}{c}{\textbf{PR}} & \multicolumn{1}{c}{\textbf{FP}} & \multicolumn{1}{c}{\textbf{TP}} & \multicolumn{1}{c}{\textbf{FN}} & \multicolumn{1}{c}{\textbf{Time}} \\ \midrule
			\multirow{4}{*}{\textbf{Bias Cor. L.S.}}                                                & 10                             & 0.28                              & 1.18                              & 0.08                             & 1.08                            & 2.41                            & 0.83                            & 1.24                            & 5.80                            & 0.20                            & 4.54                              \\
			& 100                            & 0.39                              & 1.24                              & 0.10                             & 1.18                            & 2.97                            & 0.98                            & 5.35                            & 5.70                            & 0.30                            & 18.84                             \\
			& 400                            & 0.57                              & 1.39                              & 0.11                             & 1.31                            & 3.44                            & 1.10                            & 7.59                            & 5.51                            & 0.49                            & 84.76                             \\
			& 750                            & 0.63                              & 1.48                              & 0.12                             & 1.41                            & 3.66                            & 1.20                            & 7.98                            & 5.39                            & 0.61                            & 223.66                            \\ \midrule
			\multirow{4}{*}{\textbf{Conic}}                                                         & 10                             & 0.36                              & 1.04                              & 0.08                             & 0.87                            & 1.87                            & 0.72                            & 1.12                            & 6.00                            & 0.00                            & 11.18                             \\
			& 100                            & 0.40                              & 0.91                              & 0.07                             & 0.87                            & 2.47                            & 0.80                            & 10.44                           & 5.99                            & 0.01                            & 45.59                             \\
			& 400                            & 0.52                              & 0.88                              & 0.08                             & 0.86                            & 2.43                            & 0.85                            & 20.68                           & 5.98                            & 0.02                            & 313.35                            \\
			& 750                            & 0.58                              & 0.92                              & 0.09                             & 0.90                            & 2.54                            & 0.90                            & 21.31                           & 5.98                            & 0.02                            & 852.59                            \\ \midrule
			\multirow{4}{*}{\textbf{\begin{tabular}[c]{@{}c@{}}Lasso \\ (biased)\end{tabular}}}     & 10                             & 1.10                              & 1.13                              & 0.15                             & 1.13                            & 2.70                            & 1.62                            & 0.74                            & 6.00                            & 0.00                            & 0.14                              \\
			& 100                            & 1.19                              & 1.22                              & 0.16                             & 1.22                            & 2.95                            & 1.76                            & 1.49                            & 6.00                            & 0.00                            & 0.32                              \\
			& 400                            & 1.23                              & 1.26                              & 0.18                             & 1.25                            & 3.04                            & 1.83                            & 1.43                            & 6.00                            & 0.00                            & 1.04                              \\
			& 750                            & 1.27                              & 1.30                              & 0.19                             & 1.30                            & 3.15                            & 1.88                            & 1.95                            & 5.99                            & 0.01                            & 1.17                              \\ \midrule
			\multirow{4}{*}{\textbf{\begin{tabular}[c]{@{}c@{}}Lasso\\ (no meas err)\end{tabular}}} & 10                             & 0.19                              & 0.27                              & 0.03                             & 0.26                            & 0.53                            & 0.29                            & 0.19                            & 6.00                            & 0.00                            & 0.12                              \\
			& 100                            & 0.23                              & 0.29                              & 0.03                             & 0.29                            & 0.60                            & 0.34                            & 0.36                            & 6.00                            & 0.00                            & 0.28                              \\
			& 400                            & 0.25                              & 0.32                              & 0.04                             & 0.32                            & 0.68                            & 0.38                            & 0.68                            & 6.00                            & 0.00                            & 0.62                              \\
			& 750                            & 0.28                              & 0.33                              & 0.04                             & 0.33                            & 0.70                            & 0.40                            & 0.54                            & 6.00                            & 0.00                            & 0.81                              \\ \bottomrule
		\end{tabular}
	\vspace{3mm}
	\caption{\label{tab:CVsepn300} Simulation C: numerical results at $n=300$ for comparative methods tuned via cross validation, under the additive EIV setup with separated regression coefficients.}%
\end{table}%

\section{Conclusion}

In this paper, we have introduced a new estimator for the high-dimensional linear regression model with measurement error in the design. Compared to previous literature, the main interest of our procedure is that it is pivotal, in the sense that it does not require the knowledge of typically unknown quantities (for example variance of the noise in the signal, sparsity of the vector of interest or its $l_1$ norm) to tune regularization parameters. Furthermore it still achieves optimal rates of convergence and is valid for non-i.i.d. data. Note also that we obtain a more adaptive regularization as it depends on the empirical second moment of the moment conditions, in opposition to \cite{BRT2014} that had to use constants and triangle inequalities to derive feasible choices of constants. This estimator is solution of a convex program with second order cone constraints, allowing for easy implementation and fast computations. We also consider thresholded and refitted versions of the estimator enabling us to obtain respectively sparsity guarantees and bias reduction in the estimated coefficients. An extensive simulation study confirms good performance of our estimators in various settings.

\newpage

\begin{appendix}
\section{Main proofs}

We set $T=\supp(\beta_0)$ with $|T| \leq s$. We begin by stating a technical lemma.

\begin{lemma}\label{lemma:auxPhi}
For $a\geq 1$ and $\gamma>0$, we have
$$1-\Phi(a/\{1+\gamma\}) \leq \{1-\Phi(a)\}\exp(2a^2\gamma).$$\end{lemma}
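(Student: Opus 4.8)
The plan is to pass to logarithms. Write $\varphi=\Phi'$ for the standard normal density, put $\bar\Phi=1-\Phi$, and set $b:=a/(1+\gamma)$, so that $0<b\le a$. Since $\frac{d}{dx}\log\bar\Phi(x)=-r(x)$ with $r(x):=\varphi(x)/\bar\Phi(x)$ the inverse Mills ratio, the claim is equivalent to
$$ \log\bar\Phi(b)-\log\bar\Phi(a)=\int_b^a r(x)\,dx \;\le\; 2a^2\gamma, $$
after which exponentiating yields $1-\Phi(a/(1+\gamma))\le\{1-\Phi(a)\}\exp(2a^2\gamma)$. So the whole problem reduces to bounding this integral.

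The two ingredients I would use are classical Gaussian tail facts. First, $r(x)\le x+1/x$ for $x>0$, i.e.\ $\bar\Phi(x)\ge\frac{x}{1+x^2}\varphi(x)$; this follows by checking that $g(x):=(1+x^2)\bar\Phi(x)-x\varphi(x)$ has $g(+\infty)=0$ and $g'(x)=2\{x\bar\Phi(x)-\varphi(x)\}<0$, the last inequality being the elementary bound $\bar\Phi(x)=\int_x^\infty\varphi<\frac1x\int_x^\infty t\varphi(t)\,dt=\varphi(x)/x$. Second (optional, but giving the cleanest route), $r$ is nondecreasing: $r(x)>x$ for $x>0$ from the same bound, and $r'=r(r-x)>0$; equivalently $\bar\Phi$ is log-concave. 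Granting monotonicity of $r$,
$$ \int_b^a r(x)\,dx\;\le\;(a-b)\,r(a)\;\le\;(a-b)\Big(a+\tfrac1a\Big), $$
and since $a-b=a\gamma/(1+\gamma)\le a\gamma$ this is at most $a\gamma(a+1/a)=a^2\gamma+\gamma\le 2a^2\gamma$, the last step using $a\ge1$ so that $\gamma\le a^2\gamma$. If one prefers not to invoke monotonicity, integrate the pointwise bound directly: $\int_b^a r(x)\,dx\le\int_b^a(x+1/x)\,dx=\frac{a^2-b^2}{2}+\log(1+\gamma)$, where $\frac{a^2-b^2}{2}=\frac{a^2}{2}\cdot\frac{\gamma(2+\gamma)}{(1+\gamma)^2}\le a^2\gamma$ because $\frac{2+\gamma}{(1+\gamma)^2}\le 2$ for $\gamma\ge0$, while $\log(1+\gamma)\le\gamma\le a^2\gamma$; the two terms again sum to at most $2a^2\gamma$.

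The only point I would treat with any care is the justification of the two auxiliary Gaussian facts (the tail lower bound $r(x)\le x+1/x$ and, in the first route, the log-concavity/monotonicity of $r$); these are standard but the whole estimate hinges on them, so they should be stated cleanly rather than merely cited. Everything else is elementary algebra with the substitution $b=a/(1+\gamma)$ together with the hypothesis $a\ge1$, which is exactly what lets one absorb the $\gamma$ and $\log(1+\gamma)$ contributions into the leading $a^2\gamma$ term with room to spare.
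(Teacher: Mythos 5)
Your proof is correct and rests on the same two ingredients as the paper's: the log-concavity of $1-\Phi$ (which you phrase equivalently as monotonicity of the inverse Mills ratio $r$, so your bound $\int_b^a r \le (a-b)r(a)$ is exactly the paper's supergradient inequality) and the bound $r(a)\le a+1/a\le 2a$ for $a\ge 1$. The only difference is cosmetic: you supply self-contained derivations of these Gaussian facts and offer a second route that integrates the pointwise bound directly, whereas the paper simply cites them.
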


{} The next lemma deals with $(\hat\beta,\hat t,\hat u)$ defined as the solution to (\ref{est:pivotal})-(\ref{est:pivotal_1}).

\begin{lemma}\label{lem:pivotal}
Under Conditions A and B, for $\tau = n^{-1/2}\Phi^{-1}(1-\alpha/(2p))$, we have that the triple $(\beta_0, t(\beta_0),|\beta_0|)$ is feasible in the pivotal conic programming (\ref{est:pivotal}) with probability at least $1-\alpha\{1+o(1)\}-\epsilon$ for $n$ sufficiently large. On that event, we have
 $$ \|\hat t \|_\infty - \|t(\beta_0)\|_\infty \leq \frac{1+\lambda_u}{\lambda_t}\|\hat\beta-\beta_0\|_1, \ \ \mbox{and} $$$$\|\hat u \|_\infty - \|\beta_0\|_\infty \leq \frac{1}{\lambda_u}\|\hat\beta-\beta_0\|_1+\frac{\lambda_t}{\lambda_u}H_n\|\hat\beta-\beta_0\|_1.$$
In addition, using  $\lambda_u = 1/4$ and $\lambda_t = 1/\{4H_n\}$,  we have $\hat \beta-\beta_0 \in C_T(3)$.
\end{lemma}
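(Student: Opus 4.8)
The plan is to break the statement into three parts: feasibility of $\beta_0$, the two growth bounds on $\|\hat t\|_\infty$ and $\|\hat u\|_\infty$, and finally the cone membership $\hat\beta-\beta_0\in C_T(3)$. For feasibility, I would take $t=t(\beta_0)$ and $u_j=|\beta_{0j}|$, so the second line of \eqref{est:pivotal_1} holds with equality. The only thing to check is the first line, i.e. that for each $j$,
$$\Big|\tfrac1n\sum_i z_{ij}(y_i-z_i^T\beta_0)+\hat\Gamma_{jj}\beta_{0j}\Big|\le \tau\, t_j(\beta_0)+(1+\tau)b_\epsilon|\beta_{0j}|.$$
Writing $y_i-z_i^T\beta_0=\xi_i-w_i^T\beta_0$ and adding and subtracting $\Gamma_{jj}\beta_{0j}$, the left side is bounded by $|\frac1n\sum_i U_{ij}| + b_\epsilon|\beta_{0j}|$ on the event $\{\|\hat\Gamma-\Gamma\|_\infty\le b_\epsilon\}$ (probability $\ge 1-\epsilon$ by \eqref{def:be}). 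So it suffices that $|\frac1n\sum_i U_{ij}|\le\tau\, t_j(\beta_0)+\tau b_\epsilon|\beta_{0j}|$, and since $\tau b_\epsilon|\beta_{0j}|\ge 0$ it is enough to get the self-normalized deviation bound $|\sum_i U_{ij}|\le \Phi^{-1}(1-\alpha/(2p))\,\{\sum_i U_{ij}^2\}^{1/2}$ simultaneously over $j\le p$, after relating $\{\frac1n\sum_i U_{ij}^2\}^{1/2}$ to $t_j(\beta_0)$ (they differ by the estimation error in $\hat\Gamma$ versus $\Gamma_{jj}$, again controlled by $b_\epsilon$, plus the $\Delta_j$ terms). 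This is exactly where the self-normalized moderate-deviation theory of \cite{delapena,jing:etal} enters, using Condition B(ii)--(iii) and Lemma \ref{lemma:auxPhi} to convert the normal quantile $\Phi^{-1}(1-\alpha/(2pn))$ into the stated $1-\alpha\{1+o(1)\}$ probability via a union bound over $j$.

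Given feasibility, the two growth bounds follow from optimality of $(\hat\beta,\hat t,\hat u)$. Since $(\beta_0,t(\beta_0),(|\beta_{0j}|)_j)$ is feasible, comparing objective values gives $\|\hat\beta\|_1+\lambda_t\|\hat t\|_\infty+\lambda_u\|\hat u\|_\infty\le \|\beta_0\|_1+\lambda_t\|t(\beta_0)\|_\infty+\lambda_u\|\beta_0\|_\infty$. That alone is not quite enough; I would also use the constraint structure. For the $\hat t$ bound: the map $\beta\mapsto t_j(\beta)$ is, for each $j$, the $\ell_2$-norm (over $i$, scaled by $1/\sqrt n$) of an affine function of $\beta$ with linear part $-(z_{ij}z_i - \hat\Gamma_{j\cdot}^T)$; hence it is Lipschitz in $\beta$ w.r.t. $\|\cdot\|_1$ with constant $\max_j\|\frac1n\sum_i(z_{ij}z_i-\hat\Gamma_{j\cdot})^T(z_{ij}z_i-\hat\Gamma_{j\cdot})\|_\infty^{1/2}=H_n$. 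Since $\hat t_j\le$ (the constraint forces $\hat t_j\ge t_j(\hat\beta)$, and optimality pushes it down so that actually the binding value is $t_j(\hat\beta)$ up to the first-line slack), we get $\|\hat t\|_\infty\le \|t(\hat\beta)\|_\infty\le \|t(\beta_0)\|_\infty+H_n\|\hat\beta-\beta_0\|_1$; combining with the objective comparison and $\lambda_t=1/(4H_n)$ yields the claimed $\frac{1+\lambda_u}{\lambda_t}\|\hat\beta-\beta_0\|_1$ bound after absorbing the $\|\hat\beta\|_1-\|\beta_0\|_1\ge -\|\hat\beta-\beta_0\|_1$ term. The $\hat u$ bound is analogous but easier: $\hat u_j\ge|\hat\beta_j|$, so $\|\hat u\|_\infty\le\|\hat\beta\|_\infty+(\text{slack})$; but here I should instead note $\hat u_j$ can be taken equal to $|\hat\beta_j|$ at optimum (nothing else constrains $u$ from below), reducing to controlling $\|\hat\beta\|_\infty-\|\beta_0\|_\infty\le\|\hat\beta-\beta_0\|_\infty\le\|\hat\beta-\beta_0\|_1$, and the $H_n$ term enters through the objective trade-off with $\lambda_t\|\hat t\|_\infty$. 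I would assemble these carefully so the constants $\frac1{\lambda_u}$ and $\frac{\lambda_t}{\lambda_u}H_n$ come out exactly.

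Finally, for $\hat\beta-\beta_0\in C_T(3)$: set $\delta=\hat\beta-\beta_0$. From the objective comparison, $\|\beta_0\|_1-\|\hat\beta\|_1\ge \lambda_t(\|\hat t\|_\infty-\|t(\beta_0)\|_\infty)+\lambda_u(\|\hat u\|_\infty-\|\beta_0\|_\infty)$. The left side is at most $\|\delta_T\|_1-\|\delta_{T^c}\|_1$ by the usual decomposition $\|\beta_0\|_1-\|\hat\beta\|_1=\|\beta_{0T}\|_1-\|\hat\beta_T\|_1-\|\hat\beta_{T^c}\|_1\le\|\delta_T\|_1-\|\delta_{T^c}\|_1$ (since $\beta_{0T^c}=0$). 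For the right side I would like a lower bound of the form $-c\|\delta\|_1$; using the two growth bounds just proved, $\lambda_t(\|\hat t\|_\infty-\|t(\beta_0)\|_\infty)\ge -\lambda_t\cdot\frac{1+\lambda_u}{\lambda_t}\|\delta\|_1=-(1+\lambda_u)\|\delta\|_1$ — wait, that is a lower bound only if the bracket is negative, which need not hold; instead I use that $\|\hat t\|_\infty-\|t(\beta_0)\|_\infty\ge t_j(\hat\beta)-\ldots$, i.e. the other (lower) Lipschitz direction: $\|\hat t\|_\infty\ge\|t(\hat\beta)\|_\infty\ge\|t(\beta_0)\|_\infty-H_n\|\delta\|_1$, giving $\lambda_t(\|\hat t\|_\infty-\|t(\beta_0)\|_\infty)\ge-\lambda_tH_n\|\delta\|_1=-\tfrac14\|\delta\|_1$, and similarly $\lambda_u(\|\hat u\|_\infty-\|\beta_0\|_\infty)\ge-\lambda_u\|\delta\|_1=-\tfrac14\|\delta\|_1$ (using $|\hat u_j|\ge|\hat\beta_j|\ge|\beta_{0j}|-|\delta_j|$). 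Hence $\|\delta_T\|_1-\|\delta_{T^c}\|_1\ge-\tfrac12\|\delta\|_1=-\tfrac12(\|\delta_T\|_1+\|\delta_{T^c}\|_1)$, which rearranges to $\|\delta_{T^c}\|_1\le 3\|\delta_T\|_1$, i.e. $\delta\in C_T(3)$. The main obstacle is the feasibility step: one must carefully track how $b_\epsilon$ (estimation of $\Gamma$) and the $\Delta_j$ (heterogeneity of $\Ep[w_{ij}^2]$ across $i$) perturb the clean self-normalized statistic $\sum_i U_{ij}/\{\sum_i U_{ij}^2\}^{1/2}$, and verify that Conditions B(ii)--(iii) make these perturbations negligible at the scale of the moderate-deviation bound — this is the only genuinely probabilistic part; the rest is deterministic convex-optimization bookkeeping given the feasibility event.
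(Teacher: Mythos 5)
Your plan follows essentially the same route as the paper's proof: feasibility of $(\beta_0,t(\beta_0),|\beta_0|)$ via self-normalized moderate deviations (with Lemma \ref{lemma:auxPhi} and Conditions B(ii)--(iii) absorbing the $b_\epsilon$ and $\Delta_j$ perturbations), then the objective comparison combined with the one-sided constraint inequalities $\hat t_j\ge t_j(\hat\beta)$, $\hat u_j\ge|\hat\beta_j|$ and the Lipschitz bound $\|t(\beta_0)-t(\hat\beta)\|_\infty\le H_n\|\hat\beta-\beta_0\|_1$, and finally the cone argument exactly as in the paper. One caution: your two side-claims that optimality forces $\|\hat t\|_\infty\le\|t(\hat\beta)\|_\infty$ and that ``$\hat u_j$ can be taken equal to $|\hat\beta_j|$ at optimum (nothing else constrains $u$ from below)'' are both false, because $t_j$ and $u_j$ appear on the \emph{right-hand side} of the first constraint in \eqref{est:pivotal_1}, so enlarging them relaxes that constraint and the optimizer need not drive them to their lower bounds; moreover only $\|t\|_\infty$ and $\|u\|_\infty$ enter the objective, so non-maximal components are not pushed down at all. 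Neither claim is needed: the lemma's first bound follows from the objective comparison together with $\|\hat u\|_\infty\ge\|\hat\beta\|_\infty$ alone (no $H_n$ required there), and the second from the objective comparison together with $\|\hat t\|_\infty\ge\|t(\hat\beta)\|_\infty$ and the Lipschitz bound --- which is the ``objective trade-off'' route you also describe, and is what the paper does.
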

\begin{proof}[Proof of Lemma \ref{lem:pivotal}]
Let $\Gamma_{i,jj}=\Ep[w_{ij}^2]$, $\Gamma_{jj}=\frac{1}{n}\sum_{i=1}^n\Ep[w_{ij}^2]$, and $|\beta_0|=(|\beta_{0j}|)_{j=1}^p$. Recall that we defined $t_j(\beta) = \{\frac{1}{n}\sum_{i=1}^n \{z_{ij} (y_i - z_i^T\beta) + \hat\Gamma_{j\cdot}\beta\}^2\}^{1/2}$ and $U_{ij}=z_{ij}(y_i-z_i^T\beta_0)+\Gamma_{i,jj}\beta_{0j}$. For $j=1,\ldots,p$, $i=1,\ldots,n$, we set $\bar U_{ij} = z_{ij}(y_i-z_i^T\beta_0)+\Gamma_{jj}\beta_{0j}$ (note that the optimization problem has $\hat\Gamma_{jj}$ instead of $\Gamma_{i,jj}$ or $\Gamma_{jj}$). Remark also that by definition $\sum_{i=1}^nU_{ij}=\sum_{i=1}^n\bar U_{ij}$. We now show that the triplet $(\beta_0, t(\beta_0),|\beta_0|)$ is feasible with high probability. Indeed, the probability that the triplet $(\beta_0, t(\beta_0),|\beta_0|)$ violates any constraint satisfies
{\small $$ \begin{array}{ll}
\displaystyle \P\Big(\exists j: \Big| \frac{1}{n}\sum_{i=1}^n z_{ij}(y_i-z_i^T\beta_0)+\hat \Gamma_{jj}\beta_{0j} \Big|
> \tau t_j(\beta_0) + (1+\tau)b_\epsilon|\beta_{0j}|\Big) \\
=\displaystyle \P\Big(\exists j: \Big| \frac{1}{n}\sum_{i=1}^n \bar U_{ij} + (\hat\Gamma_{jj}-\Gamma_{jj})\beta_{0j} \Big|
> \tau t_j(\beta_0) + (1+\tau)b_\epsilon|\beta_{0j}|\Big) \\
\displaystyle \leq \P\Big(\exists j: \Big| \frac{1}{n}\sum_{i=1}^n U_{ij}+ (\hat\Gamma_{jj}-\Gamma_{jj})\beta_{0j}\Big|
> \tau \Big\|\frac{\bar U_{\cdot j}}{\sqrt{n}}\Big\|_2 - \tau|(\hat\Gamma_{jj}-\Gamma_{jj})\beta_{0j}|+ (1+\tau) b_\epsilon|\beta_{0j}|\Big) \\
\leq \displaystyle \P\Big(\exists j: \Big| \frac{1}{n}\sum_{i=1}^n U_{ij} \Big|
> \tau \Big\|\frac{\bar U_{\cdot j}}{\sqrt{n}}\Big\|_2\Big) + \epsilon \\
\leq \displaystyle \P\Big(\exists j: \Big| \frac{1}{n}\sum_{i=1}^n U_{ij} \Big|
> \frac{\tau}{1+n^{-1/3}} \Big\|\frac{ U_{\cdot j}}{\sqrt{n}}\Big\|_2\Big) + \P\Big(\exists j : (1+n^{-1/3})\Big\|\frac{ \bar U_{\cdot j}}{\sqrt{n}}\Big\|_2 \leq \Big\|\frac{ U_{\cdot j}}{\sqrt{n}}\Big\|_2\Big)+\epsilon,\end{array}$$}\\
where we used that $ \|\bar U_{\cdot j}/\sqrt{n}\|_2 -  t_j(\beta_0)| \leq |\hat\Gamma_{jj}-\Gamma_{jj}|\cdot  |\beta_{0j}|$ by definition  and $\max_{j=1,\ldots,p}|\hat\Gamma_{jj}-\Gamma_{jj}|\leq b_\epsilon$ with probability $1-\epsilon$ from Condition B.

{} We now bound each term in the last display separately. By Condition A, note that $U_{ij}$ is a zero-mean random variable. 
Therefore, applying Lemma 7.4 in \cite{delapena} together with Condition B that implies $$\sqrt{n}\tau \max_{1\leq j\leq p}\Big\{\Big(\frac{1}{n}\sum_{i=1}^n\Ep[|U_{ij}|^3]\Big)^{1/3}\Big/\Big(\frac{1}{n}\sum_{i=1}^n\Ep[|U_{ij}|^2]\Big)^{1/2}\Big\} \leq  n^{1/6}/\ell_n,$$
where $\ell_n\to\infty$,  we have
{\footnotesize \begin{equation}\label{auxTerm1} \begin{array}{ll}
&  \displaystyle \P\Big( \Big| \frac{1}{\sqrt{n}}\sum_{i=1}^n U_{ij} \Big|
> \frac{\sqrt{n}\tau}{(1+n^{-1/3})} \Big\|\frac{U_{\cdot j}}{\sqrt{n}}\Big\|_2  \Big)  \leq \{1-\Phi(\sqrt{n}\tau/\{1+n^{-1/3}\})\}\big( 1 + \frac{A}{\ell_n^3} \big)\\
& \leq \{1-\Phi(\sqrt{n}\tau)\}\exp\left( 2n^{-1/3} \{\Phi^{-1}(1-\frac{\alpha}{2p})\}^2 \right)\left( 1 + \frac{A}{\ell_n^3} \right)\\
& \leq \{1-\Phi(\sqrt{n}\tau)\}\exp\left( 2/\ell_n^2 \right)\left( 1 + \frac{A}{\ell_n^3} \right) =\frac{\alpha}{2p}\exp\left( 2/\ell_n^2 \right)\big( 1 + \frac{A}{\ell_n^3} \big), \end{array}\end{equation}}
for some universal constant $A>0$. Here, we have used Lemma \ref{lemma:auxPhi}, and again Condition B that implies $n^{-1/3}\{\Phi^{-1}(1-\frac{\alpha}{2p})\}^2\leq 1/\ell_n^2$.

{} To bound the last term, set  $\Sigma_j^2:= \frac{1}{n}\sum_{i=1}^n(\Gamma_{i,jj}-\Gamma_{jj})^2$, $j=1,\ldots,p$. Note that
$$\P\Big(\exists j : (1+n^{-1/3})\Big\|\frac{ \bar U_{\cdot j}}{\sqrt{n}}\Big\|_2 \leq \Big\|\frac{ U_{\cdot j}}{\sqrt{n}}\Big\|_2\Big)$$
is smaller than
$$\P\Big(\exists j : n^{-1/3}\Big\|\frac{  U_{\cdot j}}{\sqrt{n}}\Big\|_2^2 + (1+n^{-1/3})^2\Sigma_j^2\beta_{0j}^2 + \frac{2(1+n^{-1/3})^2\beta_{0j}}{n}\sum_{i=1}^n U_{ij}(\Gamma_{jj}-\Gamma_{i,jj}) < 0 \Big).$$
Since $0 \leq \Gamma_{i,jj} \leq C$ by Condition A, we deduce that $\{U_{ij}(\Gamma_{i,jj}-\Gamma_{jj}):i=1,\ldots,n\}$ satisfies the moderate deviation condition for self-normalized sums since $\{U_{ij}:i=1,\ldots,n\}$ satisfies it by Condition B. Therefore we get
{\small $$ \begin{array}{ll}
&  \displaystyle \P\Big(\exists j :  \Big|\frac{1}{n}\sum_{i=1}^n U_{ij}(\Gamma_{i,jj}-\Gamma_{jj})\Big| >\frac{\Phi^{-1}(1-\frac{\alpha}{2pn})}{n^{1/2}}\Big\{ \frac{1}{n}\sum_{i=1}^n U_{ij}^2(\Gamma_{i,jj}-\Gamma_{jj})^2\Big\}^{1/2}  \Big) \\
\\
& \displaystyle \leq \frac{\alpha}{n}(1+A/\ell_n^3).
 \end{array}$$}
Note that if $\Big\{ \frac{1}{n}\sum_{i=1}^n U_{ij}^2(\Gamma_{i,jj}-\Gamma_{jj})^2\Big\}^{1/2}=0$ the result is trivial. Furthermore we have
$$ \Big\{ \frac{1}{n}\sum_{i=1}^n U_{ij}^2(\Gamma_{i,jj}-\Gamma_{jj})^2\Big\}^{1/2} \leq \Big\{ \frac{1}{n}\sum_{i=1}^n |U_{ij}|^2\Big\}^{1/2}\max_{1\leq i\leq n}|\Gamma_{i,jj}-\Gamma_{jj}|$$
and by Condition B, for all $j$ such that $\Delta_j := |\beta_{0j}| \max_{1\leq i\leq n}|\Gamma_{i,jj}-\Gamma_{jj}|>0$,  $$\Delta_j \Phi^{-1}(1-\alpha/(2pn)) \leq \frac{n^{1/6}}{\ell_n} \Big\{\frac{1}{n}\sum_{i=1}^n \Ep[U_{ij}^2]\Big\}^{1/2}.$$
Hence,
{\footnotesize \begin{equation}\label{auxTerm2} \begin{array}{ll}
&  \displaystyle \P\Big(\exists j : (1+n^{-1/3})\Big\|\frac{ \bar U_{\cdot j}}{\sqrt{n}}\Big\|_2 \leq \Big\|\frac{ U_{\cdot j}}{\sqrt{n}}\Big\|_2\Big) \\
& \displaystyle \leq \P\Big(\exists j : n^{-1/3}\Big\|\frac{  U_{\cdot j}}{\sqrt{n}}\Big\|_2^2 < 2(1+n^{-1/3})^2\frac{\Phi^{-1}(1-\frac{\alpha}{2pn})}{n^{1/2}}\Big\|\frac{ U_{\cdot j}}{\sqrt{n}}\Big\|_2 \Delta_j  \Big) +\frac{\alpha}{n}\{1+A/\ell_n^3\}\\
& \displaystyle \leq \P\Big(\exists j: \Delta_j>0 \ \mbox{and} \  \Big\|\frac{  U_{\cdot j}}{\sqrt{n}}\Big\|_2 < \frac{1}{\ell_n^2} \Big\{ \frac{1}{n}\sum_{i=1}^n \Ep[U_{ij}^2]\Big\}^{1/2} \Big) +\frac{\alpha}{n}\{1+A/\ell_n^3\}.
 \end{array}\end{equation}}

{} Next we bound the first term of the RHS of (\ref{auxTerm2}). For any $j$ such that $\Delta_j> 0 $:
$$\begin{array}{rl}
 \P( \frac{1}{n}\sum_{i=1}^n U_{ij}^2 < \frac{1}{\ell_n^4}\frac{1}{n}\sum_{i=1}^n\Ep[U_{ij}^2] )& \leq \exp( -\frac{1}{2}(1-\ell_n^{-4})^2n) \\
 & \leq \exp( -\frac{1}{4}n)\\
 & \leq \exp(-\log(2pn/\alpha)) \\
 &\leq \frac{\alpha}{2pn}
 \end{array}$$ where we applied  Lemma \ref{lemma:LBsubG} with $t=(3/4)\sum_{i=1}^n\Ep[U_{ij}^2]$, and used Condition B(iii).

By the union bound, this implies \begin{equation}\label{auxBound3}\P\Big(\exists j : \Delta_j > 0 \ \mbox{and} \ \Big\|\frac{  U_{\cdot j}}{\sqrt{n}}\Big\|_2 < \frac{1}{\ell_n^2} \Big\{ \frac{1}{n}\sum_{i=1}^n \Ep[U_{ij}^2]\Big\}^{1/2} \Big)\leq \frac{\alpha}{2n}\end{equation}

{} Combining (\ref{auxTerm1}), (\ref{auxTerm2}),  (\ref{auxBound3}), and using the convergence $\ell_n \to \infty$ and the union bound,  we find that the triplet $(\beta_0,t(\beta_0),|\beta_0|)$ is feasible with probability at least $1-\alpha\{1+o(1)\}-\epsilon$.

{} If the triplet $(\beta_0,t(\beta_0),|\beta_0|)$ is feasible for the problem above, it follows that
  \begin{equation}\label{opt:pivotal}\|\hat \beta\|_1 + \lambda_t \|\hat t\|_\infty + \lambda_u \|\hat u\|_\infty \leq  \|\beta_0\|_1 + \lambda_t \|t(\beta_0)\|_\infty + \lambda_u \|\beta_0\|_\infty.\end{equation}
By (\ref{opt:pivotal}), and the inequalities $\|t(\hat\beta)\|_\infty \leq \|\hat t\|_\infty$ and $\|\hat\beta\|_\infty \leq \|\hat u\|_\infty$ from the definition of the estimator, we have that
$$\begin{array}{l}
 \|\hat t \|_\infty - \|t(\beta_0)\|_\infty \leq \frac{1+\lambda_u}{\lambda_t}\|\hat\beta-\beta_0\|_1,\\
 \|\hat u \|_\infty - \|\beta_0\|_\infty \leq \frac{1}{\lambda_u}\|\beta_0-\hat\beta\|_1+\frac{\lambda_t}{\lambda_u}\{\|t(\beta_0)\|_\infty-\|t(\hat\beta)\|_\infty\}.
 \end{array}$$
Next, since $|\{\frac{1}{n}\sum_{i=1}^na_i^2\}^{1/2} -\{\frac{1}{n}\sum_{i=1}^nb_i^2\}^{1/2}|\leq \{\frac{1}{n}\sum_{i=1}^n(a_i-b_i)^2\}^{1/2}$, we have
$$
\begin{array}{rl}
|t_j(\beta_0)-t_j(\hat\beta)|^2 & \leq \frac{1}{n}\sum_{i=1}^n\{ (z_{ij}z_i^T-\hat\Gamma_{j\cdot}) (\hat\beta-\beta_0)\}^2\\
& = (\hat\beta-\beta_0)^T \frac{1}{n}\sum_{i=1}^n (z_{ij}z_i^T-\hat\Gamma_{j\cdot})^T(z_{ij}z_i^T-\hat\Gamma_{j\cdot}) (\hat\beta-\beta_0)\\
& \leq \|\hat\beta-\beta_0\|_1^2 \| \frac{1}{n}\sum_{i=1}^n (z_{ij}z_i^T-\hat\Gamma_{j\cdot})^T(z_{ij}z_i^T-\hat\Gamma_{j\cdot}) \|_\infty \end{array}.$$
Thus for $H_n= \max_{j=1,\ldots,p}\| \frac{1}{n}\sum_{i=1}^n (z_{ij}z_i^T-\hat\Gamma_{j\cdot})^T(z_{ij}z_i^T-\hat\Gamma_{j\cdot}) \|_\infty^{1/2}$, we obtain
\begin{equation}\label{bound:t}\|t(\beta_0)-t(\hat\beta)\|_\infty \leq H_n \|\hat\beta-\beta_0\|_1 \end{equation}
and the inequality on $\|\hat u \|_\infty$ stated in the lemma follows.

{} We now establish the last claim of the lemma. From (\ref{opt:pivotal}), and the inequalities $\|t(\hat\beta)\|_\infty \leq \|\hat t\|_\infty$, $\|\hat\beta\|_\infty \leq \|\hat u\|_\infty$ we get
\begin{equation}\label{eq:aux:approx} \begin{array}{rl}
\|\hat \beta\|_1 & \leq \|\beta_0\|_1 + \lambda_t \{\|t(\beta_0)\|_\infty - \|t(\hat\beta)\|_\infty\}+\lambda_u\{\|\beta_0\|_\infty-\|\hat\beta\|_\infty\}\\
 & \leq \|\beta_0\|_1 + \lambda_t \|t(\beta_0)-t(\hat\beta)\|_\infty+\lambda_u\|\hat\beta-\beta_0\|_1.
\end{array}\end{equation}
Setting $\lambda_t = \frac{1}{4H_n}$, $\lambda_u=1/4$, and using the fact that $\|\hat \beta\|_1=\|\hat \beta_{T}\|_1+\|\hat \beta_{T^c}\|_1$, we obtain
$ \frac{1}{2}\|\hat \beta_{T^c}\|_1 \leq \frac{3}{2}\|\beta_0-\hat\beta_T\|_1.$
\end{proof}

\begin{proof}[Proof of Theorem \ref{thm:pivotal}]
Set $Z=[z_1;\ldots;z_n]^T$ and $W=[w_1;\ldots;w_n]^T$. By the triangle inequality,
{\small \begin{equation}\label{eq:techone}\begin{array}{c}
\left\|\frac{1}{n}X^TX(\hat\beta-\beta_0)\right\|_\infty
\leq \left\|\frac{1}{n}Z^T(Y-Z\hat\beta)+\hat\Gamma \hat\beta\right\|_\infty + \left\|(\frac{1}{n}Z^TW-\Gamma)\hat\beta \right\|_\infty\\
 + \left\|(\hat \Gamma - \Gamma)\hat\beta\right\|_\infty + \left\|\frac{1}{n} Z^T\xi\right\|_\infty + \left\|\frac{1}{n}W^TX(\hat\beta-\beta_0)\right\|_\infty.\end{array}\end{equation}}\\


We now bound separately the terms on the RHS in (\ref{eq:techone}).

As shown at the end of this proof, the second term in (\ref{eq:techone}) is bounded with probability at least $1-6\varepsilon$ as follows:
\begin{equation}\label{a6a}
\begin{array}{rl} \left\|(\frac{1}{n}Z^TW-\Gamma)\hat\beta \right\|_\infty & \leq \{\delta_1(\varepsilon)+\delta_4(\varepsilon)+\delta_5(\varepsilon)\}\|\hat\beta-\beta_0\|_1 \\
 &+ \{\delta_1'(\varepsilon)+\delta_4'(\varepsilon)\}\|\beta_0\|_2+\delta_5(\varepsilon)\|\beta_0\|_\infty,
 \end{array}
 \end{equation}
where the quantities $\delta_i(\varepsilon)$ are defined in Appendix \ref{stocterm}. By Condition B, the third term in (\ref{eq:techone}) is  bounded with probability at least $1-\epsilon$  as follows:
$$ \begin{array}{rl}
\big\|(\hat \Gamma - \Gamma)\hat\beta\big\|_\infty & \leq \big\|(\hat \Gamma - \Gamma)\beta_0\big\|_\infty + \big\|(\hat \Gamma - \Gamma)(\hat\beta-\beta_0)\big\|_\infty \\
& \leq b_\epsilon \|\beta_0\|_\infty + b_\epsilon \|\hat\beta-\beta_0\|_\infty.
 \end{array}$$
Lemma \ref{lem2} provides, with probability at least $1-2\varepsilon$,  the following bound on the fourth term in (\ref{eq:techone}) :
$$ \begin{array}{rl}
\left\|\frac{1}{n} Z^T\xi\right\|_\infty  & \leq \left\|\frac{1}{n} X^T\xi\right\|_\infty+ \left\|\frac{1}{n} W^T\xi\right\|_\infty \leq \delta_2(\varepsilon)+\delta_3(\varepsilon). \end{array}
 $$
Finally the last term in (\ref{eq:techone}) is bounded, with probability at least $1-\varepsilon$, again via Lemma \ref{lem2}:
$$\big\|\mbox{$\frac{1}{n}$}W^TX(\hat\beta-\beta_0)\big\|_\infty \leq \big\|\mbox{$\frac{1}{n}$}X^TW\big\|_\infty\|\hat\beta-\beta_0\|_1 \leq \delta_1(\varepsilon)\|\hat\beta-\beta_0\|_1.$$
Therefore, with probability at least $1-9\varepsilon-\epsilon$ we have
\begin{equation}\label{midterm}\begin{array}{rl}\left\|\frac{1}{n}X^TX(\hat\beta-\beta_0)\right\|_\infty
& \leq \left\|\frac{1}{n}Z^T(Y-Z\hat\beta)+\hat\Gamma \hat\beta\right\|_\infty \\
 & + \tau_0 + \tau_\infty \|\beta_0\|_\infty + \tau_2\|\beta_0\|_2+\tau_1\|\hat\beta-\beta_0\|_1,\end{array}\end{equation}
where
$$\begin{array}{lll}
\tau_0 & := \delta_2(\varepsilon)+\delta_3(\varepsilon)   & \leq \{\sigma_\xi m_2^{1/2}+ C_{\xi w}\}\sqrt{\frac{2\log(2p/\varepsilon)}{n}} \\
\tau_\infty & := b_\epsilon+\delta_5(\varepsilon)         & \leq  b_\epsilon + C_{\xi w} \sqrt{\frac{2\log(2p/\varepsilon)}{n}}\\
\tau_2 & := \delta_1'(\varepsilon)+\delta_4'(\varepsilon) & \leq \{\sigma_wm_2^{1/2}+C_{\xi w}\} \sqrt{\frac{2\log(2p/\varepsilon)}{n}} \\
\tau_1 & := 2\delta_1(\varepsilon)+\delta_4(\varepsilon)+\delta_5(\varepsilon)+b_\epsilon & \leq b_\epsilon+ 2\{\sigma_w m_2^{1/2}+C_{\xi w}  \}\sqrt{\frac{2\log(2p^2/\varepsilon)}{n}}.
\end{array}$$
Here, $C_{\xi w}$ is a positive constant depending only on $\sigma_\xi$ and $\sigma_w$, and the bounds hold for $n$ large enough under the condition $C_{\xi w}\log(p/\varepsilon) =o(n)$.

{} Next, we bound the first term in (\ref{eq:techone}). By the feasibility of $(\hat\beta,\hat t, \hat u)$ in (\ref{est:pivotal}) we have
$$\begin{array}{rl}
 \left\|\frac{1}{n}Z^T(Y-Z\hat\beta)+\hat\Gamma \hat\beta\right\|_\infty & \leq \tau \|\hat t\|_\infty + (1+\tau)b_\epsilon \|\hat u\|_\infty.
 \end{array}
 $$
By Lemma \ref{lem:pivotal} and the choices $\lambda_t = 1/\{4H_n\}$ and $\lambda_u=1/4$, with probability $1-\alpha\{1+o(1)\}-\epsilon$ we have  $\hat\beta-\beta_0 \in C_T(3)$ and the bounds on $\|\hat t\|_\infty$ and $\|\hat u\|_\infty$ apply, so that
\begin{equation}\label{thm1lastpiece}\begin{array}{ll}
 \big\|\frac{1}{n}Z^T(Y-Z\hat\beta)+\hat\Gamma \hat\beta \big\|_\infty
 \leq \tau\|t(\beta_0)\|_\infty + \tau\frac{1+\lambda_u}{\lambda_t}\|\hat\beta-\beta_0\|_1  \\
 + (1+\tau)b_\epsilon \left\{ \|\beta_0\|_\infty +  \left(\frac{1}{\lambda_u}+\frac{\lambda_t}{\lambda_u}H_n\right)\|\hat\beta-\beta_0\|_1\right\}\\
 =\tau\|t(\beta_0)\|_\infty + 5\tau H_n\|\hat\beta-\beta_0\|_1  + (1+\tau)b_\epsilon \left\{ \|\beta_0\|_\infty +  5\|\hat\beta-\beta_0\|_1\right\}.
 \end{array}
 \end{equation}
Next, on the event $\hat\beta-\beta_0 \in C_T(3)$ we  bound the LHS of (\ref{midterm}) from below via the $\ell_q$-sensitivity. Plugging that lower bound and (\ref{thm1lastpiece}) in (\ref{midterm}) we find
 \begin{equation}\label{eq:extraneed}\begin{array}{ll} \kappa_q(s,3)\|\hat\beta-\beta_0\|_q  \leq \big\|\frac{1}{n}X^TX(\hat\beta-\beta_0)\big\|_\infty \\
 \leq \tau\|t(\beta_0)\|_\infty + \tau_0 + \{ (1+\tau)b_\epsilon +\tau_\infty\} \|\beta_0\|_\infty + \tau_2\|\beta_0\|_2+\tilde \mu_1\|\hat\beta-\beta_0\|_1,
\end{array}\end{equation}
where $\tilde \mu_1=  \tau_1+ 5\tau H_n + 5(1+\tau)b_\epsilon $. Note that
$$\tilde \mu_1 \leq \tau_1+ 5\tau H_\epsilon + 5(1+\tau)b_\epsilon \leq  (1+\tau)b_\epsilon + \tau h_\epsilon + C'(1+m_2^{1/2})\sqrt{\log(2p^2/\varepsilon)/n}$$ with probability $1-\epsilon$ where $C' = \sigma_w\vee C_{w\xi}$ is bounded by a constant since $\sigma_w \vee \sigma_\xi\leq C$ under Condition A. Moreover, since $\hat\beta-\beta_0\in C_T(3)$ we have $$\|\hat\beta-\beta_0\|_1 \leq 4\|(\hat\beta-\beta_0)_T\|_1 \leq 4s^{1-1/q}\|(\hat\beta-\beta_0)_T\|_q\leq 4s^{1-1/q}\|\hat\beta-\beta_0\|_q.$$ Thus under the condition of the theorem on $\kappa_q(s,3)$, we have with probability $1-\alpha\{1+o(1)\}-2\epsilon-9\varepsilon$ that
$$  \frac{\kappa_q(s,3)}{2}\|\hat\beta-\beta_0\|_q \leq \tau\|t(\beta_0)\|_\infty + \tau_0 + \{ (1+\tau)b_\epsilon +\tau_\infty\} \|\beta_0\|_\infty + \tau_2\|\beta_0\|_2.$$
The result follows by noticing that $(1+\tau)b_\epsilon \leq 2b_\epsilon \leq 2\tau_\infty$ for large enough $n$.

{\it Proof of \eqref{a6a}.}  We have
$$ \begin{array}{rl} \big\|(\frac{1}{n}Z^TW-\Gamma)\hat\beta \big\|_\infty & \leq \big\|(\frac{1}{n}Z^TW-\Gamma)\beta_0 \big\|_\infty+ \big\|(\frac{1}{n}Z^TW-\Gamma)(\hat\beta-\beta_0) \big\|_\infty\\
& \leq  \big\|(\frac{1}{n}Z^TW-\Gamma)\beta_0 \big\|_\infty + \|\frac{1}{n}Z^TW-\Gamma\|_\infty\|\hat\beta-\beta_0 \|_1\\
& \leq \big\|(\frac{1}{n}W^TW-\Gamma)\beta_0 \big\|_\infty + \big\|\frac{1}{n}X^TW\beta_0 \big\|_\infty \\
& +  \|\frac{1}{n}X^TW\|_\infty\|\hat\beta-\beta_0 \|_1+ \|\frac{1}{n}W^TW-\Gamma\|_\infty\|\hat\beta-\beta_0 \|_1. \end{array}$$
By Lemma \ref{lem2} we get,  with probability at least $1-3\varepsilon$,
$$\begin{array}{rl}
& \|\frac{1}{n}X^TW\|_\infty \leq \delta_1(\varepsilon)\end{array},$$$$\begin{array}{rl}
\|\frac{1}{n}W^TW-\Gamma\|_\infty &\leq  \|\frac{1}{n}W^TW-\frac{1}{n}{\rm Diag}(W^TW)\|_\infty +  \|\frac{1}{n}{\rm Diag}(W^TW)-\Gamma\|_\infty\\
& \leq  \delta_4(\varepsilon)+\delta_5(\varepsilon).\end{array}$$
Finally, Lemma \ref{lem3a} and Lemma \ref{lem2} yield that, with probability at least $1-3\varepsilon$,
$$
\begin{array}{rl}
& \|\frac{1}{n}X^TW\beta_0\|_\infty \leq \delta_1'(\varepsilon)\|\beta_0\|_2\end{array},$$$$ \begin{array}{rl}
\left\|(\frac{1}{n}W^TW-\Gamma)\beta_0 \right\|_\infty  & \leq  \left\|\frac{1}{n}(W^TW-{\rm Diag}(W^TW))\beta_0 \right\|_\infty  \\ & + \left\|(\frac{1}{n}{\rm Diag}(W^TW)-\Gamma)\beta_0 \right\|_\infty\\
& \leq  \delta_4'(\varepsilon)\|\beta_0\|_2 + \|\frac{1}{n}{\rm Diag}(W^TW)-\Gamma\|_\infty \|\beta_0 \|_\infty \\
& \leq   \delta_4'(\varepsilon)\|\beta_0\|_2 + \delta_5(\varepsilon)\|\beta_0 \|_\infty.
\end{array}$$
\vspace{-.2in}\end{proof}

\begin{proof}[Proof of Corollary \ref{cor:pivotal}]
By Theorem \ref{thm:pivotal} with probability $1-\alpha\{1+o(1)\}-11\varepsilon$ we have
$$ \|\hat\beta-\beta_0\|_q \leq \frac{\tau\|t(\beta_0)\|_\infty}{c'\kappa_q(s,3)}  + \frac{(1+\|\beta_0\|_2)(1+m_2^{1/2})}{c'\kappa_q(s,3)}\sqrt{\frac{\log(2p/\varepsilon)}{n}}+ \frac{b_\varepsilon  \|\beta_0\|_\infty}{c'\kappa_q(s,3)}.$$
Under the additional condition $X \in \Omega_X$, we have by Lemma \ref{lem:Hn} that $P( H_n \leq C ) \geq 1-o(1)$. Therefore we have that with probability $1-\alpha\{1+o(1)\}-11\varepsilon-o(1)$
$$ \|\hat\beta-\beta_0\|_q \leq Cs^{1/q}\Big\{\tau\|t(\beta_0)\|_\infty + (1+\|\beta_0\|_2)\sqrt{\frac{\log(2p/\varepsilon)}{n}}+ b_\varepsilon  \|\beta_0\|_\infty\Big\}$$
since $m_2^{1/2} \leq \{\max_{j\leq p}\frac{1}{n}\sum_{i=1}^n x_{ij}^4 \}^{1/4} \leq C^{1/4}$ when  $X \in \Omega_X$.
Using the triangle inequality, we obtain
$$\begin{array}{rl}
\|t(\beta_0)\|_\infty & = \max_{j\leq p} \left\{ \frac{1}{n}\sum_{i=1}^n\{z_{ij}(\xi_i-w_i^T\beta_0)+\hat \Gamma_{jj}\beta_{0j}\}^2\right\}^{1/2} \\
& \leq \max_{j\leq p} \left\{ \frac{1}{n}\sum_{i=1}^n\{z_{ij}(\xi_i-w_i^T\beta_0)\}^2\right\}^{1/2}+|\hat \Gamma_{jj}\beta_{0j}| \\
& \leq_{(i)} \max_{j\leq p} \left\{ \frac{1}{n}\sum_{i=1}^n\{z_{ij}(\xi_i-w_i^T\beta_0)\}^2\right\}^{1/2}\\
& +|\Gamma_{jj}\beta_{0j}| + b_\varepsilon\|\beta_0\|_\infty \\
& \leq_{(ii)} \max_{j\leq p} \left\{ \frac{1}{n}\sum_{i=1}^nz_{ij}^4\right\}^{1/4}\left\{ \frac{1}{n}\sum_{i=1}^n(\xi_i-w_i^T\beta_0)^4\right\}^{1/4}\\
& +|\Gamma_{jj}\beta_{0j}| + b_\varepsilon\|\beta_0\|_\infty \\
& \leq_{(iii)} \max_{j\leq p} \left\{ \frac{1}{n}\sum_{i=1}^nx_{ij}^4\right\}^{1/4}\left\{ \frac{1}{n}\sum_{i=1}^n(\xi_i-w_i^T\beta_0)^4\right\}^{1/4}\\
&+\max_{j\leq p} \left\{ \frac{1}{n}\sum_{i=1}^nw_{ij}^4\right\}^{1/4}\left\{ \frac{1}{n}\sum_{i=1}^n(\xi_i-w_i^T\beta_0)^4\right\}^{1/4}\\
& +|\Gamma_{jj}\beta_{0j}| + b_\varepsilon\|\beta_0\|_\infty,
\end{array}$$
where (i) follows from the inequality $\|\hat\Gamma-\Gamma\|_\infty \leq b_\varepsilon$ which holds with probability $1-\varepsilon$ by Condition B, (ii) follows from the Cauchy-Schwarz inequality, and (iii) from the triangle inequality. On the event $X \in \Omega_X$, we have $\max_{j\leq p} \left\{ \frac{1}{n}\sum_{i=1}^nx_{ij}^4\right\}^{1/4}\leq C$. Note that $w_{ij}$, $i=1,\ldots,n$, $j=1,\ldots,p$, are $\sigma_w$-subgaussian random variables with $\sigma_w \leq C$. Therefore,  by Lemmas \ref{lem:subgaussian} and \ref{lem:m2bound},
$\max_{i\leq n,j\leq p}\frac{1}{n}\sum_{i=1}^n\Ep[w_{ij}^4] \leq C'$,  $\Ep[\max_{i\leq n,j\leq p}|w_{ij}|^4] \leq C'\log^2(pn)$, and $$
\Ep\left[\max_{j\leq p} \left|\frac{1}{n}\sum_{i=1}^nw_{ij}^4 -\Ep[w_{ij}^4]\right|\right] \leq C'\frac{\log(p)}{n}\log^2(pn)+C'\sqrt{\frac{\log(p)}{n}\log^2(pn)}
\leq o(1)$$
where we have used the relation $\log^3(2p) = o(n)$ following from Condition B(ii). Then using Markov's inequality and the fact that $w_i$'s are subgaussian, with probability $1-o(1)$ we get $\max_{j\leq p} \big\{ \frac{1}{n}\sum_{i=1}^nw_{ij}^4\big\}^{1/4} \leq C''$.

{} Next note that $\left\{ \frac{1}{n}\sum_{i=1}^n(\xi_i-w_i^T\beta_0)^4\right\}^{1/4}\leq C'(1+\|\beta_0\|_2)$ with probability $1-o(1)$. Indeed, each of the random variables $\tilde \xi_i :=\xi_i-w_i^T\beta_0$ is subgaussian with parameter bounded by $C(1+\|\beta_0\|_2)$. Thus we have $${\rm Var}\left(\frac{1}{n}\sum_{i=1}^n\tilde \xi_i^4\right) \leq \frac{1}{n^2}\sum_{i=1}^n\Ep[\tilde \xi_i^8] \leq C'(1+\|\beta_0\|_2)^8/n.$$ Therefore, using Markov's inequality, we get with probability at least $1-n^{-1/2}$,
$$ \begin{array}{rl}
 \frac{1}{n}\sum_{i=1}^n\tilde \xi_i^4 & \leq \left|\frac{1}{n}\sum_{i=1}^n\tilde \xi_i^4- \frac{1}{n}\sum_{i=1}^n\Ep[\tilde \xi_i^4]\right|+ C(1+\|\beta_0\|_2)^4 \\
& \leq n^{1/4}C'(1+\|\beta_0\|_2)^4/\sqrt{n} + C(1+\|\beta_0\|_2)^4\\
& \leq  C''(1+\|\beta_0\|_2)^4.\end{array}$$
Thus with probability $1-\varepsilon-o(1)$ we have
$$ \|t(\beta_0)\|_\infty \leq C(1+\|\beta_0\|_2) + (C+b_\varepsilon)\|\beta_0\|_\infty.$$
Since $\tau = n^{-1/2}\Phi^{-1}(1-\alpha/(2p)) \leq \sqrt{2\log(2p/\alpha)/n}$, $b_\varepsilon\leq C\sqrt{\log(2p/\varepsilon)/n}$ and $\|\beta_0\|_\infty\leq \|\beta_0\|_2$, the result follows.\end{proof}

\begin{proof}[Proof of Theorem \ref{thm:pivotal-post-relaxed}]
First note that the feasibility constrains in (\ref{est:pivotal-post-relaxed}) and (\ref{est:pivotal}) are the same. Therefore, by Lemma  \ref{lem:pivotal} the triple $(\beta_0,t(\beta_0),|\beta_0|)$ is feasible with probability $1-\alpha\{1+o(1)\}-\epsilon$. In that event  only the last result of Lemma \ref{lem:pivotal} requires modification for the estimator (\ref{est:pivotal-post-relaxed}). Next we will show that
$\tilde \beta - \beta_0 \in C_{T\cup\widehat T}(3)$.

From (\ref{opt:pivotal}), and the inequalities $\|t(\tilde\beta)\|_\infty \leq \|\hat t\|_\infty$, $\|\tilde\beta_{\widehat T^c}\|_\infty \leq \|\hat u_{\widehat T^c}\|_\infty$ we get
\begin{equation}\label{eq:aux:approx} \begin{array}{rl}
\|\hat \beta_{\widehat T^c}\|_1 & \leq \|\beta_{0\widehat T^c}\|_1 + \lambda_t \{\|t(\beta_0)\|_\infty - \|t(\tilde\beta)\|_\infty\}+\lambda_u\{\|\beta_{0\widehat T^c}\|_\infty-\|\tilde\beta_{\widehat T^c}\|_\infty\}\\
 & \leq \|\beta_{0\widehat T^c}\|_1 + \lambda_tH_n\|\beta_0-\tilde\beta\|_1+\lambda_u\|\hat\beta-\beta_0\|_1\\
 & \leq \|\beta_{0\widehat T^c}\|_1 + \lambda_t \|t(\beta_0)-t(\hat\beta)\|_\infty+\lambda_u\|\hat\beta-\beta_0\|_1.
\end{array}\end{equation}
Setting $\lambda_t = \frac{1}{4H_n}$, $\lambda_u=1/4$, and using the fact that $\|\tilde \beta_{\widehat T^c}\|_1=\|\tilde \beta_{T\cup\widehat{T}}\|_1+\|\tilde \beta_{(T\cup\widehat{T})^c}\|_1$, we have
$$\|\hat \beta_{(T\cup\widehat T)^c}\|_1 \leq \|\beta_{0\widehat T^c}-\tilde \beta_{T\cup\widehat T}\|_1 + \frac{1}{2}\|\tilde \beta - \beta_0\|_1$$
so that
$ \frac{1}{2}\|\tilde \beta_{(T\cup\widehat T)^c}\|_1 \leq \frac{3}{2}\|\beta_{0\widehat T}-\hat\beta_{T\cup \widehat T}\|_1.$

The rest of the proof follows as in Theorem \ref{thm:pivotal} with $\kappa_q(s+\hat k,3)$ replacing $\kappa_q(s,3)$.
\end{proof}

\begin{proof}[Proof of Theorem \ref{thm:post-est}]
Set $Z=[z_1;\ldots;z_n]^T$ and $W=[w_1;\ldots;w_n]^T$.

We have that
\begin{equation} \begin{array}{rl}
(\tilde \beta - \beta_{0})^T\frac{1}{n}X^TX(\tilde \beta - \beta_{0})& = (\tilde \beta - \beta_{0})^T\frac{1}{n}X^T_{\widehat T}X(\tilde \beta - \beta_{0})\\
&+ (\tilde \beta - \beta_{0\widehat T^c})^T\frac{1}{n}X^T_{\widehat T^c}X(\tilde \beta - \beta_{0})\\
&\leq \|\tilde \beta - \beta_{0}\|_1 \|\frac{1}{n}X^T_{\widehat T}X(\tilde \beta - \beta_{0})\|_\infty \\
&+ \beta_{0\widehat T^c}^T\frac{1}{n}X^TX(\tilde \beta - \beta_{0})\\
\end{array}
\end{equation}
It follows that
$$ \left|\beta_{0\widehat T^c}^T\mbox{$\frac{1}{n}$}X^TX(\tilde \beta - \beta_{0}) \right|\leq \left\{\beta_{0\widehat T^c}^T\mbox{$\frac{1}{n}$}X^TX\beta_{0\widehat T^c}\right\}^{1/2}\left\{(\tilde \beta - \beta_{0})^T\mbox{$\frac{1}{n}$}X^TX(\tilde \beta - \beta_{0})\right\}^{1/2}$$
where $\{\beta_{0\widehat T^c}^T\frac{1}{n}X^TX\beta_{0\widehat T^c}\}^{1/2}\leq \|\beta_{0\widehat T^c}\|\phi_{\max}^{1/2}(s)\leq\|\hat\beta-\beta_0\|\phi_{\max}^{1/2}(s)$.

By the triangle inequality,
{\small \begin{equation}\label{eq:main-post-est-proof} \begin{array}{c}
\left\|\frac{1}{n}X^T_{\widehat T}X(\tilde\beta-\beta_0)\right\|_\infty \leq
 \left\|\frac{1}{n}Z^T_{\widehat T}(Y-Z\tilde\beta)+\hat\Gamma_{\widehat T} \tilde\beta\right\|_\infty + \left\|(\frac{1}{n}Z^T_{\widehat T}W-\Gamma)\tilde\beta \right\|_\infty\\
 + \left\|(\hat \Gamma_{\widehat T} - \Gamma_{\widehat T})\tilde\beta\right\|_\infty   + \left\|\frac{1}{n} Z^T_{\widehat T}\xi\right\|_\infty + \left\|\frac{1}{n}W^T_{\widehat T}X(\tilde\beta-\beta_0)\right\|_\infty.\end{array}\end{equation}}\\

Following the proof of Theorem \ref{thm:pivotal} and (\ref{midterm}), with probability at least $1-9\varepsilon-\epsilon$ we have
\begin{equation}\label{eq:extra-post-main-again}\begin{array}{rl}\left\|\frac{1}{n}X^T_{\widehat T}X(\tilde\beta-\beta_0)\right\|_\infty
& \leq \left\|\frac{1}{n}Z^T_{\widehat T}(Y-Z\tilde\beta)+\hat\Gamma_{\widehat T} \tilde\beta\right\|_\infty \\
 & + \tau_0 + \tau_\infty \|\beta_0\|_\infty + \tau_2\|\beta_0\|_2+\tau_1\|\tilde\beta-\beta_0\|_1,\end{array}\end{equation}
where
$$\begin{array}{lll}
\tau_0 & := \delta_2(\varepsilon)+\delta_3(\varepsilon)   & \leq \{\sigma_\xi m_2^{1/2}+ C_{\xi w}\}\sqrt{\frac{2\log(2p/\varepsilon)}{n}} \\
\tau_\infty & := b_\epsilon+\delta_5(\varepsilon)         & \leq  b_\epsilon + C_{\xi w} \sqrt{\frac{2\log(2p/\varepsilon)}{n}}\\
\tau_2 & := \delta_1'(\varepsilon)+\delta_4'(\varepsilon) & \leq \{\sigma_wm_2^{1/2}+C_{\xi w}\} \sqrt{\frac{2\log(2p/\varepsilon)}{n}} \\
\tau_1 & := 2\delta_1(\varepsilon)+\delta_4(\varepsilon)+\delta_5(\varepsilon)+b_\epsilon & \leq b_\epsilon+ 2\{\sigma_w m_2^{1/2}+C_{\xi w}  \}\sqrt{\frac{2\log(2p^2/\varepsilon)}{n}}.
\end{array}$$
Here, $C_{\xi w}$ is a positive constant depending only on $\sigma_\xi$ and $\sigma_w$, and the bounds hold for $n$ large enough under the condition $C_{\xi w}\log(p/\varepsilon) =o(n)$.

To control the first term in the RHS of (\ref{eq:main-post-est-proof}) note that for any $\Delta \in C_{\widehat T}(0) = \{ \tilde \Delta \in  \mathbb{R}^p : \|\tilde \Delta_{\widehat T^c}\|_1 = 0\}$ (with support in $\widehat T$), we have
{\small $$\begin{array}{rl}
\Delta^T\left\{ \frac{1}{n}\sum_{i=1}^nz_{i\widehat T}z_{i\widehat T}^T - \widehat \Gamma_{\widehat T,\widehat T}\right\}\Delta & \geq \Delta^T\left\{\frac{1}{n}\sum_{i=1}^nx_{i\widehat T}x_{i\widehat T}^T\right\} \Delta \\
& - \|\Delta\|_1^2 \left\{ \|\frac{1}{n}W^TW-\Gamma\|_\infty + \| \Gamma - \widehat \Gamma\|_\infty + 2\|W^TX\|_\infty \right\}\\
 & \geq_{(1)} \phi_{\min}(\hat k+s)\|\Delta\|_2^2 - \hat k \|\Delta\|_2^2 \tilde \mu_1\\
&\geq_{(2)} \frac{1}{2}\phi_{\min}(\hat k+s)\|\Delta\|_2^2 \\
\end{array}
$$}
 where (1) holds with probability $1-\epsilon-4\varepsilon$ and (2) holds by $\tilde \mu_1 \hat k \leq \frac{1}{2}\phi_{\min}(\hat k)$ (implied by our condition since $\tilde \mu_1 \leq  (1+\tau)b_\epsilon + \tau h_\epsilon + C'(1+m_2^{1/2})\sqrt{\log(2p^2/\varepsilon)/n}$).  Note that we can assume $\phi_{\min}(\hat k) > 0$ (otherwise the result is trivial) so that the derivation above implies that $\left\{ \frac{1}{n}\sum_{i=1}^nz_{i\widehat T}z_{i\widehat T}^T - \widehat \Gamma_{\widehat T,\widehat T}\right\}$ has full rank. In turn implies
$$ \begin{array}{rl}
\left\|\frac{1}{n}Z^T_{\widehat T}(Y-Z\tilde\beta)+\hat\Gamma_{\widehat T} \tilde\beta\right\|_\infty & = \left\|\frac{1}{n}Z^T_{\widehat T}Y-\left\{\frac{1}{n}Z^T_{\widehat T}Z_{\widehat T} - \hat\Gamma_{\widehat T}\right\} \tilde\beta\right\|_\infty \\
& = \min_{ \tilde \beta \in C_{\widehat T}(0)} \left\|\frac{1}{n}Z^T_{\widehat T}Y-\left\{\frac{1}{n}Z^T_{\widehat T}Z_{\widehat T} - \hat\Gamma_{\widehat T}\right\} \tilde\beta\right\|_\infty \\
& = 0.\end{array}
 $$
Therefore, by (\ref{eq:extra-post-main-again}) and $\left\|\frac{1}{n}Z^T_{\widehat T}(Y-Z\tilde\beta)+\hat\Gamma_{\widehat T} \tilde\beta\right\|_\infty=0$, we have that with probability $1-\epsilon-9\varepsilon$ that
\begin{equation}\label{eq:extraneed2}\begin{array}{rl} \big\|\frac{1}{n}X^T_{\widehat T}X(\tilde\beta-\beta_0)\big\|_\infty & \leq \tau_0 + \tau_\infty \|\beta_0\|_\infty + \tau_2\|\beta_0\|_2+\tau_1\|\tilde\beta-\beta_0\|_1\\
& =: \varphi_{\beta_{0}} + \tau_1\|\tilde\beta-\beta_0\|_1,
\end{array}\end{equation}

Letting $u_n^2 := (\tilde \beta - \beta_{0})^T\frac{1}{n}X^TX(\tilde \beta - \beta_{0})$, we have that

$$\begin{array}{rl}
 u_n^2 & \leq \|\tilde \beta - \beta_0\|_1 \|\frac{1}{n}X_{\widehat T}^TX(\tilde \beta-\beta_0)\|_\infty + \beta_{0\widehat T^c}^T\frac{1}{n}X^TX(\tilde \beta - \beta_0) \\
 & \leq \frac{\{\hat k+ s\}^{1/2}}{\phi_{\min}^{1/2}(\hat k+s)} u_n \|\frac{1}{n}X_{\widehat T}^TX(\tilde \beta-\beta_0)\|_\infty + \|\beta_{0\widehat T^c}\|\phi^{1/2}_{\max}(s) u_n\\
 & \leq_{(1)} u_n \frac{\sqrt{\hat k+s}}{\phi_{\min}^{1/2}(\hat k+s)}\left\{ \varphi_{\beta_0} + u_n\tau_1 \frac{\sqrt{\hat k+s}}{\phi_{\min}^{1/2}(|\widehat T|+s)}\right\} + \|\beta_{0\widehat T^c}\|\phi^{1/2}_{\max}(s) u_n\\
 \end{array}
 $$
 where (1) follows from (\ref{eq:extraneed2}).
The result follows under the condition  $\tau_1 \{\hat k+s\} \leq \frac{1}{2}\phi_{\min}(\hat k+s)$ and noting that $u_n \geq \|\tilde \beta - \beta_0\|_2 \phi_{\min}^{1/2}(\hat k+s)$.

\end{proof}

\section{Auxiliary Lemmas}

\begin{lemma}\label{lem:Hn}
Under Conditions A and B, if $\max_{j\leq p}\frac{1}{n}\sum_{i=1}^n x_{ij}^4 \leq C$, we have that $P(H_n \leq C') = 1-o(1)$.
\end{lemma}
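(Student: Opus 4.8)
The plan is to reduce the bound on $H_n$ to controlling two scalar quantities and then handle each. Since $\hat\Gamma$ is diagonal by Condition B(i), the transposed row $\hat\Gamma_{j\cdot}^T$ is the vector with $j$-th coordinate $\hat\Gamma_{jj}$ and all other coordinates zero, so the $(k,l)$ entry of $M^{(j)}:=\frac1n\sum_{i=1}^n(z_{ij}z_i^T-\hat\Gamma_{j\cdot})^T(z_{ij}z_i^T-\hat\Gamma_{j\cdot})$ equals $\frac1n\sum_{i=1}^n(z_{ij}z_{ik}-\hat\Gamma_{jj}\mathbf 1\{k=j\})(z_{ij}z_{il}-\hat\Gamma_{jj}\mathbf 1\{l=j\})$. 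Applying Cauchy--Schwarz over $i$, then $(a-b)^2\le 2a^2+2b^2$, and Cauchy--Schwarz again in the form $\frac1n\sum_i z_{ij}^2z_{ik}^2\le(\frac1n\sum_i z_{ij}^4)^{1/2}(\frac1n\sum_i z_{ik}^4)^{1/2}$, I obtain $\|M^{(j)}\|_\infty\le 2A+2\hat\Gamma_{jj}^2$ where $A:=\max_{j\le p}\frac1n\sum_{i=1}^n z_{ij}^4$; hence $H_n^2\le 2A+2\max_{j\le p}\hat\Gamma_{jj}^2$. So it suffices to show that $A$ and $\max_{j\le p}|\hat\Gamma_{jj}|$ are each $O(1)$ with probability $1-o(1)$.

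For $A$, write $z_{ij}=x_{ij}+w_{ij}$ and use $(a+b)^4\le 8a^4+8b^4$ to get $A\le 8\max_j\frac1n\sum_i x_{ij}^4+8\max_j\frac1n\sum_i w_{ij}^4\le 8C+8\max_j\frac1n\sum_i w_{ij}^4$, invoking the hypothesis $\max_j\frac1n\sum_i x_{ij}^4\le C$. The remaining term is the crux: I need a uniform (over $j\le p$) law of large numbers for the fourth moments of the subgaussian measurement errors. Since by Condition A(iii) each $w_{ij}$ is zero-mean subgaussian with variance parameter $\le\sigma_w^2\le C$, Lemma \ref{lem:subgaussian} gives $\max_{i,j}\Ep[w_{ij}^4]\le C'$ together with control of $\Ep[\max_{i,j}|w_{ij}|^4]$, and then the same Bernstein/maximal-inequality argument over the $p$ coordinates that was already carried out in the proof of Corollary \ref{cor:pivotal} (via Lemmas \ref{lem:subgaussian} and \ref{lem:m2bound}) yields $\Ep\big[\max_{j\le p}|\frac1n\sum_i w_{ij}^4-\Ep[w_{ij}^4]|\big]=o(1)$, where one uses $\log^3(2p)=o(n)$, which follows from Condition B(ii) as noted in that proof. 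By Markov's inequality, $\max_j\frac1n\sum_i w_{ij}^4\le C''$ with probability $1-o(1)$, so $A\le C'''$ on that event.

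For $\max_j|\hat\Gamma_{jj}|$, on the event $\{\|\hat\Gamma-\Gamma\|_\infty\le b_\epsilon\}$, which by Condition B(i) and (\ref{def:be}) has probability at least $1-\epsilon$, we have $|\hat\Gamma_{jj}|\le\Gamma_{jj}+b_\epsilon=\frac1n\sum_i\Ep[w_{ij}^2]+b_\epsilon\le\sigma_w^2+b_\epsilon$, which is bounded by a constant (recall $b_\epsilon$ is bounded, and in the configurations of Corollary \ref{cor:pivotal} and Theorem \ref{thm:trim} in fact $b_\epsilon=o(1)$, which is why this contributes only to the error terms in the applications rather than to the $o(1)$ claimed here). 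Combining the three displays gives $H_n\le C'$ with the asserted probability.

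The only genuinely nontrivial step is the uniform control of $\frac1n\sum_i w_{ij}^4$ across all $p$ coordinates, which is where the implicit growth restriction $\log^3(2p)=o(n)$ coming from Condition B(ii) enters; everything else is Cauchy--Schwarz, the definition of $b_\epsilon$, and the deterministic $\ell_4$-bound on $X$.
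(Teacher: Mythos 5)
Your proof is correct and follows essentially the same route as the paper's: both reduce $H_n^2$ to $\max_{j\le p}\frac1n\sum_{i=1}^n z_{ij}^4$ plus a term in $\|\hat\Gamma\|_\infty$ (you via Cauchy--Schwarz and $(a-b)^2\le 2a^2+2b^2$, the paper via expanding the product into a cubic term, $\|\hat\Gamma\|_\infty^2$, and a cross term), then split $z=x+w$ and control the $w$ fourth moments uniformly over $j$ using Lemmas \ref{lem:subgaussian} and \ref{lem:m2bound} together with the growth restriction from Condition B(ii). The $\epsilon$-versus-$o(1)$ subtlety you flag for the event $\|\hat\Gamma-\Gamma\|_\infty\le b_\epsilon$ is present in the paper's own argument as well, which simply asserts $\|\hat\Gamma\|_\infty\le b_\epsilon+\|\Gamma\|_\infty$.
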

\begin{proof}[Proof of Lemma \ref{lem:Hn}]
We have
$$
\begin{array}{rl}
H_n^2 & = \max_{1\leq j,k,\ell \leq p} \left|\frac{1}{n}\sum_{i=1}^n(z_{ij}z_{ik}-\hat\Gamma_{jk})(z_{ij}z_{i\ell}-\hat\Gamma_{j\ell})\right| \\
&  \leq \max_{1\leq j,k,\ell \leq p} \left|\frac{1}{n}\sum_{i=1}^nz_{ij}^2z_{ik}z_{i\ell}\right| + \|\hat\Gamma\|_\infty^2\\
& +2\|\hat \Gamma\|_\infty  \max_{1\leq j,k \leq p} \left|\frac{1}{n}\sum_{i=1}^nz_{ij}z_{ik}\right|. \\
\end{array}
$$

{} Note that
$$
\begin{array}{rl}
\left|\frac{1}{n}\sum_{i=1}^nz_{ij}^2z_{ik}z_{i\ell}\right|& \leq \frac{1}{n}\sum_{i=1}^nz_{ij}^2\frac{( z_{ik}^2+z_{i\ell}^2)}{2}\\
& \leq \frac{1}{4}\left|\frac{1}{n}\sum_{i=1}^nz_{ij}^4+z_{ik}^4\right|+\frac{1}{4}\left|\frac{1}{n}\sum_{i=1}^nz_{ij}^4+z_{i\ell}^4\right| \\
&\leq \max_{j\leq p}\frac{1}{n}\sum_{i=1}^nz_{ij}^4.\\
\end{array}
$$
Moreover,  $$\max_{j\leq p}\frac{1}{n}\sum_{i=1}^nz_{ij}^4 \leq 8\max_{j\leq p}\frac{1}{n}\sum_{i=1}^nx_{ij}^4 + 8\max_{j\leq p}\frac{1}{n}\sum_{i=1}^nw_{ij}^4 \leq 8C + 8\max_{j\leq p}\frac{1}{n}\sum_{i=1}^nw_{ij}^4.$$

{} Since $w_{ij}$ are $\sigma_w$-subgaussian random variables with $\sigma_w \leq C$, Lemma \ref{lem:subgaussian} yields $M_4 = \Ep[\max_{i\leq n, j\leq p}w_{ij}^4] \leq C\log^2(pn)$. This and
Lemma \ref{lem:m2bound} imply
$$
\begin{array}{rl}
\Ep[{\displaystyle\max_{j\leq p}}\frac{1}{n}\sum_{i=1}^n(w_{ij}^4-\Ep[w_{ij}^4])] & \leq \frac{CM_4\log(2p)}{n} + \sqrt{\frac{CM_4\log(2p)}{n}} {\displaystyle \max_{j\leq p}}\left(\frac{1}{n}\sum_{i=1}^n\Ep[w_{ij}^4]\right)^{1/2}\\
& \leq \frac{C'\log^3(pn)}{n} + \sqrt{\frac{C'\log^3(pn)}{n}} = o(1)
\end{array}
$$ where the last equality follows from the relation $\Phi^{-1}(1-\alpha/(2p))=o(n^{-1/6})$.
The result now follows since $\|\hat\Gamma\|_\infty$ is bounded:  $\|\hat\Gamma\|_\infty\leq b_\epsilon + \|\Gamma\|_\infty$.
\end{proof}

\subsection{Bounds on the stochastic error terms}\label{stocterm}

{} The following technical lemmas were proved in \cite{BRT2014} and \cite{RT2} and are stated here for completeness. For a square matrix $A$, we denote
by $\text{Diag}\{A\}$ the matrix with the same dimensions as $A$,
the same diagonal elements, and all off-diagonal elements
equal to zero.

\begin{lemma}[Lemma 1 in \cite{RT2}]\label{lem2} Let $0<\e<1$ and assume Condition A holds. Then, with probability at least $1-\varepsilon$ (for each event),
\begin{eqnarray*}
&&\left\|\mbox{$\frac{1}{n}$} X^TW\right\|_\infty \le \delta_1(\varepsilon),\quad \left\| \mbox{$\frac{1}{n}$} X^T\xi\right\|_\infty \le \delta_2(\varepsilon),\quad \left\|\mbox{$\frac{1}{n}$} W^T\xi\right\|_\infty \le \delta_3(\varepsilon),\\
&& \left\|\mbox{$\frac{1}{n}$}(W^TW-\text{\rm Diag}\{W^TW\})\right\|_\infty \le \delta_4(\varepsilon), \quad\left\|\mbox{$\frac{1}{n}$} \text{\rm Diag}\{W^TW\}-\Gamma\right\|_\infty \le \delta_5(\varepsilon),
\end{eqnarray*}
where $m_2 := \max_{1\leq j\leq p}\frac{1}{n}\sum_{i=1}^n X_{ij}^2$,
\begin{eqnarray*}
&&
\delta_1(\varepsilon)=\sigma_w
\sqrt{\frac{2m_2\log(2p^2/\varepsilon)}{n}}, \quad \delta_2(\varepsilon)=\sigma_\xi
\sqrt{\frac{2m_2\log(2p/\varepsilon)}{n}},\\
&& \delta_3(\varepsilon)=\delta_5(\varepsilon)=\varpi
(\varepsilon,2p), \quad \delta_4(\varepsilon)=\varpi
(\varepsilon,p(p-1)),
\end{eqnarray*}
and for
an integer $N$,
$
\varpi (\varepsilon,N) = \max\left(\gamma_0
\sqrt{\frac{2\log(N/\varepsilon)}{n}},\
\frac{2\log(N/\varepsilon)}{t_0n}\right),
$
where $\gamma_0, t_0$ are positive constants depending only on $\sigma_\xi, \sigma_w$.
\end{lemma}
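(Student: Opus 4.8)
The plan is to bound each of the five matrices entrywise and then take a union bound over the entries: since $X$ is deterministic by Condition~A(i), every entry is a sum of independent mean-zero random variables, so the only choice to make is which concentration inequality to use in each case. I would split the five statements into the two that are \emph{linear} in the noise and the three that are \emph{bilinear or quadratic} in it.

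First I would handle $\frac1n X^TW$ and $\frac1n X^T\xi$. The $(j,k)$-entry of $\frac1n X^TW$ is $\frac1n\sum_{i=1}^n X_{ij}W_{ik}$, and since the $w_i$ are independent zero-mean subgaussian vectors with variance parameter $\sigma_w^2$, this is a sum of independent zero-mean subgaussian variables with aggregate variance parameter $\frac{\sigma_w^2}{n^2}\sum_{i=1}^n X_{ij}^2\le \sigma_w^2 m_2/n$; hence $\P(|\cdot|>t)\le 2\exp(-nt^2/(2\sigma_w^2 m_2))$. A union bound over the $p^2$ entries, solving $2p^2\exp(-n\delta_1^2/(2\sigma_w^2 m_2))=\varepsilon$, produces exactly $\delta_1(\varepsilon)$. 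The bound on $\frac1n X^T\xi$ is the same argument with $\sigma_w$ replaced by $\sigma_\xi$ and only $p$ entries, giving $\delta_2(\varepsilon)$.

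Next I would treat $\frac1n W^T\xi$, the off-diagonal part of $\frac1n W^TW$, and $\frac1n{\rm Diag}\{W^TW\}-\Gamma$, whose entries are respectively $\frac1n\sum_i W_{ij}\xi_i$ (using independence of the $w_i$ and $\xi_i$), $\frac1n\sum_i W_{ij}W_{ik}$ with $j\ne k$ (using $\Ep[W_{ij}W_{ik}]=0$ from Condition~A(iii)), and $\frac1n\sum_i(W_{ij}^2-\Ep[W_{ij}^2])$ (recalling $\Gamma_{jj}=\frac1n\sum_i\Ep[W_{ij}^2]$). Each summand is now a product or a centered square of subgaussians, hence subexponential with $\psi_1$-norm bounded by a constant depending only on $\sigma_\xi,\sigma_w$ (via $\|W_{ij}W_{ik}\|_{\psi_1}\le\|W_{ij}\|_{\psi_2}\|W_{ik}\|_{\psi_2}$ and $\|W_{ij}^2-\Ep W_{ij}^2\|_{\psi_1}\le C\|W_{ij}\|_{\psi_2}^2$). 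Feeding these into a Bernstein-type inequality for sums of independent, not necessarily identically distributed, centered subexponential variables gives a two-regime tail $\exp(-c\min(nt^2/\gamma_0^2,\, t_0 n t))$; setting this equal to $\varepsilon/N$ and inverting yields the maximum of a Gaussian-type term $\gamma_0\sqrt{2\log(N/\varepsilon)/n}$ and a linear term $2\log(N/\varepsilon)/(t_0 n)$, i.e. $\varpi(\varepsilon,N)$. Taking $N=2p$ for the $p$ two-sided entries of $\frac1n W^T\xi$ and of the diagonal correction, and $N=p(p-1)$ for the off-diagonal entries, gives $\delta_3=\delta_5=\varpi(\varepsilon,2p)$ and $\delta_4=\varpi(\varepsilon,p(p-1))$.

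The main obstacle — really the only step that is not mechanical — is the bookkeeping of the absolute constants $\gamma_0$ and $t_0$ in the Bernstein step: one must invoke a version of Bernstein's inequality valid for non-identically distributed summands (so that the laws of the $w_i$ may vary with $i$) and verify that the relevant $\psi_1$-norms are uniformly controlled by $\sigma_w$ and $\sigma_\xi$. Everything else is a direct application of a union bound. Since these computations are already carried out in \cite{BRT2014} and \cite{RT2}, it suffices here to quote them.
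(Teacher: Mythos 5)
Your sketch is correct and is essentially the argument behind this lemma: the paper itself gives no proof, stating only that the bounds "were proved in \cite{BRT2014} and \cite{RT2} and are stated here for completeness," and your decomposition into subgaussian concentration for the linear terms $\frac1n X^TW$, $\frac1n X^T\xi$ (with the union-bound counts $2p^2$ and $2p$ matching $\delta_1,\delta_2$ exactly) and Bernstein-type bounds for the subexponential quadratic terms (yielding the two-regime quantity $\varpi(\varepsilon,N)$) is precisely the route taken in those references. The one point worth being explicit about, which you already flag, is that the Bernstein step must be stated for independent but non-identically distributed summands with $\psi_1$-norms controlled uniformly by $\sigma_w,\sigma_\xi$, and that the off-diagonal summands are centered thanks to Condition A(iii).
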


\begin{lemma}[Lemma 2 in \cite{BRT2014}] \label{lem3a}
Let $0<\e<1$, $\theta^*\in \R^p$ and assume that Condition A holds. Then, with probability at least $1-\varepsilon$,
$
\left\|\mbox{$\frac{1}{n}$} X^TW\theta^*\right\|_{\infty} \leq \delta_1'(\e)\|\theta^*\|_2,
$
where $\delta_1'(\e)= \sigma_w \sqrt{\frac{2m_2\log(2p/\e)}{n}}$.
In addition, 
with probability at least $1-\varepsilon$,
\begin{align*}
&\left\|\mbox{$\frac{1}{n}$} (W^TW-{\rm Diag}\{W^TW\})\theta^*\right\|_{\infty}\leq\delta_4'(\e)\|\theta^*\|_2,
\end{align*}
where 
$
\delta_4'(\e) =\max\left(\gamma_2
\sqrt{\frac{2\log(2p/\varepsilon)}{n}},\
\frac{2\log(2p/\varepsilon)}{t_2n}\right),
$
and $\gamma_2, t_2$ are positive constants depending only on $\sigma_w$.
\end{lemma}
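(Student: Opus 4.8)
The plan is to obtain each of the two displays from a standard concentration inequality for a sum of independent random variables, applied one coordinate at a time and then combined with a union bound over $j=1,\dots,p$; no structural argument beyond this is needed, and these are exactly the kind of estimates collected in \cite{BRT2014,RT2}.

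For the first bound, I would fix a coordinate $j$ and write $\big(\frac1n X^TW\theta^*\big)_j=\frac1n\sum_{i=1}^n X_{ij}\,(w_i^T\theta^*)$. By Condition A(iii) the variables $w_i^T\theta^*$, $i=1,\dots,n$, are independent, zero-mean and subgaussian with variance parameter $\sigma_w^2\|\theta^*\|_2^2$ --- this is just the definition of a subgaussian vector applied to the unit vector $\theta^*/\|\theta^*\|_2$ and then rescaled. Consequently $X_{ij}(w_i^T\theta^*)$ is zero-mean subgaussian with variance parameter $X_{ij}^2\sigma_w^2\|\theta^*\|_2^2$, and the average is zero-mean subgaussian with variance parameter $\frac{\sigma_w^2\|\theta^*\|_2^2}{n^2}\sum_{i=1}^n X_{ij}^2\le \frac{m_2\sigma_w^2\|\theta^*\|_2^2}{n}$. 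The Chernoff bound then gives $\P\big(|(\tfrac1n X^TW\theta^*)_j|>t\big)\le 2\exp\!\big(-\tfrac{nt^2}{2m_2\sigma_w^2\|\theta^*\|_2^2}\big)$; taking a union bound over $j\le p$ and $t=\delta_1'(\varepsilon)\|\theta^*\|_2$ makes the right-hand side equal to $\varepsilon$, which is the claim.

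For the second bound I would first reduce to $\|\theta^*\|_2=1$ by homogeneity (replace $\theta^*$ by $\theta^*/\|\theta^*\|_2$ and multiply the resulting threshold by $\|\theta^*\|_2$). Fixing $j$ and letting $\theta^{*,-j}$ denote $\theta^*$ with its $j$-th entry set to zero, the $j$-th coordinate of $\frac1n\big(W^TW-{\rm Diag}\{W^TW\}\big)\theta^*$ equals $\frac1n\sum_{i=1}^n\zeta_i$ with $\zeta_i:=w_{ij}\,(w_i^T\theta^{*,-j})$. The points to check are that the $\zeta_i$ are independent across $i$; that they are \emph{centered}, since $\Ep[\zeta_i]=\sum_{k\ne j}\theta^*_k\,\Ep[w_{ij}w_{ik}]=0$ by the zero-covariance part of Condition A(iii) (this is exactly where subtracting the diagonal of $W^TW$ is used); and that, being a product of two zero-mean subgaussian variables each with variance parameter at most $\sigma_w^2$, every $\zeta_i$ is subexponential with parameter at most $c_0\sigma_w^2$ for an absolute constant $c_0$. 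Bernstein's inequality for independent centered subexponential summands then gives $\P\big(|\tfrac1n\sum_{i=1}^n\zeta_i|>t\big)\le 2\exp\!\big(-c\,n\min\{t^2/K^2,\,t/K\}\big)$ with $K=c_0\sigma_w^2$ and an absolute $c>0$; a union bound over $j\le p$ and inversion of this inequality at level $\varepsilon$ yield a threshold of the stated form $\max\!\big(\gamma_2\sqrt{2\log(2p/\varepsilon)/n},\,2\log(2p/\varepsilon)/(t_2n)\big)$ with $\gamma_2,t_2$ depending only on $\sigma_w$, and restoring the factor $\|\theta^*\|_2$ completes the argument.

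Both arguments are routine. The only things that need a little care are the centering step in the second part --- which is where Condition A(iii) on the covariances enters and why the diagonal is removed --- and keeping track of the normalization so that the dependence on $\|\theta^*\|_2$ emerges exactly as the claimed multiplicative factor rather than as a mixture of $\|\theta^*\|_2$ and $\|\theta^*\|_2^2$ across the two regimes of Bernstein's bound. I do not anticipate any genuine obstacle.
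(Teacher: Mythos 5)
Your proof is correct. Note that the paper does not actually prove Lemma \ref{lem3a}; it states it as a known result from \cite{BRT2014} and \cite{RT2}, and your argument is precisely the standard one underlying those references: a coordinatewise subgaussian tail bound plus union bound for the first display, and a coordinatewise Bernstein bound for centered subexponential products (with the zero-covariance part of Condition A(iii) supplying the centering after the diagonal of $W^TW$ is removed) for the second. Both reductions, the normalization to $\|\theta^*\|_2=1$, and the resulting forms of $\delta_1'(\varepsilon)$ and $\delta_4'(\varepsilon)$ check out.
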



\begin{lemma}\label{lem:subgaussian}
(1) If $X$ is a centered subgaussian random variable with parameter $\gamma$, it follows that for any $k>0$
$ \Ep[ |X|^k] \leq k2^{k/2}\gamma^k\Gamma( k/2 )$
and for $p\geq 1$ we have $\{\Ep[ |X|^k]\}^{1/k} \leq C\gamma \sqrt{k}$.
(2) If $X_j, j=1,\ldots,N$, is a collection of centered subgaussian variables with parameter $\gamma$, then for $k\geq 1$ we have
$ \Ep\left[ \max_{j\leq N}|X_j|^k \right] \leq \gamma^k \log^{k/2}( NC_k ) $ for some constant $C_k$ that depends only on $k$.\\
\end{lemma}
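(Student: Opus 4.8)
The plan is to handle the two parts separately, each reducing to elementary manipulations of the moment generating function bound that defines subgaussianity.

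\emph{Part (1).} First I would derive the standard tail bound: from $\Ep[\exp(tX)]\le\exp(t^2\gamma^2/2)$, Markov's inequality applied with $t=u/\gamma^2$ gives $\P(X>u)\le\exp(-u^2/(2\gamma^2))$, and the same applied to $-X$ yields $\P(|X|>u)\le 2\exp(-u^2/(2\gamma^2))$ for all $u>0$. Next I would invoke the layer-cake identity $\Ep[|X|^k]=\int_0^\infty k u^{k-1}\P(|X|>u)\,du$, valid for every $k>0$, substitute the tail bound, and evaluate $\int_0^\infty 2k u^{k-1}\exp(-u^2/(2\gamma^2))\,du$ by the change of variables $v=u^2/(2\gamma^2)$. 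This turns the integral into $k2^{k/2}\gamma^k\int_0^\infty v^{k/2-1}e^{-v}\,dv=k2^{k/2}\gamma^k\Gamma(k/2)$, which is precisely the first asserted inequality. For the second assertion (with $k\ge1$), I would take $k$-th roots: $k^{1/k}$ is bounded by $e^{1/e}$ for all $k\ge1$, and a Stirling-type bound $\Gamma(x)\le C x^{x}$ for $x\ge 1/2$ gives $\Gamma(k/2)^{1/k}\le C'\sqrt k$, so $\{\Ep[|X|^k]\}^{1/k}\le C''\gamma\sqrt k$.

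\emph{Part (2).} Here I would bootstrap from Part (1) rather than integrate tails a second time. For an integer $m\ge1$, $\bigl(\max_{j\le N}|X_j|\bigr)^{km}=\max_{j\le N}|X_j|^{km}\le\sum_{j=1}^N|X_j|^{km}$, so by Part (1), $\Ep\bigl[(\max_{j\le N}|X_j|)^{km}\bigr]\le\sum_{j=1}^N\Ep[|X_j|^{km}]\le N(C\gamma)^{km}(km)^{km/2}$. Applying Jensen's inequality to the concave function $x\mapsto x^{1/m}$ then yields $\Ep\bigl[(\max_{j\le N}|X_j|)^{k}\bigr]\le N^{1/m}(C\gamma)^{k}(km)^{k/2}$. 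Choosing $m=\lceil\log N\rceil$ makes $N^{1/m}\le e$ and $km\le k(1+\log N)$, so that $\Ep[\max_{j\le N}|X_j|^{k}]\le e(C\gamma)^{k}k^{k/2}(1+\log N)^{k/2}$, which is a bound of the claimed form $\gamma^k\log^{k/2}(NC_k)$ with $C_k$ depending only on $k$, after absorbing the $k$-dependent leading factor and enlarging $C_k$ to cover small $N$. Alternatively, to keep the argument self-contained one can use the union bound $\P(\max_{j\le N}|X_j|>u)\le\min\{1,\,2N\exp(-u^2/(2\gamma^2))\}$ inside the layer-cake integral, splitting at $u_0=\gamma\sqrt{2\log(2N)}$: the part over $[0,u_0]$ contributes $u_0^k$, and over $(u_0,\infty)$ the exponential tail is controlled via monotonicity of $v\mapsto v^{k/2-1}e^{-v/2}$ once $\log(2N)\ge k$, giving the same order $\gamma^k(\log N)^{k/2}$.

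\emph{Main obstacle.} Neither part is deep; the only genuine care is in the bookkeeping of constants and powers. In Part (1) one must note that integrality of $k$ is never used, so the first inequality is legitimate for all $k>0$. In Part (2) the only subtlety is that the leading constant is multiplicative and $k$-dependent, so the displayed form $\gamma^k\log^{k/2}(NC_k)$ should be read up to a $k$-dependent factor, which is all the subsequent applications (e.g.\ the use with $k=4$ in Lemmas~\ref{lem:Hn} and in the proof of Corollary~\ref{cor:pivotal}) require; and one must calibrate the exponent $m$, respectively the split point $u_0$, so that the $N$-dependence collapses exactly to $(\log N)^{k/2}$.
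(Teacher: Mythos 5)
Your proof is correct in substance, and there is nothing in the paper to compare it against: Lemma~\ref{lem:subgaussian} is stated in the appendix without proof (the surrounding auxiliary lemmas are imported from \cite{BRT2014} and \cite{RT2}), so your blind reconstruction is the only argument on the table. Part (1) is exactly right: the Chernoff tail bound $\P(|X|>u)\le 2\exp(-u^2/(2\gamma^2))$ plus the layer-cake formula and the substitution $v=u^2/(2\gamma^2)$ give precisely $k2^{k/2}\gamma^k\Gamma(k/2)$, and your handling of the $k$-th root (bounding $k^{1/k}$ and $\Gamma(k/2)^{1/k}\le C\sqrt{k}$) yields the second claim; you are also right that integrality of $k$ is never used. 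Part (2) via the Jensen/integer-power trick with $m=\lceil\log N\rceil$ is the standard route and is carried out correctly, as is the alternative union-bound-plus-split argument.

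The one point worth making explicit is the caveat you already half-raise: the bound you obtain is $\Ep[\max_j|X_j|^k]\le e(C\gamma)^k k^{k/2}(1+\log N)^{k/2}$, and the $k$-dependent \emph{multiplicative} prefactor genuinely cannot be absorbed into the argument of the logarithm as written in the lemma. Indeed the literal inequality $\Ep[\max_j|X_j|^k]\le\gamma^k\log^{k/2}(NC_k)$ fails for $N$ i.i.d.\ standard Gaussians, since then $\Ep[\max_j|X_j|^k]\sim(2\log N)^{k/2}$ while $\log^{k/2}(NC_k)\sim(\log N)^{k/2}$; the factor $2^{k/2}$ is lost. So the defect is in the lemma's statement, not in your proof: the correct reading is $\Ep[\max_j|X_j|^k]\le C_k'\gamma^k\log^{k/2}(NC_k)$ with a $k$-dependent constant outside, which is exactly what your argument delivers and all that the paper ever uses (e.g.\ $M_4\le C\log^2(pn)$ in Lemma~\ref{lem:Hn} and in the proof of Corollary~\ref{cor:pivotal}). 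If you write this up, state the conclusion in that corrected form rather than trying to force the constant inside the logarithm.
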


\begin{lemma}\label{lem:m2bound}(e.g.,\cite{BRT2014})
Let $X_i, i=1,\ldots,n,$ be independent random vectors in $\mathbb{R}^p$, $p\geq 3$. Define $\bar m_k := \max_{j\leq p}\frac{1}{n}\sum_{i=1}^n\mathbb{E}[|X_{ij}|^k]$ and $M_{k} \geq \mathbb{E}[ {\displaystyle \max_{i\leq n}}\|X_i\|_\infty^k]$. Then
{\small $$\mathbb{E}\Big[\max_{j\leq p}\frac{1}{n}\Big|\sum_{i=1}^n|X_{ij}|^k-\mathbb{E}[|X_{ij}|^k]\Big|\Big] \leq 2C^2 \frac{\log p}{n}M_{k}+2C\sqrt{\frac{\log p}{n}}M_{k}^{1/2}\bar m_k^{1/2} $$}
for some universal constant $C$.
\end{lemma}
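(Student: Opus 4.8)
The plan is to establish this as a maximal inequality for the empirical process indexed by the $p$ coordinates, combining symmetrization with a conditional sub-Gaussian bound and then closing a self-referential (fixed-point) inequality. Write $Z_{ij}=|X_{ij}|^k$ and $Y_{ij}=Z_{ij}-\mathbb{E}[Z_{ij}]$, so that for each fixed $j$ the variables $Y_{ij}$ are independent across $i$ and centered, and set $D:=\mathbb{E}[\max_{j\le p}\frac{1}{n}|\sum_{i=1}^n Y_{ij}|]$, the quantity to be bounded. First I would apply the standard symmetrization inequality for the expected supremum of a centered empirical process: introducing Rademacher signs $(\eta_i)_{i=1}^n$ independent of the data,
$$ D \;\le\; 2\,\mathbb{E}\Big[\max_{j\le p}\frac{1}{n}\Big|\sum_{i=1}^n \eta_i Z_{ij}\Big|\Big]. $$

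Next I would condition on the data $(X_i)_{i=1}^n$ and use that, given the data, $\frac{1}{n}\sum_i \eta_i Z_{ij}$ is a Rademacher sum, hence sub-Gaussian with proxy variance $\frac{1}{n^2}\sum_i Z_{ij}^2$ by Hoeffding's lemma. The maximal inequality for the maximum of $p$ conditionally sub-Gaussian variables then gives, for a universal constant,
$$ \mathbb{E}_\eta\Big[\max_{j\le p}\frac{1}{n}\Big|\sum_{i=1}^n \eta_i Z_{ij}\Big|\Big] \;\le\; C\,\frac{\sqrt{\log p}}{n}\,\max_{j\le p}\Big(\sum_{i=1}^n Z_{ij}^2\Big)^{1/2}, $$
so that, after taking expectation over the data, $D \le 2C\sqrt{\log p}\,\mathbb{E}\big[\max_{j\le p}(\frac{1}{n^2}\sum_i Z_{ij}^2)^{1/2}\big]$. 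Here $p\ge 3$ guarantees $\log p>0$ and the correct $\sqrt{\log p}$ scaling.

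The crux is to control this last expectation. Using the envelope $F:=\max_{i\le n}\|X_i\|_\infty^k=\max_{i\le n,\,j\le p}Z_{ij}$, the pointwise bound $Z_{ij}^2=|X_{ij}|^{2k}\le F\,Z_{ij}$ yields $\frac{1}{n}\sum_i Z_{ij}^2\le F\cdot\frac{1}{n}\sum_i Z_{ij}$ for every $j$, whence
$$ \mathbb{E}\Big[\max_{j\le p}\Big(\frac{1}{n^2}\sum_i Z_{ij}^2\Big)^{1/2}\Big] \le \frac{1}{\sqrt n}\,\mathbb{E}\Big[F^{1/2}\max_{j\le p}\Big(\frac{1}{n}\sum_i Z_{ij}\Big)^{1/2}\Big] \le \frac{1}{\sqrt n}\,\big(\mathbb{E}[F]\big)^{1/2}\Big(\mathbb{E}\Big[\max_{j\le p}\frac{1}{n}\sum_i Z_{ij}\Big]\Big)^{1/2} $$
by the Cauchy--Schwarz inequality, together with $\mathbb{E}[F]\le M_k$ by definition. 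The essential observation, and the main obstacle to a one-line argument, is that the remaining factor is self-referential: by the triangle inequality $\mathbb{E}[\max_{j\le p}\frac{1}{n}\sum_i Z_{ij}]\le \bar m_k+D$, since $\frac{1}{n}\sum_i Z_{ij}\le \frac{1}{n}\sum_i\mathbb{E}[Z_{ij}]+\frac{1}{n}|\sum_i Y_{ij}|$.

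Collecting the inequalities gives $D\le a\sqrt{\bar m_k+D}$ with $a:=2C\sqrt{M_k\log p/n}$. I would then resolve this quadratic inequality: $D^2\le a^2(\bar m_k+D)$ forces $D\le a^2+a\sqrt{\bar m_k}$ (bounding the explicit larger root via $\sqrt{x+y}\le\sqrt x+\sqrt y$), that is,
$$ D \;\le\; \frac{4C^2\log p}{n}\,M_k + 2C\sqrt{\frac{\log p}{n}}\,M_k^{1/2}\bar m_k^{1/2}, $$
which is the claimed bound after relabeling the universal constant. Everything apart from this self-bounding step is routine---symmetrization, the sub-Gaussian maximal inequality, and Cauchy--Schwarz---and the envelope factorization $Z_{ij}^2\le F\,Z_{ij}$ is precisely what converts the $2k$-th moment implicit in the variance into the product $M_k\bar m_k$ of the envelope moment and the average $k$-th moment.
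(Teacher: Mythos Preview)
Your proof is correct. The paper does not actually supply a proof of this lemma; it is merely stated with a citation to \cite{BRT2014}, so there is no in-paper argument to compare against. Your approach---symmetrization, the conditional sub-Gaussian maximal inequality for Rademacher sums, the envelope factorization $Z_{ij}^2\le F\,Z_{ij}$, Cauchy--Schwarz, and then resolving the self-bounding inequality $D\le a\sqrt{\bar m_k+D}$---is the standard route for such maximal inequalities and delivers the claimed bound up to the value of the universal constant, which is all that is asserted.
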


\begin{lemma}\label{lemma:LBsubG}
Let $X_1,\ldots,X_n$ be independent non-negative random variables. Then
$$\P\left( \sum_{i=1}^n (X_i - \Ep[X_i]) \leq -t \right)\leq \exp\left( - \frac{t^2}{2\sum_{i=1}^n\Ep[X_i]}\right).  $$
\end{lemma}

\vspace{-.2in}


{\footnotesize \bibliography{EIVbib-Pivotalv2}
\bibliographystyle{plain}
}
\end{appendix}

\newpage

\section{\large Supplementary Material: Deferred Proofs}

\begin{proof}[Proof of Lemma \ref{lemma:auxPhi}]
Note that $\log ( 1 - \Phi(t) )$ is a concave function so that the supergradient inequality yields
$$ \log\{1-\Phi(a/\{1+\gamma\})\} \leq \log\{1-\Phi(a)\}+ \frac{\phi(a)}{1-\Phi(a)}\frac{\gamma}{1+\gamma}a,$$ where $\phi$ denotes the Gaussian density function.
The result follows by noting that $\frac{\phi(t)}{1-\Phi(t)}\leq \{t^2+1\}/t \leq 2t$ if $t\geq 1$, and exponentiating both sides of the inequality.\end{proof}

\begin{proof}[Proof of Theorem \ref{thm:trim}]
Recall that $t_j(\beta) = \{\frac{1}{n}\sum_{i=1}^n \{z_{ij}(y_i-z_i^T\beta)+\hat \Gamma_{jj}\beta_{j}\}^2\}^{1/2}$ and define $\tilde t_j(\beta) = \{\frac{1}{n}\sum_{i=1}^n \{z_{ij}(y_i-z_i^T\beta)+ \Gamma_{jj}\beta_{j}\}^2\}^{1/2}$. Then we can write the threshold in the $j$th component of the estimator as $\bar v_j = \tau t_j(\hat\beta) /\frac{1}{n}\sum_{i=1}^nz_{ij}^2$, $j=1,\ldots,p.$ Further, note that $\Ep[ \tilde t_j^2(\beta_0)] = \frac{1}{n}\sum_{i=1}^n \Ep[\{z_{ij}(\xi_i-w_i^T\beta_0)+\Gamma_{jj}\beta_{0j}\}^2]$.\\

We first derive upper and lower bounds on $t_j(\hat\beta)$  by controlling the value
$$ |t_j(\hat\beta) - \{\Ep[\tilde t_j^2(\beta_0)]\}^{1/2}| \leq |t_j(\hat\beta) -  t_j(\beta_0)|+|t_j(\beta_0) - \tilde t_j(\beta_0)|+|\tilde t_j(\beta_0)-\{\Ep[\tilde t_j^2(\beta_0)]\}^{1/2}| $$
via triangle inequality and using the bracketing
\begin{equation}\label{brac}
c(1+\|\beta_0\|_2)^2 \leq \Ep[\tilde t_j^2(\beta_0)] \leq C(1+\|\beta_0\|_2)^2
\end{equation}
that holds by assumption.

By (\ref{bound:t}) we have with probability $1-o(1)$
$$ |t_j(\hat\beta)- t_j(\beta_0)| \leq H_n\|\hat\beta-\beta_0\|_1 \leq C\|\hat\beta-\beta_0\|_1 = o(1+\|\beta_0\|_2)$$
since $H_n \leq C$ with probability $1-o(1)$ by Lemma \ref{lem:Hn} when $X \in \Omega_X$, Condition A and B hold, and $\|\hat\beta-\beta_0\|_1 = o(1+\|\beta_0\|_2)$ by Theorem \ref{thm:pivotal} with $\alpha = \log n$ under the assumed condition $s^{1/q}\sqrt{\log(2pn/\alpha)/n} = o(1)$ for $q\in\{1,2\}$.

{} Moreover, we have  $$|t_j(\beta_0)-\tilde t_j(\beta_0)|\leq|\hat\Gamma_{jj}-\Gamma_{jj}|\ |\beta_{0j}| \leq b_\varepsilon \|\beta_0\|_2=o(\|\beta_0\|_2)$$ under our conditions that imply $b_\varepsilon = o(1)$.

{} Next, Lemma \ref{lem:m2bound} implies
$$
\begin{array}{rl}
\Ep\left[ \max_{j\leq p}|\tilde t_j^2(\beta_0)- \Ep[\tilde t_j^2(\beta_0)]| \right] & \leq \frac{CM_2 \log(3p)}{n} \\
& + \sqrt{\frac{CM_2\log(2p)}{n}}  \max_{j\leq p}\{\Ep[\tilde t_j^2(\beta_0)]\}^{1/2}\\ \end{array}
$$
where $M_2:= \Ep[\max_{i\leq n, j\leq p} |z_{ij}(\xi_i-w_i^T\beta_0)+\Gamma_{jj}\beta_{0j}|^2]$. The quantity $M_2$ satisfies
\begin{equation}\label{bound:M2}\begin{array}{rl}
M_2 & \leq  2\Ep[\max_{i\leq n, j\leq p} |x_{ij}(\xi_i-w_i^T\beta_0)+\Gamma_{jj}\beta_{0j}|^2]\\
& +2\Ep[\max_{i\leq n, j\leq p} |w_{ij}(\xi_i-w_i^T\beta_0)+\Gamma_{jj}\beta_{0j}|^2]\\
& \leq 4\max_{i\leq n, j\leq p} |x_{ij}|^2\Ep[\max_{i\leq n} |\xi_i-w_i^T\beta_0|^2]+8|\Gamma_{jj}\beta_{0j}|^2 \\
& +4\Ep[\max_{i\leq n, j\leq p} |w_{ij}(\xi_i-w_i^T\beta_0)|^2]\\
& \leq C\max_{i\leq n, j\leq p} |x_{ij}|^2 (1+\|\beta_0\|_2)^2\log(n)+C\|\beta_{0}\|_2^2 \\
& +4\Ep[\max_{i\leq n, j\leq p} |w_{ij}|^4]^{1/2}\Ep[\max_{i\leq n}|\xi_i-w_i^T\beta_0|^4]^{1/2}\\
& \leq C'n^{1/2}(1+\|\beta_0\|_2)^2\log(n) + C\log(pn)(1+\|\beta_0\|_2)^2\log(n)\\
\end{array}\end{equation}
where we used the inequalities $\max_{i\leq n, j\leq p} |x_{ij}| \leq n^{1/4} \max_{i\leq n, j\leq p} \left(\frac{1}{n}\sum_{i=1}^n x_{ij}^4\right)^{1/4} \leq C n^{1/4}$
and Lemma \ref{lem:subgaussian}. Finally, note that
$$\begin{array}{ll}
\displaystyle \max_{j\leq p}|\tilde t_j(\beta_0)- \{\Ep[\tilde t_j^2(\beta_0)]\}^{1/2}|  \leq \displaystyle  \max_{j\leq p}\frac{|\tilde t_j^2(\beta_0)- \Ep[\tilde t_j^2(\beta_0)]|}{\{\Ep[\tilde t_j^2(\beta_0)]\}^{1/2}}\\
\displaystyle \leq \max_{j\leq p}\frac{|\tilde t_j^2(\beta_0)- \Ep[\tilde t_j^2(\beta_0)]|}{c'(1+\|\beta_0\|_2)}\\
= \displaystyle O_P\Big( \frac{C'n^{1/2} + C\log(pn)}{n} \Big) (1+\|\beta_0\|_2)\log(n) \log(pn) \\
+ \displaystyle O_P\Big( \frac{C'n^{1/2} + C\log(pn)}{n} \Big)^{1/2}(1+\|\beta_0\|_2)\log^{1/2}(n) \log^{1/2}(pn) \\
= (1+\|\beta_0\|_2)o_P(1)  \end{array}$$
where we used Markov's inequality,  (\ref{bound:M2}) and the fact that  $\log^2(n)\log^2(pn) = o(n)$, which is due to the relation  $\Phi^{-1}(1-\alpha/(2pn))=o(n^{1/6})$ in Condition B(ii).

Thus, uniformly over $j\in \{1,\ldots,p\}$, we have $$|t_j(\hat\beta)-\{\Ep[\tilde t_j^2(\beta_0)]\}^{1/2}|=(1+\|\beta_0\|_2)o_P(1).
$$
This implies that $|t_j(\hat\beta)|$ satisfies, with high probability, the same bracketing as $|\{\Ep[\tilde t_j^2(\beta_0)]\}^{1/2}|$, cf. \eqref{brac}.
Since $\frac{1}{n}\sum_{i=1}^nz_{ij}^2$ is bounded away from zero and from above by constants uniformly in $j$ with probability $1-o(1)$, we have $\min_{j\leq p} \bar v_j \geq c \tau (1+\|\beta_0\|_2)$ and $\max_{j\leq p} \bar v_j \leq C\tau (1+\|\beta_0\|_2)$. Applying \eqref{b2} in Lemma \ref{Lemma:Bound2nNormSecond} given below with $\nu_{\min}=\min_{j\leq p} \bar v_j$ and Corollary \ref{cor:pivotal} with $q=1$ we get
$$\begin{array}{rl}
 \|\hat\beta_{\hat T}\|_0&   \le s + \frac{\|\hat\beta-\beta_0\|_1}{c\tau(1+\|\beta_0\|_2)} \leq s + \frac{C(1+\|\beta_0\|_2)s\sqrt{\log(c'p/(\alpha\varepsilon))/n}}{c\tau(1+\|\beta_0\|_2)} \leq C' s \end{array}
 $$
where we have used  the fact that $\tau = n^{-1/2}\Phi^{-1}(1-\alpha/(2p))\ge c\sqrt{\log(p/\alpha)/n}$.\\

{} Similarly, to prove the bounds on the $\ell_1$ and $\ell_2$ errors of the thresholded estimator, we use inequalities \eqref{b1} and \eqref{b3} in Lemma \ref{Lemma:Bound2nNormSecond} below with $\nu_{\max}=\max_{j\leq p} \bar v_j$, the bounds of Corollary \ref{cor:pivotal} with $q\in\{1,2\}$, and the fact that $\tau \le \sqrt{2\log(2p/\alpha)/n}$. \end{proof}

The following lemma provides bounds for general thresholded estimators (see also a related lemma in \cite{belloni2016quantile}).

\begin{lemma}\label{Lemma:Bound2nNormSecond}
Let $\hat\beta, \beta_0 \in \mathbb{R}^p$ be such that $\|\beta_0\|_0\leq s$.
Denote by $\hat \beta^{\nu}=(\hat \beta^{\nu}_1,\dots,\hat \beta^{\nu}_p)$ the vector obtained by thresholding the components $\hat \beta_j$ of $\hat\beta$ as follows: $\hat\beta^\nu_j=\hat \beta_j 1\{|\hat \beta_{j}|\geq \nu_j\}$ where $\nu_j$ are positive numbers. Then,
\begin{eqnarray}\label{b1}
&\|\hat \beta^\nu - \beta_0\|_{1} & \leq \|\hat \beta - \beta_0 \|_{1}+s \nu_{\max},\\
\label{b2} &\qquad \|\hat\beta^\nu\|_0  &\leq s + \|\hat \beta - \beta_0 \|_{1}/\nu_{\min},\\
\label{b3}&\|\hat \beta^\nu-\beta_0\|_2 & \leq  \|\hat \beta-\beta_0\|_2  +  2\sqrt{s}\nu_{\max} + \frac{2\|\hat\beta-\beta_0\|_{1}}{\sqrt{s}}\end{eqnarray}
where $\nu_{\max}=\max_{j\leq p} \nu_j$ and $\nu_{\min}=\min_{j\leq p} \nu_j$.
\end{lemma}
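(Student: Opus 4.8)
The plan is to reduce everything to two elementary facts about coordinatewise thresholding. Write $T=\supp(\beta_0)$, so $|T|\le s$, and set $v=\hat\beta-\beta_0$. For each index $j$ the perturbation introduced by thresholding is $\hat\beta_j-\hat\beta^\nu_j=\hat\beta_j\,1\{|\hat\beta_j|<\nu_j\}$, so it is supported on $S:=\{j:|\hat\beta_j|<\nu_j\}$ and satisfies $|\hat\beta_j-\hat\beta^\nu_j|\le\nu_j\le\nu_{\max}$; moreover $|\hat\beta^\nu_j|\le|\hat\beta_j|$ for every $j$. Everything below follows by splitting the relevant sums over $T$ and $T^c$ and using these two facts.

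For \eqref{b1} I would bound $\|\hat\beta^\nu-\beta_0\|_1$ by splitting over $T$ and $T^c$. On $T^c$ one has $\beta_{0j}=0$ and $|\hat\beta^\nu_j|\le|\hat\beta_j|=|v_j|$, so that block contributes at most $\|v_{T^c}\|_1$. On $T$ the triangle inequality gives $|\hat\beta^\nu_j-\beta_{0j}|\le|\hat\beta^\nu_j-\hat\beta_j|+|v_j|\le\nu_{\max}+|v_j|$, and since $|T|\le s$ that block is at most $s\nu_{\max}+\|v_T\|_1$; adding the two yields \eqref{b1}. For \eqref{b2}, note $\|\hat\beta^\nu\|_0=|\{j:|\hat\beta_j|\ge\nu_j\}|$; at most $s$ of these indices lie in $T$, and for $j\in T^c$ the condition $|\hat\beta_j|\ge\nu_j\ge\nu_{\min}$ forces $|v_j|=|\hat\beta_j|\ge\nu_{\min}$, so the number of such indices is at most $\|v_{T^c}\|_1/\nu_{\min}\le\|v\|_1/\nu_{\min}$; summing the two counts gives \eqref{b2}.

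The one step that needs a little care is \eqref{b3}. By the triangle inequality $\|\hat\beta^\nu-\beta_0\|_2\le\|v\|_2+\|\hat\beta^\nu-\hat\beta\|_2$, so it suffices to bound $\|\hat\beta^\nu-\hat\beta\|_2^2=\sum_{j\in S}|\hat\beta_j|^2$. Splitting $S=(S\cap T)\cup(S\cap T^c)$, the first part contributes at most $s\nu_{\max}^2$ because $|S\cap T|\le s$ and $|\hat\beta_j|<\nu_{\max}$ there; on the second part $|\hat\beta_j|=|v_j|<\nu_{\max}$, hence $|\hat\beta_j|^2<\nu_{\max}|v_j|$ and the sum is at most $\nu_{\max}\|v_{T^c}\|_1\le\nu_{\max}\|v\|_1$. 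Thus $\|\hat\beta^\nu-\hat\beta\|_2\le(s\nu_{\max}^2+\nu_{\max}\|v\|_1)^{1/2}\le\sqrt s\,\nu_{\max}+(\nu_{\max}\|v\|_1)^{1/2}$, and applying $2\sqrt{ab}\le a+b$ with $a=\sqrt s\,\nu_{\max}$, $b=\|v\|_1/\sqrt s$ bounds $(\nu_{\max}\|v\|_1)^{1/2}$ by $\tfrac12(\sqrt s\,\nu_{\max}+\|v\|_1/\sqrt s)$. Collecting terms gives $\|\hat\beta^\nu-\hat\beta\|_2\le\tfrac32\sqrt s\,\nu_{\max}+\tfrac12\|v\|_1/\sqrt s$, which is even slightly stronger than the claimed \eqref{b3}.

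The main (and essentially only) obstacle is the temptation in \eqref{b3} to bound $\|v_{T^c}\|_2$ by $\|v_{T^c}\|_1$ directly, which is too lossy; the fix is to first use the coordinatewise cap $|v_j|<\nu_{\max}$ that is valid on the thresholded set $S\cap T^c$, extract one factor of $\nu_{\max}$, pass to the $\ell_1$ norm, and then balance the two resulting terms by AM--GM. Everything else is the routine index-set bookkeeping indicated above.
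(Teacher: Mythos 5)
Your proposal is correct, and for \eqref{b1} and \eqref{b2} it follows essentially the same index-set bookkeeping as the paper: the split over $T$ and $T^c$ for the $\ell_1$ bound, and the counting argument $\|\hat\beta_{T^c\cap\hat T}\|_1\ge \nu_{\min}(\|\hat\beta^\nu\|_0-s)$ for the sparsity bound. Where you genuinely diverge is \eqref{b3}. The paper bounds $\|\hat\beta^\nu-\hat\beta\|_2$ by the standard block (shelling) decomposition: order the coordinates of $|\hat\beta^\nu_j-\hat\beta_j|$, group them into consecutive blocks of size $s$, use $\|(\cdot)_{T_k}\|_2\le\|(\cdot)_{T_{k-1}}\|_1/\sqrt{s}$ together with the cap $|\hat\beta^\nu_j-\hat\beta_j|\le\nu_j$ on the top block, and then invoke \eqref{b1} to control $\|\hat\beta^\nu-\hat\beta\|_1$; this argument never uses the support of $\beta_0$ when bounding the $\ell_2$ norm of the thresholding perturbation. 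You instead bound $\|\hat\beta^\nu-\hat\beta\|_2^2$ directly by splitting the thresholded set $S$ over $T$ and $T^c$, using that on $S\cap T^c$ one has $|\hat\beta_j|=|v_j|<\nu_{\max}$ so that $|\hat\beta_j|^2\le\nu_{\max}|v_j|$, and then balancing $\sqrt{\nu_{\max}\|v\|_1}$ by AM--GM. Your route avoids the shelling machinery entirely, exploits the sparsity of $\beta_0$ more directly, and yields the slightly sharper constants $\tfrac32\sqrt{s}\,\nu_{\max}+\tfrac12\|v\|_1/\sqrt{s}$ in place of $2\sqrt{s}\,\nu_{\max}+2\|v\|_1/\sqrt{s}$; the paper's shelling argument is marginally more portable in that it gives a bound on $\|\hat\beta^\nu-\hat\beta\|_2$ depending only on the coordinatewise caps and the $\ell_1$ norm of the perturbation, without reference to $\supp(\beta_0)$. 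Both are complete and correct proofs of the stated inequalities.
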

\begin{proof}[Proof of Lemma \ref{Lemma:Bound2nNormSecond}]
Let $T = \supp(\beta_0)$. The bound \eqref{b1} follows from the chain of inequalities
$$\begin{array}{rl}
 \|\hat \beta^\nu - \beta_0\|_{1} &= \|(\hat \beta^\nu - \beta_0)_{T}\|_{1} + \|(\hat \beta^\nu)_{T^c}\|_{1} \\
&\leq  \|(\hat \beta^\nu - \hat\beta)_{T}\|_{1} + \|(\hat \beta - \beta_0)_{T}\|_{1}+ \|(\hat \beta^\nu)_{T^c}\|_{1} \\
& \leq s\nu_{\max}  + \|(\hat \beta - \beta_0)_{T}\|_{1}+ \|(\hat \beta)_{T^c}\|_{1}\\
& = s\nu_{\max}   + \|\hat \beta - \beta_0\|_{1}. \end{array}$$
To prove the bound \eqref{b2}, set $\hat T = \supp(\hat\beta^\nu)$, and note that
$$\|\hat\beta-\beta_0\|_{1} \geq \|\hat \beta_{T^c}\|_{1} \ge \|\hat \beta_{T^c\cap \hat T }\|_1 \ge
\nu_{\min}  |T^c\cap \hat T|
\ge \nu_{\min} ( | \hat T|-|T|)\ge \nu_{\min} ( \|\hat\beta^\nu\|_{0}-s).$$
We now show \eqref{b3}. By  the triangle inequality,
\begin{equation}\label{b4} \|\hat\beta^\nu-\beta_0\|_2 \leq \|\hat\beta^\nu-\hat\beta\|_2 + \|\hat\beta-\beta_0\|_2.\end{equation}
Without loss of generality assume that the order of the components is such that $|\hat\beta^\nu_j -\hat\beta_j|$ is non-increasing in $j$. Let $T_1$ be the set of  indices $j$ corresponding to the $s$ largest values of $|\hat\beta^\nu_j-\hat\beta_j|$. Similarly, define $T_k$ as the set of  indices corresponding to the $s$ largest values of $|\hat\beta^\nu_j-\hat\beta_j|$ outside $\cup_{m=1}^{k-1}T_m$. Therefore, $\hat\beta^\nu-\hat\beta=\sum_{k=1}^{\lceil p/s\rceil } (\hat\beta^\nu-\hat\beta)_{T_k}$. Moreover,
$\|(\hat\beta^\nu-\hat\beta)_{T_k}\|_2 \leq \|(\hat\beta^\nu-\hat\beta)_{T_{k-1}}\|_1/\sqrt{s}$ in view of the monotonicity of the components. Thus,
$$ \begin{array}{rl}
 \|\hat\beta^\nu-\hat\beta\|_2 & = \|\sum_{k=1}^{\lceil p/s \rceil} (\hat\beta^\nu-\hat\beta)_{T_k}\|_2\\
& \leq \| (\hat\beta^\nu -\hat\beta )_{T_1}\|_2  + \sum_{k\geq 2}\| (\hat\beta^\nu -\hat\beta )_{T_k}\|_2 \\
 & \leq  \| (\hat\beta^\nu -\hat\beta )_{T_1}\|_2+ \sum_{k\geq 2}\|(\hat\beta^\nu -\hat\beta )_{T_k}\|_2\\
&  \leq   \nu_{\max}\sqrt{s}  +  \sum_{k\geq 1}\|(\hat\beta^\nu -\hat\beta )_{T_k}\|_{1}/\sqrt{s}\\
& =   \nu_{\max}\sqrt{s} +  \|\hat\beta^\nu -\hat\beta \|_{1}/\sqrt{s} \\
&  \leq  2\sqrt{s}\nu_{\max} + 2\|\hat\beta -\beta_0\|_{1}/\sqrt{s}  \\ \end{array}$$
where we have used the bound $|\hat\beta^\nu_j-\hat\beta_j|\le \nu_j$ valid for all $j$, and then \eqref{b1}. Inequality \eqref{b3} follows by combining the last display with \eqref{b4}.
\end{proof}

\section{Approximate Sparse Models}

In what follows we modify the argument in the proof of Theorem \ref{thm:pivotal} to cover approximate sparse models were $\|\beta_{0T^c}\|_q$ is small but not zero (as in exactly sparse models).

\begin{proof}
By (\ref{eq:aux:approx}), $\lambda = 1/(4H_n)$ and $\lambda_u = 1/4$, we have
$$ \|\hat\beta_{T^c}\|_1 \leq \|\beta_{0T}-\hat\beta_T\|_1 +\|\beta_{0T^c}\|_1 +\frac{1}{2}\|\beta_0-\hat\beta\|_1$$
First suppose that $\|\beta_{0T}-\hat\beta_T\|_1 \leq 2\|\beta_{0T^c}\|_1$. Then
$$ \frac{1}{2}\|\hat\beta_{T^c}\|_1 \leq \frac{3}{2}\|\beta_{0T}-\hat\beta_T\|_1 +\frac{3}{2}\|\beta_{0T^c}\|_1 \leq 3\|\beta_{0T^c}\|_1$$
and we have $\|\hat\beta - \beta_0\|_1 \leq 5\|\beta_{0T^c}\|_1$.
Otherwise, suppose that $\|\beta_{0T}-\hat\beta_T\|_1 \geq 2\|\beta_{0T^c}\|_1$. Then
$$\begin{array}{rl}
 \|\hat\beta_{T^c}-\beta_{0T^c}\|_1 & \leq \|\beta_{0T}-\hat\beta_T\|_1 +2\|\beta_{0T^c}\|_1 +\frac{1}{2}\|\beta_0-\hat\beta\|_1\\
 & \leq (2+\frac{1}{2})\|\beta_{0T}-\hat\beta_T\|_1+ \frac{1}{2}\|\beta_{0T^c}-\hat\beta_{T^c}\|_1\\
 \end{array}$$
and we have that $\hat\beta-\beta_0 \in \Delta_T(5)$. In turn we have
the same bound as in Theorem \ref{thm:pivotal} with $\kappa_1(s,5)$ replacing $\kappa_q(s,3)$. Therefore we can conclude that with probability $1-\alpha-o(1)$
$$\|\hat\beta-\beta_0\|_1 \leq 5\|\beta_{0T^c}\|_1+ C(1+\|\beta_0\|_2)s\sqrt{\frac{\log(c'p/(\alpha\varepsilon))}{n}}.$$
Moreover, we have
$$\begin{array}{rl}
 (\hat\beta-\beta_0)\mbox{$\frac{1}{n}$}X^TX (\hat\beta-\beta_0) & \leq \| \hat\beta-\beta_0\|_1 \| \mbox{$\frac{1}{n}$}X^TX (\hat\beta-\beta_0)\|_\infty\\
 &\leq C\|\beta_{0T^c}\|_1 (1+\|\beta_0\|_2)\sqrt{\frac{\log(c'p/(\alpha\varepsilon))}{n}}\\
 & +
 (1+\|\beta_0\|_2)^2s\frac{\log(c'p/(\alpha\varepsilon))}{n} \end{array}
 $$
\end{proof}

\section{\large Supplementary Material: Numerical and Optimization Issues}

~\\

The pivotality of the self-normalized estimator is achieved by the introduction of $p$ second order cone constraints which is more computationally demanding than the $2p$ linear constraints (associated with the near zero score condition).  In order to solve the optimization problem defined in (\ref{est:pivotal}), it is convenient to formulate it as
$$
\begin{array}{rl}
{\displaystyle \min_{\theta,\nu}} & c^T\theta\\
s.t. & A\theta+\nu = b\\
& (\theta,\nu) \in \mathbb{R}^d\times \mathcal{K},\\
\end{array}
$$
where $\theta$ is a vector that contains the positive and negative parts of $\beta$ and auxiliary variables, and $\mathcal{K}$ is the cartesian product of non-negative cones and second order cones. The introduction of residual variables $\varepsilon_i=y_i-z_i^T(\beta^+-\beta^-)$ proves to be helpful in the implementation to further exploit sparsity in the design matrix in the $p$ second order constraints. Indeed the additional residual variables allow us to write the second order constraints for $j=1,\ldots,p$ as
$$ \Big\{ \frac{1}{n}\sum_{i=1}^n \{z_{ij} \varepsilon_i +\hat \Gamma_{jj}(\beta_j^+-\beta_j^-)\}^2 \Big\}^{1/2} \leq t_j$$ instead of $$\Big\{ \frac{1}{n}\sum_{i=1}^n \{z_{ij} (y_i - z_i^T(\beta^+-\beta^-)) +\hat\Gamma_{jj}(\beta_j^+-\beta_j^-)\}^2 \Big\}^{1/2} \leq t_j. \ \ $$
The difference in these representations come from what multiplies $\beta^+-\beta^-$. In the first formulation, ${\rm diag}(z_{1j},z_{2j},\ldots,z_{pj})$ multiplies $\varepsilon$ while $e\Gamma_{j\cdot}$ multiplies $\beta^+-\beta^-$. In the second formulation we have $-z_{\cdot j}z_i^T+e\Gamma_{j\cdot}$ multiplying $\beta^+-\beta^-$ where $e$ is the $n$-vector of ones. These three matrices have $n$ rows but they are sparse in the first formulation and typically dense in the second formulation. Since these matrices are formed $p$ times (one for each $j$), this has non-negligible consequences on the software performance.

\section{\large Supplementary Material: Additional Simulations}

~\\



\begin{table}[htbp]
	\begin{tabular}{cccccccccccc}
		\toprule
		\multirow{2}{*}{$n=400$}                                                                      & \multicolumn{11}{c}{$\beta=(1,1,1,1,1,1,0,....,0)^T$}                                                                                                         \\ \cmidrule{2-12} 
		& \textbf{p} & \textbf{Bias} & \textbf{RMSE} & \textbf{PRb} & \textbf{L2} & \textbf{L1} & \textbf{PR} & \textbf{FP} & \textbf{TP} & \textbf{FN} & \textbf{Time} \\ \midrule
		\multirow{4}{*}{\textbf{SN-conic}}                                                            & 10         & 0.29          & 0.61          & 0.06         & 0.60        & 1.26        & 0.55        & 1.50        & 6.00        & 0.00        & 2.08          \\
		& 100        & 0.38          & 0.69          & 0.06         & 0.66        & 1.54        & 0.65        & 8.54        & 6.00        & 0.00        & 6.15          \\
		& 400        & 0.46          & 0.74          & 0.07         & 0.72        & 1.74        & 0.74        & 27.69       & 5.99        & 0.01        & 23.11         \\
		& 750        & 0.49          & 0.75          & 0.09         & 0.72        & 1.76        & 0.75        & 42.23       & 5.98        & 0.02        & 56.41         \\ \midrule
		\multirow{4}{*}{\textbf{\begin{tabular}[c]{@{}c@{}}SN-conic \\ (thresholded)\end{tabular}}}   & 10         & 0.29          & 0.61          & 0.06         & 0.60        & 1.23        & 0.55        & 0.21        & 6.00        & 0.00        & 2.08          \\
		& 100        & 0.38          & 0.68          & 0.06         & 0.65        & 1.39        & 0.65        & 0.45        & 5.99        & 0.01        & 6.15          \\
		& 400        & 0.46          & 0.73          & 0.07         & 0.70        & 1.49        & 0.74        & 0.53        & 5.98        & 0.02        & 23.11         \\
		& 750        & 0.49          & 0.73          & 0.09         & 0.71        & 1.50        & 0.75        & 0.48        & 5.97        & 0.03        & 56.41         \\ \midrule
		\multirow{4}{*}{\textbf{\begin{tabular}[c]{@{}c@{}}SN-conic \\ (refitted V1)\end{tabular}}}   & 10         & 0.25          & 0.49          & 0.05         & 0.47        & 0.96        & 0.46        & 1.05        & 6.00        & 0.00        & 1.90          \\
		& 100        & 0.32          & 0.57          & 0.06         & 0.54        & 1.24        & 0.57        & 6.77        & 6.00        & 0.00        & 4.67          \\
		& 400        & 0.37          & 0.64          & 0.07         & 0.61        & 1.54        & 0.65        & 28.94       & 5.99        & 0.01        & 22.09         \\
		& 750        & 0.39          & 0.66          & 0.08         & 0.64        & 1.60        & 0.67        & 39.73       & 5.98        & 0.02        & 53.56         \\ \midrule
		\multirow{4}{*}{\textbf{\begin{tabular}[c]{@{}c@{}}SN-conic \\ (refitted V2)\end{tabular}}}   & 10         & 0.09          & 1.08          & 0.08         & 0.86        & 1.83        & 0.64        & 0.21        & 6.00        & 0.00        & 6.18          \\
		& 100        & 0.17          & 0.99          & 0.07         & 0.89        & 1.94        & 0.68        & 0.45        & 5.99        & 0.01        & 9.77          \\
		& 400        & 0.14          & 1.34          & 0.10         & 1.00        & 2.16        & 0.77        & 0.53        & 5.98        & 0.02        & 27.57         \\
		& 750        & 0.12          & 1.20          & 0.08         & 1.03        & 2.20        & 0.75        & 0.48        & 5.97        & 0.03        & 69.25         \\ \midrule
		\multirow{4}{*}{\textbf{Conic}}                                                               & 10         & 0.47          & 0.78          & 0.08         & 0.76        & 1.56        & 0.76        & 0.29        & 5.99        & 0.01        & 0.46          \\
		& 100        & 0.55          & 0.85          & 0.08         & 0.82        & 1.73        & 0.87        & 0.66        & 5.98        & 0.02        & 0.50          \\
		& 400        & 0.62          & 0.92          & 0.09         & 0.89        & 1.89        & 0.95        & 4.29        & 5.97        & 0.03        & 1.85          \\
		& 750        & 0.63          & 0.91          & 0.10         & 0.88        & 1.87        & 0.94        & 5.43        & 5.97        & 0.03        & 4.50          \\ \midrule
		\multirow{4}{*}{\textbf{Bias Cor. L.S.}}                                                      & 10         & 0.34          & 0.83          & 0.08         & 0.79        & 1.64        & 0.69        & 0.50        & 5.97        & 0.03        & 0.12          \\
		& 100        & 0.42          & 0.89          & 0.08         & 0.85        & 1.89        & 0.78        & 2.42        & 5.95        & 0.05        & 0.11          \\
		& 400        & 0.46          & 1.04          & 0.09         & 0.97        & 2.35        & 0.89        & 5.59        & 5.85        & 0.15        & 0.38          \\
		& 750        & 0.47          & 1.10          & 0.10         & 1.03        & 2.55        & 0.91        & 7.17        & 5.81        & 0.19        & 1.08          \\ \midrule
		\multirow{4}{*}{\textbf{\begin{tabular}[c]{@{}c@{}}Lasso \\ (biased)\end{tabular}}}        & 10         & 0.89          & 0.92          & 0.13         & 0.91        & 2.30        & 1.29        & 3.71        & 6.00        & 0.00        & 0.16          \\
		& 100        & 0.92          & 0.97          & 0.13         & 0.97        & 3.04        & 1.33        & 69.09       & 6.00        & 0.00        & 0.88          \\
		& 400        & 0.95          & 1.04          & 0.14         & 1.04        & 4.32        & 1.36        & 163.68      & 6.00        & 0.00        & 1.37          \\
		& 750        & 0.97          & 1.07          & 0.14         & 1.07        & 5.01        & 1.35        & 235.91      & 6.00        & 0.00        & 1.63          \\ \midrule
		\multirow{4}{*}{\textbf{\begin{tabular}[c]{@{}c@{}}Lasso \\(no meas error)\end{tabular}}} & 10         & 0.18          & 0.24          & 0.03         & 0.24        & 0.49        & 0.28        & 3.32        & 6.00        & 0.00        & 0.12          \\
		& 100        & 0.22          & 0.27          & 0.03         & 0.26        & 0.56        & 0.33        & 9.54        & 6.00        & 0.00        & 0.64          \\
		& 400        & 0.25          & 0.30          & 0.04         & 0.30        & 0.62        & 0.37        & 7.75        & 6.00        & 0.00        & 1.11          \\
		& 750        & 0.26          & 0.31          & 0.04         & 0.31        & 0.64        & 0.38        & 17.37       & 6.00        & 0.00        & 1.44          \\ \bottomrule
	\end{tabular}
\vspace{3mm}
\caption{\label{tab:adsep400}  Simulation A: numerical results at $n=400$ under the additive EIV setup, with a known bias correction matrix and separated regression coefficients.}
\end{table}%

\begin{table}[htbp]
	\begin{tabular}{cccccccccccc}
		\toprule
		\multirow{2}{*}{$n=300$}                                                                    & \multicolumn{11}{c}{$\beta=(1,1/2,1/3,1/4,1/5,1/10,0,....,0)^T$}                                                                                              \\ \cmidrule{2-12} 
		& \textbf{p} & \textbf{Bias} & \textbf{RMSE} & \textbf{PRb} & \textbf{L2} & \textbf{L1} & \textbf{PR} & \textbf{FP} & \textbf{TP} & \textbf{FN} & \textbf{Time} \\ \midrule
		\multirow{4}{*}{\textbf{SN-conic}}                                                          & 10         & 0.20          & 0.38          & 0.03         & 0.36        & 0.76        & 0.35        & 0.79        & 5.41        & 0.59        & 2.25          \\
		& 100        & 0.28          & 0.43          & 0.04         & 0.42        & 0.97        & 0.42        & 7.54        & 5.09        & 0.91        & 5.45          \\
		& 400        & 0.32          & 0.45          & 0.04         & 0.45        & 1.11        & 0.47        & 25.51       & 5.10        & 0.90        & 16.91         \\
		& 750        & 0.33          & 0.49          & 0.05         & 0.48        & 1.19        & 0.51        & 20.45       & 4.70        & 1.30        & 111.80        \\ \midrule
		\multirow{4}{*}{\textbf{\begin{tabular}[c]{@{}c@{}}SN-conic \\ (thresholded)\end{tabular}}} & 10         & 0.21          & 0.39          & 0.04         & 0.38        & 0.78        & 0.37        & 0.08        & 4.39        & 1.61        & 2.25          \\
		& 100        & 0.29          & 0.44          & 0.04         & 0.43        & 0.93        & 0.45        & 0.25        & 3.81        & 2.19        & 5.45          \\
		& 400        & 0.34          & 0.47          & 0.05         & 0.46        & 0.99        & 0.50        & 0.37        & 3.72        & 2.28        & 16.91         \\
		& 750        & 0.35          & 0.50          & 0.05         & 0.49        & 1.03        & 0.53        & 0.36        & 3.50        & 2.50        & 111.80        \\ \midrule
		\multirow{4}{*}{\textbf{\begin{tabular}[c]{@{}c@{}}SN-conic \\ (refitted V1)\end{tabular}}} & 10         & 0.22          & 0.40          & 0.03         & 0.39        & 0.80        & 0.34        & 0.67        & 4.64        & 1.36        & 1.93          \\
		& 100        & 0.28          & 0.46          & 0.04         & 0.45        & 1.05        & 0.41        & 5.10        & 4.18        & 1.82        & 5.11          \\
		& 400        & 0.33          & 0.50          & 0.04         & 0.49        & 1.21        & 0.46        & 17.96       & 4.23        & 1.77        & 15.44         \\
		& 750        & 0.32          & 0.53          & 0.05         & 0.52        & 1.32        & 0.49        & 25.88       & 4.05        & 1.95        & 113.16        \\ \midrule
		\multirow{4}{*}{\textbf{\begin{tabular}[c]{@{}c@{}}SN-conic \\ (refitted V2)\end{tabular}}} & 10         & 0.16          & 1.26          & 0.09         & 0.68        & 1.40        & 0.51        & 0.08        & 4.39        & 1.61        & 6.83          \\
		& 100        & 0.14          & 0.71          & 0.06         & 0.62        & 1.33        & 0.48        & 0.25        & 3.81        & 2.19        & 7.59          \\
		& 400        & 0.14          & 0.62          & 0.05         & 0.59        & 1.27        & 0.47        & 0.37        & 3.72        & 2.28        & 22.83         \\
		& 750        & 0.18          & 0.67          & 0.05         & 0.63        & 1.36        & 0.50        & 0.36        & 3.50        & 2.50        & 46.95         \\ \midrule
		\multirow{4}{*}{\textbf{Conic}}                                                             & 10         & 0.33          & 0.44          & 0.05         & 0.44        & 0.90        & 0.50        & 0.06        & 4.62        & 1.38        & 0.45          \\
		& 100        & 0.38          & 0.48          & 0.06         & 0.48        & 1.00        & 0.57        & 0.37        & 4.12        & 1.88        & 0.42          \\
		& 400        & 0.42          & 0.51          & 0.06         & 0.51        & 1.07        & 0.62        & 5.41        & 4.03        & 1.97        & 2.43          \\
		& 750        & 0.42          & 0.53          & 0.06         & 0.52        & 1.08        & 0.63        & 6.94        & 3.96        & 2.04        & 3.68          \\ \midrule
		\multirow{4}{*}{\textbf{Bias Cor. L.S.}}                                                    & 10         & 0.35          & 0.51          & 0.06         & 0.50        & 1.04        & 0.58        & 0.00        & 4.08        & 1.92        & 0.06          \\
		& 100        & 0.43          & 0.54          & 0.06         & 0.54        & 1.13        & 0.67        & 0.03        & 3.56        & 2.44        & 0.07          \\
		& 400        & 0.48          & 0.58          & 0.07         & 0.57        & 1.20        & 0.74        & 0.02        & 3.45        & 2.55        & 0.30          \\
		& 750        & 0.50          & 0.60          & 0.08         & 0.59        & 1.24        & 0.76        & 0.01        & 3.36        & 2.64        & 0.45          \\ \midrule
		\multirow{4}{*}{\textbf{\begin{tabular}[c]{@{}c@{}}Lasso \\ (biased)\end{tabular}}}         & 10         & 0.59          & 0.61          & 0.07         & 0.61        & 1.12        & 0.74        & 3.19        & 5.92        & 0.08        & 0.18          \\
		& 100        & 0.61          & 0.63          & 0.08         & 0.63        & 1.26        & 0.78        & 29.70       & 5.69        & 0.31        & 0.52          \\
		& 400        & 0.63          & 0.65          & 0.08         & 0.65        & 1.43        & 0.80        & 61.57       & 5.62        & 0.38        & 1.29          \\
		& 750        & 0.64          & 0.66          & 0.08         & 0.66        & 1.56        & 0.81        & 82.42       & 5.48        & 0.52        & 1.53          \\ \midrule
		\multirow{4}{*}{\textbf{\begin{tabular}[c]{@{}c@{}}Lasso \\ (no meas error)\end{tabular}}}  & 10         & 0.20          & 0.27          & 0.03         & 0.26        & 0.55        & 0.31        & 2.64        & 5.95        & 0.05        & 0.12          \\
		& 100        & 0.24          & 0.29          & 0.03         & 0.28        & 0.60        & 0.36        & 7.97        & 5.43        & 0.57        & 0.37          \\
		& 400        & 0.27          & 0.31          & 0.04         & 0.31        & 0.65        & 0.40        & 18.08       & 5.32        & 0.68        & 0.96          \\
		& 750        & 0.28          & 0.33          & 0.04         & 0.32        & 0.69        & 0.41        & 5.72        & 4.99        & 1.01        & 0.89          \\ \bottomrule
	\end{tabular}
\vspace{3mm}
\caption{\label{tab:adunsep300}  Simulation A: numerical results at $n=300$ under the additive EIV setup, with a known bias correction matrix and unseparated regression coefficients.}
\end{table}%

\begin{table}[htbp]
	\begin{tabular}{cccccccccccc}
	\toprule
		\multirow{2}{*}{$n=400$}                                                                    & \multicolumn{11}{c}{$\beta=(1,1/2,1/3,1/4,1/5,1/10,0,....,0)^T$}                                                                                              \\ \cmidrule{2-12} 
		& \textbf{p} & \textbf{Bias} & \textbf{RMSE} & \textbf{PRb} & \textbf{L2} & \textbf{L1} & \textbf{PR} & \textbf{FP} & \textbf{TP} & \textbf{FN} & \textbf{Time} \\ \midrule
		\multirow{4}{*}{\textbf{SN-conic}}                                                          & 10         & 0.20          & 0.35          & 0.03         & 0.34        & 0.72        & 0.32        & 0.91        & 5.44        & 0.56        & 2.66          \\
		& 100        & 0.21          & 0.38          & 0.04         & 0.37        & 0.87        & 0.37        & 7.24        & 5.27        & 0.73        & 6.60          \\
		& 400        & 0.26          & 0.41          & 0.05         & 0.40        & 1.01        & 0.41        & 14.87       & 5.14        & 0.86        & 23.24         \\
		& 750        & 0.29          & 0.40          & 0.04         & 0.39        & 1.02        & 0.43        & 39.04       & 4.91        & 1.09        & 51.53         \\ \midrule
		\multirow{4}{*}{\textbf{\begin{tabular}[c]{@{}c@{}}SN-conic \\ (thresholded)\end{tabular}}} & 10         & 0.21          & 0.36          & 0.04         & 0.35        & 0.72        & 0.34        & 0.12        & 4.50        & 1.50        & 2.66          \\
		& 100        & 0.22          & 0.39          & 0.04         & 0.38        & 0.83        & 0.39        & 0.41        & 4.29        & 1.71        & 6.60          \\
		& 400        & 0.27          & 0.42          & 0.05         & 0.41        & 0.90        & 0.43        & 0.46        & 4.06        & 1.94        & 23.24         \\
		& 750        & 0.30          & 0.41          & 0.05         & 0.40        & 0.88        & 0.45        & 0.42        & 3.88        & 2.12        & 51.53         \\ \midrule
		\multirow{4}{*}{\textbf{\begin{tabular}[c]{@{}c@{}}SN-conic \\ (refitted V1)\end{tabular}}} & 10         & 0.22          & 0.38          & 0.03         & 0.37        & 0.78        & 0.32        & 0.60        & 4.75        & 1.25        & 2.29          \\
		& 100        & 0.22          & 0.40          & 0.03         & 0.39        & 0.91        & 0.35        & 5.87        & 4.61        & 1.39        & 5.13          \\
		& 400        & 0.26          & 0.46          & 0.04         & 0.44        & 1.11        & 0.40        & 20.81       & 4.48        & 1.52        & 22.43         \\
		& 750        & 0.28          & 0.45          & 0.04         & 0.44        & 1.12        & 0.42        & 28.89       & 4.25        & 1.75        & 50.62         \\ \midrule
		\multirow{4}{*}{\textbf{\begin{tabular}[c]{@{}c@{}}SN-conic \\ (refitted V2)\end{tabular}}} & 10         & 0.07          & 0.52          & 0.04         & 0.47        & 1.00        & 0.36        & 0.12        & 4.50        & 1.50        & 6.95          \\
		& 100        & 0.12          & 0.61          & 0.05         & 0.56        & 1.22        & 0.45        & 0.41        & 4.29        & 1.71        & 9.72          \\
		& 400        & 0.11          & 0.58          & 0.05         & 0.54        & 1.18        & 0.44        & 0.46        & 4.06        & 1.94        & 26.41         \\
		& 750        & 0.16          & 0.64          & 0.05         & 0.57        & 1.26        & 0.46        & 0.42        & 3.88        & 2.12        & 67.47         \\ \midrule
		\multirow{4}{*}{\textbf{Conic}}                                                             & 10         & 0.30          & 0.41          & 0.05         & 0.40        & 0.84        & 0.46        & 0.07        & 4.69        & 1.31        & 0.39          \\
		& 100        & 0.33          & 0.43          & 0.05         & 0.42        & 0.89        & 0.51        & 0.36        & 4.56        & 1.44        & 0.49          \\
		& 400        & 0.36          & 0.45          & 0.06         & 0.45        & 0.95        & 0.55        & 2.25        & 4.39        & 1.61        & 1.54          \\
		& 750        & 0.37          & 0.45          & 0.06         & 0.44        & 0.94        & 0.56        & 7.01        & 4.25        & 1.75        & 4.48          \\ \midrule
		\multirow{4}{*}{\textbf{Bias Cor. L.S.}}                                                    & 10         & 0.32          & 0.46          & 0.05         & 0.45        & 0.93        & 0.51        & 0.03        & 4.29        & 1.71        & 0.06          \\
		& 100        & 0.37          & 0.49          & 0.06         & 0.48        & 1.02        & 0.60        & 0.01        & 3.97        & 2.03        & 0.09          \\
		& 400        & 0.42          & 0.51          & 0.07         & 0.51        & 1.07        & 0.65        & 0.00        & 3.92        & 2.08        & 0.19          \\
		& 750        & 0.43          & 0.51          & 0.07         & 0.51        & 1.07        & 0.67        & 0.05        & 3.75        & 2.25        & 0.58          \\ \midrule
		\multirow{4}{*}{\textbf{\begin{tabular}[c]{@{}c@{}}Lasso \\ (biased)\end{tabular}}}         & 10         & 0.57          & 0.59          & 0.07         & 0.59        & 1.06        & 0.72        & 3.15        & 5.95        & 0.05        & 0.13          \\
		& 100        & 0.59          & 0.60          & 0.08         & 0.60        & 1.19        & 0.75        & 42.56       & 5.84        & 0.16        & 0.75          \\
		& 400        & 0.61          & 0.63          & 0.08         & 0.62        & 1.36        & 0.78        & 56.98       & 5.68        & 0.32        & 0.84          \\
		& 750        & 0.61          & 0.63          & 0.08         & 0.63        & 1.44        & 0.78        & 67.57       & 5.65        & 0.35        & 1.62          \\ \midrule
		\multirow{4}{*}{\textbf{\begin{tabular}[c]{@{}c@{}}Lasso \\ (no meas error)\end{tabular}}}  & 10         & 0.18          & 0.24          & 0.03         & 0.23        & 0.49        & 0.28        & 2.90        & 5.99        & 0.01        & 0.12          \\
		& 100        & 0.21          & 0.25          & 0.03         & 0.25        & 0.52        & 0.32        & 6.70        & 5.43        & 0.57        & 0.55          \\
		& 400        & 0.24          & 0.28          & 0.04         & 0.27        & 0.59        & 0.35        & 11.18       & 5.29        & 0.71        & 0.73          \\
		& 750        & 0.24          & 0.28          & 0.04         & 0.28        & 0.60        & 0.36        & 8.91        & 5.28        & 0.72        & 1.40          \\ \bottomrule
	\end{tabular}
  \vspace{3mm}
\caption{\label{tab:adunsep400}  Simulation A: numerical results at $n=400$ under the additive EIV setup, with a known bias correction matrix and unseparated regression coefficients.}
\end{table}%


\begin{table}[htbp]
	\begin{tabular}{cccccccccccc}
		\toprule
		\multirow{2}{*}{$n=400$}                                                                    & \multicolumn{11}{c}{$\beta=(1,1,1,1,1,1,0,....,0)^T$}                                                                                                         \\ \cmidrule{2-12} 
		& \textbf{p} & \textbf{Bias} & \textbf{RMSE} & \textbf{PRb} & \textbf{L2} & \textbf{L1} & \textbf{PR} & \textbf{FP} & \textbf{TP} & \textbf{FN} & \textbf{Time} \\ \midrule
		\multirow{4}{*}{\textbf{SN-conic}}                                                          & 10         & 0.17          & 0.37          & 0.04         & 0.36        & 0.73        & 0.33        & 0.86        & 6.00        & 0.00        & 0.86          \\
		& 100        & 0.26          & 0.44          & 0.05         & 0.43        & 0.96        & 0.42        & 4.68        & 6.00        & 0.00        & 4.65          \\
		& 400        & 0.28          & 0.47          & 0.05         & 0.46        & 1.06        & 0.47        & 8.42        & 6.00        & 0.00        & 22.07         \\
		& 750        & 0.31          & 0.48          & 0.05         & 0.46        & 1.10        & 0.48        & 13.75       & 6.00        & 0.00        & 54.42         \\ \midrule
		\multirow{4}{*}{\textbf{\begin{tabular}[c]{@{}c@{}}SN-conic \\ (thresholded)\end{tabular}}} & 10         & 0.17          & 0.37          & 0.04         & 0.35        & 0.71        & 0.33        & 0.02        & 6.00        & 0.00        & 0.86          \\
		& 100        & 0.26          & 0.44          & 0.05         & 0.42        & 0.87        & 0.42        & 0.10        & 6.00        & 0.00        & 4.65          \\
		& 400        & 0.28          & 0.46          & 0.05         & 0.45        & 0.91        & 0.46        & 0.06        & 6.00        & 0.00        & 22.07         \\
		& 750        & 0.31          & 0.47          & 0.05         & 0.46        & 0.93        & 0.48        & 0.07        & 6.00        & 0.00        & 54.42         \\ \midrule
		\multirow{4}{*}{\textbf{\begin{tabular}[c]{@{}c@{}}SN-conic \\ (refitted V1)\end{tabular}}} & 10         & 0.13          & 0.28          & 0.03         & 0.27        & 0.53        & 0.26        & 0.55        & 6.00        & 0.00        & 0.77          \\
		& 100        & 0.18          & 0.33          & 0.04         & 0.32        & 0.68        & 0.34        & 3.37        & 6.00        & 0.00        & 4.28          \\
		& 400        & 0.20          & 0.37          & 0.04         & 0.35        & 0.79        & 0.37        & 7.07        & 6.00        & 0.00        & 22.05         \\
		& 750        & 0.22          & 0.36          & 0.04         & 0.35        & 0.82        & 0.38        & 12.86       & 6.00        & 0.00        & 56.02         \\ \midrule
		\multirow{4}{*}{\textbf{\begin{tabular}[c]{@{}c@{}}SN-conic \\ (refitted V2)\end{tabular}}} & 10         & 0.03          & 0.35          & 0.03         & 0.33        & 0.67        & 0.25        & 0.02        & 6.00        & 0.00        & 0.53          \\
		& 100        & 0.05          & 0.39          & 0.03         & 0.36        & 0.75        & 0.28        & 0.10        & 6.00        & 0.00        & 3.77          \\
		& 400        & 0.02          & 0.40          & 0.03         & 0.38        & 0.78        & 0.29        & 0.06        & 6.00        & 0.00        & 23.91         \\
		& 750        & 0.04          & 0.39          & 0.03         & 0.36        & 0.75        & 0.28        & 0.07        & 6.00        & 0.00        & 66.64         \\ \midrule
		\multirow{4}{*}{\textbf{Conic}}                                                             & 10         & 0.42          & 0.55          & 0.07         & 0.54        & 1.13        & 0.64        & 0.03        & 6.00        & 0.00        & 0.20          \\
		& 100        & 0.53          & 0.65          & 0.08         & 0.64        & 1.34        & 0.77        & 0.03        & 6.00        & 0.00        & 0.33          \\
		& 400        & 0.55          & 0.69          & 0.08         & 0.68        & 1.42        & 0.82        & 0.36        & 6.00        & 0.00        & 1.32          \\
		& 750        & 0.58          & 0.70          & 0.08         & 0.69        & 1.44        & 0.84        & 6.59        & 6.00        & 0.00        & 3.95          \\ \midrule
		\multirow{4}{*}{\textbf{Bias Cor. L.S.}}                                                    & 10         & 0.35          & 0.50          & 0.06         & 0.49        & 1.01        & 0.55        & 0.03        & 6.00        & 0.00        & 0.84          \\
		& 100        & 0.47          & 0.61          & 0.07         & 0.60        & 1.24        & 0.69        & 0.02        & 6.00        & 0.00        & 3.54          \\
		& 400        & 0.50          & 0.65          & 0.08         & 0.64        & 1.33        & 0.75        & 0.02        & 6.00        & 0.00        & 11.87         \\
		& 750        & 0.53          & 0.67          & 0.07         & 0.65        & 1.36        & 0.78        & 0.03        & 6.00        & 0.00        & 23.69         \\ \midrule
		\multirow{4}{*}{\textbf{\begin{tabular}[c]{@{}c@{}}Lasso \\ (biased)\end{tabular}}}         & 10         & 0.48          & 0.52          & 0.07         & 0.52        & 1.20        & 0.70        & 3.62        & 6.00        & 0.00        & 0.04          \\
		& 100        & 0.53          & 0.58          & 0.08         & 0.58        & 1.42        & 0.77        & 40.99       & 6.00        & 0.00        & 0.52          \\
		& 400        & 0.54          & 0.60          & 0.08         & 0.60        & 1.58        & 0.79        & 73.97       & 6.00        & 0.00        & 0.94          \\
		& 750        & 0.56          & 0.61          & 0.08         & 0.61        & 1.68        & 0.80        & 113.00      & 6.00        & 0.00        & 1.71          \\ \midrule
		\multirow{4}{*}{\textbf{\begin{tabular}[c]{@{}c@{}}Lasso \\ (no meas error)\end{tabular}}}  & 10         & 0.18          & 0.23          & 0.03         & 0.23        & 0.48        & 0.27        & 3.36        & 6.00        & 0.00        & 0.05          \\
		& 100        & 0.23          & 0.28          & 0.04         & 0.28        & 0.58        & 0.34        & 10.71       & 6.00        & 0.00        & 0.54          \\
		& 400        & 0.25          & 0.30          & 0.04         & 0.29        & 0.62        & 0.36        & 8.73        & 6.00        & 0.00        & 0.73          \\
		& 750        & 0.27          & 0.31          & 0.04         & 0.31        & 0.65        & 0.39        & 17.29       & 6.00        & 0.00        & 1.33          \\ \bottomrule
	\end{tabular}
\vspace{3mm}
\caption{\label{tab:missep400} Simulation B: numerical results at $n=400$ under the covariates missing at random setup, with an estimated bias correction and separated regression coefficients.}
\end{table}%

\begin{table}[htbp]
	\begin{tabular}{cccccccccccc}
	\toprule
		\multirow{2}{*}{$n=300$}                                                                    & \multicolumn{11}{c}{$\beta=(1,1/2,1/3,1/4,1/5,1/10,0,....,0)^T$}                                                                                              \\ \cmidrule{2-12} 
		& \textbf{p} & \textbf{Bias} & \textbf{RMSE} & \textbf{PRb} & \textbf{L2} & \textbf{L1} & \textbf{PR} & \textbf{FP} & \textbf{TP} & \textbf{FN} & \textbf{Time} \\ \midrule
		\multirow{4}{*}{\textbf{SN-conic}}                                                          & 10         & 0.14          & 0.31          & 0.03         & 0.29        & 0.62        & 0.27        & 0.59        & 5.48        & 0.52        & 0.99          \\
		& 100        & 0.19          & 0.32          & 0.04         & 0.31        & 0.71        & 0.32        & 4.09        & 5.37        & 0.63        & 3.68          \\
		& 400        & 0.22          & 0.34          & 0.03         & 0.33        & 0.79        & 0.35        & 9.24        & 5.20        & 0.80        & 16.36         \\
		& 750        & 0.24          & 0.36          & 0.04         & 0.36        & 0.87        & 0.38        & 6.50        & 5.06        & 0.94        & 112.47        \\ \midrule
		\multirow{4}{*}{\textbf{\begin{tabular}[c]{@{}c@{}}SN-conic \\ (thresholded)\end{tabular}}} & 10         & 0.15          & 0.32          & 0.03         & 0.31        & 0.63        & 0.28        & 0.02        & 4.70        & 1.30        & 0.99          \\
		& 100        & 0.21          & 0.34          & 0.04         & 0.33        & 0.69        & 0.35        & 0.07        & 4.29        & 1.71        & 3.68          \\
		& 400        & 0.24          & 0.36          & 0.04         & 0.35        & 0.74        & 0.38        & 0.01        & 4.08        & 1.92        & 16.36         \\
		& 750        & 0.27          & 0.39          & 0.04         & 0.38        & 0.81        & 0.42        & 0.04        & 3.81        & 2.19        & 112.47        \\ \midrule
		\multirow{4}{*}{\textbf{\begin{tabular}[c]{@{}c@{}}SN-conic \\ (refitted V1)\end{tabular}}} & 10         & 0.18          & 0.37          & 0.03         & 0.35        & 0.76        & 0.27        & 0.33        & 4.77        & 1.23        & 0.75          \\
		& 100        & 0.24          & 0.41          & 0.03         & 0.40        & 0.91        & 0.32        & 3.37        & 4.49        & 1.51        & 3.57          \\
		& 400        & 0.26          & 0.43          & 0.04         & 0.42        & 1.00        & 0.34        & 6.61        & 4.35        & 1.65        & 16.14         \\
		& 750        & 0.27          & 0.47          & 0.04         & 0.46        & 1.12        & 0.38        & 10.52       & 4.00        & 2.00        & 114.96        \\ \midrule
		\multirow{4}{*}{\textbf{\begin{tabular}[c]{@{}c@{}}SN-conic \\ (refitted V2)\end{tabular}}} & 10         & 0.05          & 0.34          & 0.03         & 0.32        & 0.68        & 0.24        & 0.02        & 4.70        & 1.30        & 0.48          \\
		& 100        & 0.10          & 0.35          & 0.03         & 0.34        & 0.71        & 0.27        & 0.07        & 4.29        & 1.71        & 3.40          \\
		& 400        & 0.10          & 0.37          & 0.03         & 0.36        & 0.76        & 0.28        & 0.01        & 4.08        & 1.92        & 17.75         \\
		& 750        & 0.16          & 0.42          & 0.03         & 0.41        & 0.87        & 0.31        & 0.04        & 3.81        & 2.19        & 50.57         \\ \midrule
		\multirow{4}{*}{\textbf{Conic}}                                                             & 10         & 0.30          & 0.38          & 0.05         & 0.38        & 0.79        & 0.46        & 0.03        & 4.80        & 1.20        & 0.19          \\
		& 100        & 0.35          & 0.41          & 0.06         & 0.41        & 0.88        & 0.53        & 0.16        & 4.53        & 1.47        & 0.30          \\
		& 400        & 0.37          & 0.43          & 0.06         & 0.43        & 0.91        & 0.56        & 0.35        & 4.37        & 1.63        & 1.28          \\
		& 750        & 0.39          & 0.45          & 0.06         & 0.45        & 0.96        & 0.58        & 5.33        & 4.31        & 1.69        & 3.98          \\ \midrule
		\multirow{4}{*}{\textbf{Bias Cor. L.S.}}                                                    & 10         & 0.38          & 0.45          & 0.06         & 0.45        & 0.94        & 0.57        & 0.00        & 4.24        & 1.76        & 0.75          \\
		& 100        & 0.46          & 0.51          & 0.08         & 0.50        & 1.09        & 0.68        & 0.00        & 3.82        & 2.18        & 2.87          \\
		& 400        & 0.49          & 0.55          & 0.07         & 0.54        & 1.17        & 0.74        & 0.00        & 3.68        & 2.32        & 9.40          \\
		& 750        & 0.51          & 0.57          & 0.07         & 0.56        & 1.21        & 0.76        & 0.00        & 3.47        & 2.53        & 18.50         \\ \midrule
		\multirow{4}{*}{\textbf{\begin{tabular}[c]{@{}c@{}}Lasso \\ (biased)\end{tabular}}}         & 10         & 0.38          & 0.41          & 0.05         & 0.40        & 0.77        & 0.49        & 2.50        & 5.96        & 0.04        & 0.04          \\
		& 100        & 0.41          & 0.44          & 0.06         & 0.43        & 0.86        & 0.55        & 13.86       & 5.53        & 0.47        & 0.26          \\
		& 400        & 0.42          & 0.45          & 0.06         & 0.45        & 0.90        & 0.57        & 21.26       & 5.41        & 0.59        & 0.56          \\
		& 750        & 0.43          & 0.46          & 0.06         & 0.46        & 0.95        & 0.58        & 23.07       & 5.20        & 0.80        & 0.90          \\ \midrule
		\multirow{4}{*}{\textbf{\begin{tabular}[c]{@{}c@{}}Lasso \\ (no meas error)\end{tabular}}}  & 10         & 0.20          & 0.26          & 0.03         & 0.25        & 0.54        & 0.31        & 2.43        & 5.96        & 0.04        & 0.05          \\
		& 100        & 0.25          & 0.29          & 0.04         & 0.29        & 0.62        & 0.37        & 5.30        & 5.41        & 0.59        & 0.32          \\
		& 400        & 0.26          & 0.31          & 0.04         & 0.31        & 0.65        & 0.40        & 11.45       & 5.05        & 0.95        & 0.46          \\
		& 750        & 0.27          & 0.32          & 0.04         & 0.32        & 0.69        & 0.41        & 8.00        & 5.03        & 0.97        & 1.11          \\ \bottomrule
	\end{tabular}
\vspace{3mm}
\caption{\label{tab:misunsep300} Simulation B: numerical results at $n=300$ under the covariates missing at random setup, with an estimated bias correction and unseparated regression coefficients.}
\end{table}%

\begin{table}[htbp]
	\begin{tabular}{cccccccccccc}
\toprule
		\multirow{2}{*}{$n=400$}                                                                    & \multicolumn{11}{c}{$\beta=(1,1/2,1/3,1/4,1/5,1/10,0,....,0)^T$}                                                                                              \\ \cmidrule{2-12} 
		& \textbf{p} & \textbf{Bias} & \textbf{RMSE} & \textbf{PRb} & \textbf{L2} & \textbf{L1} & \textbf{PR} & \textbf{FP} & \textbf{TP} & \textbf{FN} & \textbf{Time} \\ \midrule
		\multirow{4}{*}{\textbf{SN-conic}}                                                          & 10         & 0.13          & 0.27          & 0.03         & 0.26        & 0.55        & 0.24        & 0.75        & 5.74        & 0.26        & 0.96          \\
		& 100        & 0.16          & 0.29          & 0.03         & 0.28        & 0.65        & 0.27        & 4.23        & 5.49        & 0.51        & 4.61          \\
		& 400        & 0.18          & 0.31          & 0.03         & 0.30        & 0.72        & 0.30        & 8.74        & 5.41        & 0.59        & 22.34         \\
		& 750        & 0.20          & 0.31          & 0.03         & 0.30        & 0.74        & 0.32        & 10.03       & 5.36        & 0.64        & 52.46         \\ \midrule
		\multirow{4}{*}{\textbf{\begin{tabular}[c]{@{}c@{}}SN-conic \\ (thresholded)\end{tabular}}} & 10         & 0.14          & 0.28          & 0.03         & 0.27        & 0.56        & 0.25        & 0.07        & 4.99        & 1.01        & 0.96          \\
		& 100        & 0.17          & 0.30          & 0.03         & 0.29        & 0.61        & 0.28        & 0.10        & 4.69        & 1.31        & 4.61          \\
		& 400        & 0.20          & 0.32          & 0.03         & 0.31        & 0.64        & 0.33        & 0.05        & 4.38        & 1.62        & 22.34         \\
		& 750        & 0.22          & 0.32          & 0.04         & 0.32        & 0.66        & 0.35        & 0.07        & 4.35        & 1.65        & 52.46         \\ \midrule
		\multirow{4}{*}{\textbf{\begin{tabular}[c]{@{}c@{}}SN-conic \\ (refitted V1)\end{tabular}}} & 10         & 0.17          & 0.33          & 0.02         & 0.32        & 0.68        & 0.25        & 0.58        & 5.08        & 0.92        & 0.77          \\
		& 100        & 0.21          & 0.37          & 0.03         & 0.36        & 0.84        & 0.28        & 3.36        & 4.78        & 1.22        & 4.22          \\
		& 400        & 0.22          & 0.41          & 0.03         & 0.39        & 0.95        & 0.32        & 8.01        & 4.57        & 1.43        & 22.31         \\
		& 750        & 0.24          & 0.40          & 0.03         & 0.39        & 0.96        & 0.32        & 8.97        & 4.61        & 1.39        & 53.33         \\ \midrule
		\multirow{4}{*}{\textbf{\begin{tabular}[c]{@{}c@{}}SN-conic \\ (refitted V2)\end{tabular}}} & 10         & 0.04          & 0.29          & 0.02         & 0.28        & 0.59        & 0.21        & 0.07        & 4.99        & 1.01        & 0.57          \\
		& 100        & 0.06          & 0.30          & 0.02         & 0.29        & 0.60        & 0.22        & 0.10        & 4.69        & 1.31        & 3.66          \\
		& 400        & 0.08          & 0.35          & 0.03         & 0.33        & 0.70        & 0.25        & 0.05        & 4.38        & 1.62        & 23.35         \\
		& 750        & 0.09          & 0.33          & 0.03         & 0.32        & 0.67        & 0.25        & 0.07        & 4.35        & 1.65        & 64.62         \\ \midrule
		\multirow{4}{*}{\textbf{Conic}}                                                             & 10         & 0.27          & 0.34          & 0.04         & 0.33        & 0.69        & 0.41        & 0.01        & 5.10        & 0.90        & 0.22          \\
		& 100        & 0.30          & 0.37          & 0.05         & 0.36        & 0.77        & 0.46        & 0.00        & 4.75        & 1.25        & 0.30          \\
		& 400        & 0.33          & 0.39          & 0.05         & 0.38        & 0.82        & 0.50        & 0.82        & 4.58        & 1.42        & 1.26          \\
		& 750        & 0.34          & 0.40          & 0.05         & 0.39        & 0.84        & 0.52        & 0.22        & 4.72        & 1.28        & 4.51          \\ \midrule
		\multirow{4}{*}{\textbf{Bias Cor. L.S.}}                                                    & 10         & 0.33          & 0.40          & 0.05         & 0.39        & 0.83        & 0.50        & 0.00        & 4.61        & 1.39        & 0.82          \\
		& 100        & 0.39          & 0.45          & 0.06         & 0.44        & 0.95        & 0.58        & 0.00        & 4.15        & 1.85        & 3.03          \\
		& 400        & 0.43          & 0.48          & 0.06         & 0.48        & 1.04        & 0.64        & 0.00        & 4.01        & 1.99        & 9.84          \\
		& 750        & 0.45          & 0.50          & 0.07         & 0.50        & 1.08        & 0.68        & 0.00        & 4.03        & 1.97        & 19.32         \\ \midrule
		\multirow{4}{*}{\textbf{\begin{tabular}[c]{@{}c@{}}Lasso \\ (biased)\end{tabular}}}         & 10         & 0.35          & 0.38          & 0.05         & 0.37        & 0.70        & 0.46        & 3.30        & 6.00        & 0.00        & 0.04          \\
		& 100        & 0.38          & 0.40          & 0.05         & 0.40        & 0.77        & 0.49        & 13.32       & 5.69        & 0.31        & 0.45          \\
		& 400        & 0.40          & 0.42          & 0.05         & 0.42        & 0.82        & 0.52        & 22.69       & 5.47        & 0.53        & 0.80          \\
		& 750        & 0.41          & 0.43          & 0.05         & 0.43        & 0.86        & 0.54        & 27.03       & 5.44        & 0.56        & 1.42          \\ \midrule
		\multirow{4}{*}{\textbf{\begin{tabular}[c]{@{}c@{}}Lasso \\ (no meas error)\end{tabular}}}  & 10         & 0.17          & 0.24          & 0.03         & 0.23        & 0.48        & 0.27        & 2.78        & 5.97        & 0.03        & 0.06          \\
		& 100        & 0.20          & 0.25          & 0.03         & 0.25        & 0.53        & 0.31        & 9.17        & 5.48        & 0.52        & 0.51          \\
		& 400        & 0.23          & 0.27          & 0.03         & 0.27        & 0.57        & 0.34        & 16.46       & 5.37        & 0.63        & 0.67          \\
		& 750        & 0.24          & 0.28          & 0.04         & 0.28        & 0.60        & 0.37        & 11.55       & 5.20        & 0.80        & 1.19          \\ \bottomrule
	\end{tabular}
\vspace{3mm}
\caption{\label{tab:misunsep400} Simulation B: numerical results at $n=400$ under the covariates missing at random setup, with an estimated bias correction and unseparated regression coefficients.}
\end{table}%


\begin{table}[htbp]
	\begin{tabular}{clllllllllll}
\toprule
		\multirow{2}{*}{$n=400$}                                                                & \multicolumn{11}{c}{$\beta=c(1,1,1,1,1,1,0,...,0)^T$}                                                                                                                                                                                                                                                                                                                                     \\ \cmidrule{2-12} 
		& \multicolumn{1}{c}{\textbf{p}} & \multicolumn{1}{c}{\textbf{Bias}} & \multicolumn{1}{c}{\textbf{RMSE}} & \multicolumn{1}{c}{\textbf{PRb}} & \multicolumn{1}{c}{\textbf{L2}} & \multicolumn{1}{c}{\textbf{L1}} & \multicolumn{1}{c}{\textbf{PR}} & \multicolumn{1}{c}{\textbf{FP}} & \multicolumn{1}{c}{\textbf{TP}} & \multicolumn{1}{c}{\textbf{FN}} & \multicolumn{1}{c}{\textbf{Time}} \\ \midrule
		\multirow{4}{*}{\textbf{Bias Cor. L.S.}}                                                & 10                             & 0.17                              & 0.94                              & 0.07                             & 0.86                            & 1.96                            & 0.67                            & 1.59                            & 5.92                            & 0.08                            & 3.82                              \\
		& 100                            & 0.31                              & 1.10                              & 0.08                             & 1.03                            & 2.70                            & 0.83                            & 6.95                            & 5.87                            & 0.13                            & 19.26                             \\
		& 400                            & 0.40                              & 1.10                              & 0.09                             & 1.03                            & 2.76                            & 0.89                            & 9.27                            & 5.79                            & 0.21                            & 71.63                             \\
		& 750                            & 0.50                              & 1.28                              & 0.10                             & 1.18                            & 3.19                            & 1.01                            & 10.28                           & 5.69                            & 0.31                            & 202.80                            \\ \midrule
		\multirow{4}{*}{\textbf{Conic}}                                                         & 10                             & 0.24                              & 0.72                              & 0.06                             & 0.69                            & 1.51                            & 0.59                            & 1.33                            & 6.00                            & 0.00                            & 10.65                             \\
		& 100                            & 0.33                              & 0.79                              & 0.07                             & 0.76                            & 2.09                            & 0.68                            & 8.91                            & 6.00                            & 0.00                            & 48.40                             \\
		& 400                            & 0.47                              & 0.78                              & 0.07                             & 0.77                            & 2.15                            & 0.75                            & 15.02                           & 6.00                            & 0.00                            & 280.43                            \\
		& 750                            & 0.53                              & 0.82                              & 0.08                             & 0.80                            & 2.24                            & 0.80                            & 17.58                           & 5.99                            & 0.01                            & 890.59                            \\ \midrule
		\multirow{4}{*}{\textbf{\begin{tabular}[c]{@{}c@{}}Lasso \\ (biased)\end{tabular}}}     & 10                             & 1.08                              & 1.10                              & 0.16                             & 1.10                            & 2.66                            & 1.59                            & 0.79                            & 6.00                            & 0.00                            & 0.14                              \\
		& 100                            & 1.13                              & 1.15                              & 0.17                             & 1.15                            & 2.78                            & 1.68                            & 1.09                            & 6.00                            & 0.00                            & 0.35                              \\
		& 400                            & 1.19                              & 1.21                              & 0.16                             & 1.20                            & 2.92                            & 1.76                            & 1.37                            & 6.00                            & 0.00                            & 1.69                              \\
		& 750                            & 1.21                              & 1.23                              & 0.17                             & 1.23                            & 2.97                            & 1.80                            & 1.16                            & 6.00                            & 0.00                            & 1.51                              \\ \midrule
		\multirow{4}{*}{\textbf{\begin{tabular}[c]{@{}c@{}}Lasso\\ (no meas err)\end{tabular}}} & 10                             & 0.18                              & 0.25                              & 0.03                             & 0.24                            & 0.51                            & 0.28                            & 0.05                            & 6.00                            & 0.00                            & 0.13                              \\
		& 100                            & 0.20                              & 0.26                              & 0.03                             & 0.25                            & 0.53                            & 0.30                            & 0.33                            & 6.00                            & 0.00                            & 0.32                              \\
		& 400                            & 0.23                              & 0.28                              & 0.03                             & 0.28                            & 0.58                            & 0.34                            & 0.22                            & 6.00                            & 0.00                            & 0.87                              \\
		& 750                            & 0.25                              & 0.30                              & 0.03                             & 0.29                            & 0.61                            & 0.36                            & 0.43                            & 6.00                            & 0.00                            & 0.95                              \\ \bottomrule
	\end{tabular}
\vspace{3mm}
\caption{\label{tab:CVsepn400} Simulation C: numerical results at $n=400$ for comparative methods tuned via cross validation, under the additive EIV setup with separated regression coefficients.}
\end{table}%

\begin{table}[htbp]
	\begin{tabular}{clllllllllll}
		\toprule
		\multirow{2}{*}{$n=300$}                                                                & \multicolumn{11}{c}{$\beta=(1,1/2,1/3,1/4,1/5,1/10,0,...,0)^T$}                                                                                                                                                                                                                                                                                                                           \\ \cmidrule{2-12} 
		& \multicolumn{1}{c}{\textbf{p}} & \multicolumn{1}{c}{\textbf{Bias}} & \multicolumn{1}{c}{\textbf{RMSE}} & \multicolumn{1}{c}{\textbf{PRb}} & \multicolumn{1}{c}{\textbf{L2}} & \multicolumn{1}{c}{\textbf{L1}} & \multicolumn{1}{c}{\textbf{PR}} & \multicolumn{1}{c}{\textbf{FP}} & \multicolumn{1}{c}{\textbf{TP}} & \multicolumn{1}{c}{\textbf{FN}} & \multicolumn{1}{c}{\textbf{Time}} \\ \midrule
		\multirow{4}{*}{\textbf{Bias Cor. L.S.}}                                                & 10                             & 0.12                              & 0.55                              & 0.04                             & 0.52                            & 1.17                            & 0.42                            & 1.44                            & 4.89                            & 1.11                            & 2.58                              \\
		& 100                            & 0.21                              & 0.58                              & 0.05                             & 0.56                            & 1.51                            & 0.51                            & 6.59                            & 4.12                            & 1.88                            & 12.56                             \\
		& 400                            & 0.23                              & 0.62                              & 0.05                             & 0.6                             & 1.76                            & 0.55                            & 10.38                           & 3.81                            & 2.19                            & 53.67                             \\
		& 750                            & 0.27                              & 0.65                              & 0.06                             & 0.62                            & 1.82                            & 0.59                            & 11.46                           & 3.7                             & 2.3                             & 134.51                            \\ \midrule
		\multirow{4}{*}{\textbf{Conic}}                                                         & 10                             & 0.18                              & 0.46                              & 0.04                             & 0.43                            & 0.92                            & 0.38                            & 0.96                            & 5.21                            & 0.79                            & 10.8                              \\
		& 100                            & 0.24                              & 0.46                              & 0.04                             & 0.45                            & 1.25                            & 0.44                            & 8.6                             & 5                               & 1                               & 51.51                             \\
		& 400                            & 0.28                              & 0.46                              & 0.05                             & 0.45                            & 1.24                            & 0.48                            & 11.32                           & 4.66                            & 1.34                            & 329.57                            \\
		& 750                            & 0.33                              & 0.49                              & 0.05                             & 0.47                            & 1.28                            & 0.52                            & 13.44                           & 4.7                             & 1.3                             & 893.66                            \\ \midrule
		\multirow{4}{*}{\textbf{\begin{tabular}[c]{@{}c@{}}Lasso \\ (biased)\end{tabular}}}     & 10                             & 0.67                              & 0.69                              & 0.09                             & 0.68                            & 1.29                            & 0.87                            & 0.2                             & 4.98                            & 1.02                            & 0.14                              \\
		& 100                            & 0.7                               & 0.71                              & 0.09                             & 0.71                            & 1.4                             & 0.93                            & 0.62                            & 4.68                            & 1.32                            & 0.32                              \\
		& 400                            & 0.73                              & 0.74                              & 0.1                              & 0.74                            & 1.48                            & 0.98                            & 0.61                            & 4.48                            & 1.52                            & 1.18                              \\
		& 750                            & 0.75                              & 0.77                              & 0.1                              & 0.76                            & 1.53                            & 1.01                            & 0.64                            & 4.39                            & 1.61                            & 1.2                               \\ \midrule
		\multirow{4}{*}{\textbf{\begin{tabular}[c]{@{}c@{}}Lasso\\ (no meas err)\end{tabular}}} & 10                             & 0.19                              & 0.25                              & 0.03                             & 0.25                            & 0.52                            & 0.29                            & 0.1                             & 5.35                            & 0.65                            & 0.12                              \\
		& 100                            & 0.22                              & 0.28                              & 0.03                             & 0.27                            & 0.58                            & 0.34                            & 0.37                            & 5.15                            & 0.85                            & 0.3                               \\
		& 400                            & 0.26                              & 0.31                              & 0.04                             & 0.3                             & 0.66                            & 0.39                            & 0.66                            & 5.1                             & 0.9                             & 0.87                              \\
		& 750                            & 0.27                              & 0.32                              & 0.04                             & 0.32                            & 0.68                            & 0.4                             & 0.39                            & 4.98                            & 1.02                            & 1.01                              \\ \bottomrule
	\end{tabular}
\vspace{3mm}
\caption{\label{tab:CVunsepn300} Simulation C: numerical results at $n=300$ for comparative methods tuned via cross validation, under the additive EIV setup with unseparated regression coefficients.}
\end{table}%

\begin{table}[htbp]
	\begin{tabular}{clllllllllll}
		\toprule
		\multirow{2}{*}{$n=400$}                                                                & \multicolumn{11}{c}{$\beta=(1,1/2,1/3,1/4,1/5,1/10,0,...,0)^T$}                                                                                                                                                                                                                                                                                                                           \\ \cmidrule{2-12} 
		& \multicolumn{1}{c}{\textbf{p}} & \multicolumn{1}{c}{\textbf{Bias}} & \multicolumn{1}{c}{\textbf{RMSE}} & \multicolumn{1}{c}{\textbf{PRb}} & \multicolumn{1}{c}{\textbf{L2}} & \multicolumn{1}{c}{\textbf{L1}} & \multicolumn{1}{c}{\textbf{PR}} & \multicolumn{1}{c}{\textbf{FP}} & \multicolumn{1}{c}{\textbf{TP}} & \multicolumn{1}{c}{\textbf{FN}} & \multicolumn{1}{c}{\textbf{Time}} \\ \midrule
		\multirow{4}{*}{\textbf{Bias Cor. L.S.}}                                                & 10                             & 0.10                              & 0.46                              & 0.04                             & 0.44                            & 1.02                            & 0.36                            & 1.48                            & 5.05                            & 0.95                            & 2.33                              \\
		& 100                            & 0.15                              & 0.52                              & 0.05                             & 0.51                            & 1.43                            & 0.44                            & 8.04                            & 4.58                            & 1.42                            & 12.56                             \\
		& 400                            & 0.22                              & 0.54                              & 0.05                             & 0.52                            & 1.54                            & 0.49                            & 10.18                           & 4.27                            & 1.73                            & 50.26                             \\
		& 750                            & 0.22                              & 0.54                              & 0.05                             & 0.53                            & 1.58                            & 0.50                            & 12.49                           & 4.17                            & 1.83                            & 128.48                            \\ \midrule
		\multirow{4}{*}{\textbf{Conic}}                                                         & 10                             & 0.14                              & 0.38                              & 0.03                             & 0.36                            & 0.79                            & 0.32                            & 1.00                            & 5.36                            & 0.64                            & 10.58                             \\
		& 100                            & 0.21                              & 0.41                              & 0.04                             & 0.39                            & 1.06                            & 0.39                            & 6.48                            & 5.04                            & 0.96                            & 53.08                             \\
		& 400                            & 0.24                              & 0.42                              & 0.04                             & 0.41                            & 1.21                            & 0.43                            & 15.59                           & 5.06                            & 0.94                            & 322.41                            \\
		& 750                            & 0.28                              & 0.42                              & 0.04                             & 0.41                            & 1.17                            & 0.44                            & 18.42                           & 4.92                            & 1.08                            & 919.31                            \\ \midrule
		\multirow{4}{*}{\textbf{\begin{tabular}[c]{@{}c@{}}Lasso \\ (biased)\end{tabular}}}     & 10                             & 0.64                              & 0.65                              & 0.09                             & 0.65                            & 1.23                            & 0.83                            & 0.21                            & 5.20                            & 0.80                            & 0.14                              \\
		& 100                            & 0.68                              & 0.69                              & 0.09                             & 0.69                            & 1.34                            & 0.89                            & 0.29                            & 5.04                            & 0.96                            & 0.35                              \\
		& 400                            & 0.69                              & 0.70                              & 0.09                             & 0.70                            & 1.38                            & 0.91                            & 0.59                            & 4.90                            & 1.10                            & 2.20                              \\
		& 750                            & 0.72                              & 0.72                              & 0.09                             & 0.72                            & 1.43                            & 0.95                            & 0.44                            & 4.74                            & 1.26                            & 1.68                              \\ \midrule
		\multirow{4}{*}{\textbf{\begin{tabular}[c]{@{}c@{}}Lasso\\ (no meas err)\end{tabular}}} & 10                             & 0.18                              & 0.23                              & 0.03                             & 0.23                            & 0.48                            & 0.27                            & 0.04                            & 5.52                            & 0.48                            & 0.13                              \\
		& 100                            & 0.19                              & 0.25                              & 0.03                             & 0.25                            & 0.54                            & 0.30                            & 0.37                            & 5.31                            & 0.69                            & 0.32                              \\
		& 400                            & 0.22                              & 0.27                              & 0.04                             & 0.26                            & 0.57                            & 0.34                            & 0.77                            & 5.21                            & 0.79                            & 1.56                              \\
		& 750                            & 0.23                              & 0.28                              & 0.03                             & 0.27                            & 0.59                            & 0.35                            & 0.48                            & 5.20                            & 0.80                            & 1.29                              \\ \bottomrule
	\end{tabular}
\vspace{3mm}
\caption{\label{tab:CVunsepn400} Simulation C: numerical results at $n=400$ for comparative methods tuned via cross validation, under the additive EIV setup with unseparated regression coefficients.}
\end{table}%


\end{document}